\let\lctau\tau 
\newcommand{\bb}[1]{ \textbf{#1}} 
\newcommand{\R}[1]{\mathbb{R}^{#1}}
\newcommand{\T}{\scalerel*{\lctau}{X}}
\newcommand{\F}{f}
\newcommand{\nsimp}{\emph{nSimplex}\xspace}
\newcommand{\zen}{\emph{Zen}\xspace}
\newcommand{\lwb}{\emph{Lwb}\xspace}
\newcommand{\upb}{\emph{Upb}\xspace}
\newtheorem{theorem}{Theorem}[section]
\newtheorem{lemma}[theorem]{Lemma}
\title{\nsimp \zen: A Novel Dimensionality Reduction for Euclidean and Hilbert Spaces}
\author{Richard Connor\\ 
University of St Andrews\\
North Haugh, St Andrews,\\
Fife, Scotland, UK\\
\texttt{rchc@st-andrews.ac.uk} \And
Lucia Vadicamo\\ 
ISTI-CNR\\
Via G. Moruzzi 1, \\
Pisa, Italy\\
\texttt{lucia.vadicamo@isti.cnr.it} 
}
\begin{document}
\maketitle

\begin{abstract}
Dimensionality reduction techniques map values from a high dimensional space to one with a lower dimension. The result is  a space which requires less physical memory and has a faster distance calculation. These techniques are widely used where required properties of  the reduced-dimension space give an acceptable accuracy with respect to the original space.

Many such transforms have been described. They have been classified in two main groups: \emph{linear} and \emph{topological}. Linear methods such as  Principal  Component Analysis (PCA) and Random Projection (RP)  define matrix-based transforms into a lower dimension of Euclidean space. Topological methods such as Multidimensional Scaling (MDS) attempt to preserve higher-level aspects such as the nearest-neighbour relation, and some may be applied to non-Euclidean  spaces.

Here, we introduce \nsimp \zen, a novel topological method of reducing dimensionality. Like MDS, it relies only upon  pairwise distances measured in the original space. The use of distances, rather than coordinates, allows the technique to be applied to both Euclidean and other Hilbert spaces, including those governed by Cosine, Jensen-Shannon and Quadratic Form distances.

We show that in almost all cases, due to geometric properties of high-dimensional spaces, our new technique gives better properties than others, especially with reduction to very low dimensions.
\end{abstract}

\keywords{Dimensionality Reduction \and  Metric Spaces \and Euclidean Space \and  Hilbert space \and  Metric Embedding\and  n-point property \and Information retrieval}

\section{Introduction}

The requirement to work with large, high-dimensional metric spaces is a long-standing and increasingly important requirement across many domains of computation. Typically, each element of such a space represents some real-world artefact, and the distance between elements gives a proxy (dis-)similarity function over the real-world domain.

As technology progresses, both the dimension of spaces and the size of collections tend to increase. These factors increasingly imply that apparently simple calculations, for example to find a few of the most similar elements within a large set, may become intractable. While sub-linear search times may in some cases be achieved using metric indexing techniques \cite{zezula2006similarity, Chavez2001similarity}, these also become ineffective in high dimensions and it has been shown that even approximate search complexity often degrades to linear as dimensionality increases \cite{rubenstein_hardness}.

In the domain of image search for example, a collection of a million images would no longer be considered large, and using modern techniques the dimension of the space used to represent them is likely to be several thousand. In this case an exhaustive search to find the  image in the collection which is most similar to another  will require upward  of $10^{9}$ numeric comparisons, with $10$GBytes of data passing through main memory. If we consider that the YFCC benchmark image set \cite{thomee2016yfcc100m} contains $10^{10}$ images, and the size of a representation derived from GoogleNet \cite{Szegedy2015_CVPR} has around $200,000$ dimensions, clearly any such computation is well beyond the scope of most computational contexts. 

The purpose of dimensionality reduction is to reduce the dimension of the space, maintaining as far as possible the relative distances. This not only implies that less space is required in memory for the dataset, but also gives a faster distance computation.

There are a number of well-known approaches to dimensionality reduction. The Johnson-Lindenstrauss lemma shows a surprisingly small  lower bound on the degree of distortion which is necessarily introduced when reducing a high-dimensional Euclidean space to a lower dimension. This bound may be achievable by using a Random Projection (RP) into the lower dimension, and if the domain is perfectly uniformly distributed then this may be the best technique that is achievable. 

However much data is implicitly non-uniform, for example data deriving from a Convolutional Neural Network (CNN) typically lies on some complex manifold within the representational space \cite{7974879}. In these cases non-random transforms, which  take advantage of  non-uniformities  within the data, can achieve better results. 

Dimensionality reduction mechanisms are often classified in two main groups. Linear transforms, such as Principal Component Analysis (PCA) \cite{pearson1901:PCA,hotelling1933analysis}, apply only to Euclidean spaces and perform a matrix-based transform derived from properties of the whole space, with the primary goal of  preserving  pairwise distances as far as possible. Topological transforms, such as Multidimensional Scaling (MDS) \cite{cox2008multidimensional}, use the individual pairwise distances of the original space in an attempt to preserve higher-level relationships, such as the pairwise ordering of distances or the nearest-neighbour relation. Topological methods may be applicable to non-vector spaces, or even non-metric spaces with an appropriate dissimilarity function.  


In this article we introduce a new  mechanism, which has interestingly different properties, and which can be applied to Euclidean and, more generally, to any space that is isometrically embeddable in Hilbert space. It is primarily a distance-preservation mechanism, and uses properties of high-dimensional Euclidean geometry which have not  been previously applied in this domain to preserve distances in the reduced-dimension space.


\subsection{Outline of some Dimensionality Reduction mechanisms}
%

We refer to the novel Dimensionality Reduction (DR) transform introduced here as  \nsimp \zen. In this section, by way of motivation, we  briefly contrast it with the three other best-known DR transform techniques. More detail of the other transforms is given in Section \ref{sec_intro_subsec_dim_red}, and a full definition of \nsimp \zen in Section \ref{sec_nsimp_projection_main}. 

In all cases we consider transforming a given space 
of $n$ elements,  with a dimensionality%
\footnote{the term \emph{dimensionality} can be usefully applied to Hilbert spaces, see note in Section 
\ref{sec_intro_subsec_dim_red}
} 
of $m$, into a Euclidean space with $n$ elements and $k$ dimensions, where $k < m$.

\begin{description}

\item [Random Projection (RP)] is applicable only to an $m$-dimensional Euclidean space, where $m > k$. A randomised, orthonormal matrix of $m \times k$ dimensions is generated, and the $n \times m$ matrix representing the data is transformed by matrix multiplication to an $n \times k$ matrix representing the reduced dimensional space.

\item [Principal Component Analysis (PCA)] is also applicable only to an $m$-dimensional Euclidean space. A matrix of $m \times m$ dimensions representing the principal components is generated from the original space, and  the $n \times m$ matrix representing the data is transformed by matrix multiplication with the most significant $k$ columns of this matrix to yield the reduced dimensional space.

\item [Multidimensional Scaling (MDS)] is applicable to any metric or semi-metric space, and considers only the distances among the $n$ elements of the input space. An $n \times n$ (upper triangular) matrix of pairwise distances is taken as input, and the output is an $n \times k$ matrix, representing a Euclidean space which preserves pairwise distances as far as possible. MDS does not scale well, but an adaptation can be applied to large Euclidean spaces. A variant of MDS, Landmark MDS, can be applied to general metric spaces. 

\item [\nsimp \zen] is applicable to any metric space which is isometrically embeddable in Hilbert space%
\footnote{This includes any Euclidean space, and also metric spaces governed by appropriate variants of Cosine, Jensen-Shannon, Quadratic Form, and Triangular distances  (see \cite{Connor2016:HilbertExclusion} and Appendix \ref{appendix_hilbert_metrics} for details.)}.
It takes as input a reference set $\mathcal{R}$ comprising $k$ elements of the input space. A simplex in $(k-1)$ Euclidean dimensions is constructed using the pairwise distances measured within this set. With reference to this structure, each further element of the input space is then mapped to a $k$-dimensional Cartesian coordinate according to its distances from each element of $\mathcal{R}$.

\end{description}

\nsimp in its simplest form is a mapping from $(\mathcal{U},d)$ to $(\mathbb{R}^k,\ell_2)$, where $k$ is the size of the reference set $\mathcal{R}$. This can be used in its own right as a DR technique. There are however two further functions which can be applied to the $\mathbb{R}^k$ space resulting from the mapping: \zen and \emph{Upb}, which are defined in Section \ref{sec_nsimp_projection_main}. There are thus three different spaces which may be formed as a result of the projection. \nsimp \zen refers to the mapping from $(\mathcal{U},d)$ to  $(\mathbb{R}^k,\zen)$ and is the main object of our attention.

In contrast with the other techniques, calculation of the \nsimp transform does not require any  operations over matrices. Instead, it relies upon higher-level geometric properties of the 
original metric space, which are reflected within the generated $k$-dimensional space.

\subsection{Relation to previous work}

The ideas underlying  \nsimp \zen have emerged after several years of research in the intersection between similarity search and distance geometry, in particular with respect to work done in the early 20th Century by, among others, Blumenthal, Hilbert, Menger, and Wilson, which we summarise in Section \ref{sec_intro_subsec_general}.

In \cite{Connor2016:HilbertExclusion} we showed how the four-point property possessed by some metric spaces%
\footnote{see Section \ref{subsec_embeddings} for a description of this and the $n$-point property} could be used to improve many metric search techniques, and in \cite{Connor2016:IS_Supermetric,Connor2016:SISAP_Supermetric} it was shown that this observation could be applied in practice to a number of state-of-the-art metric indexing techniques. We extended this work in \cite{Connor2017:nsimplex} after making the observation that the major four-point spaces we had identified (Jensen-Shannon, Quadratic Form, Cosine, Triangular) also possessed the $n$-point property, and that the distance lower-bound we had identified for four-point spaces in the 2D projection applied more generally in higher dimensions. During this work we also identified the algorithm, referred to as \nsimp \textit{Projection}, for constructing the simplex as given here in Appendix \ref{appendix_simplex_construction}, and the proof given in Appendix \ref{appendix:ProofCorrectness} that the \lwb function given in Section \ref{sec_nsimp_projection_main} is a lower bound of the distance in the domain of the \nsimp transform%
\footnote{In fact these were not included in the published version due to space limitations, but were given in an adjunct \emph{arXiv} publication \cite{connor2017:arxiv_nSimplex}}.

We exploited the \nsimp Projection also in the context of similarity search as an intermediate step to transform the original data into permutations and binary strings to be employed in efficient approximate metric indexes \cite{vadicamo2023induced,vadicamo2019splx,vadicamo2019metric}.

In \cite{connor2019modelling} we noted the application of the \zen function introduced in Section \ref{sec_nsimp_projection_main} is a better estimator of true distance than the \lwb function described in \cite{Connor2017:nsimplex} in the domain of string similarity functions. Moreover, in \cite{Vadicamo2021_IS_Re-ranking} the \zen function has been exploited on other domains (image features and word embeddings) to effectively refine candidate results obtained using a permutation-based k-NN search without accessing the original data.
We subsequently observed the very non-uniform pattern of angles measured within the simplexes thus formed from high-dimensional spaces. This angular distribution was separately examined and published in \cite{connor2020sampled}.

The main contribution of this article is the full exposition and analysis of the \zen function of the \nsimp construction as a general dimensionality reduction technique for Euclidean and other Hilbert spaces, all of which content is entirely novel.

\subsection{Paper Outline}
The rest of the article is structured as follows. Section \ref{sec_related} gives an overview of the necessary mathematical background required to understand the mechanism and motivation of the \nsimp \zen transform. Section \ref{sec_intro_subsec_dim_red} gives background on dimensionality reduction, and outlines the four mechanisms with which we  compare \nsimp \zen\!.  Section  \ref{sec_nsimp_projection_main} gives a detailed account of the \nsimp transform, and its three related functions: \zen, \lwb and \upb\!.  Section \ref{sec_experiments} gives experimental results comparing the quality of the reductions given by \nsimp \zen with the transforms introduced in Section \ref{sec_intro_subsec_dim_red}, and Section \ref{sec_performance} compares the run-time performance of the various mechanisms. Finally, Section \ref{sec_discussion} gives a discussion of the \nsimp \zen mechanism and its results, and concludes with some possible future work. Table \ref{tab_notations} summarises notations used throughout this work.

\begin{table}[tbp]
\small
\caption{Table of notations used}
\begin{center}
\begin{tabular}{p{0.15\linewidth}p{0.8\linewidth}}
\hline
\textbf{Notation}		&\textbf{Meaning}\\
\hline
$n$&The cardinality of both domain and range of a DR transform\\
$m$&The  dimension of a Euclidean space which is the domain of a DR transform\\
$k$&The  dimension of a Euclidean space which is the co-domain of a DR transform\\
$(\mathcal{U},d)$&A  metric space with domain $\mathcal{U}$ and distance function $d$\\
$(\mathcal{S},d)$&A (typically large) finite subspace of $\mathcal{U}$, $\mathcal{S} \subset \mathcal{U}$\\
$\mathcal{R}$&A (typically small) set of reference objects drawn from $\mathcal{S}$ \\
$u, u_i$&Individual objects drawn from $\mathcal{U}$\\
$s, s_i$&Individual objects drawn from $\mathcal{S}$\\
$r, r_i$&Individual reference objects drawn from $\mathcal{R}$\\
$\T$& A DR transform mapping some $(\mathcal{U},d)$ to $(\mathcal{U}',\zeta)$\\
$\sigma$	& An \nsimp transform based on $k$ reference objects, $\sigma : \mathcal{U} \rightarrow \mathbb{R}^k$\\
$\ell_2$ & The Euclidean distance function\\
$(\mathbb{R}^m,\ell_2)$ & A Euclidean space of $m$ dimensions\\
$\delta_{ij}$& The distance $d(i,j)$ where $i,j \in (\mathcal{S},d)$\\
$\zeta_{ij}$& The  corresponding distance  $\zeta(i,j)$ for corresponding  $i,j \in (\mathcal{S}',\zeta)$, where $\mathcal{S}' = \T(\mathcal{S})$\\

\hline
\end{tabular}
\end{center}
\label{tab_notations}
\end{table}%

\section{Background: metric spaces and their properties}
\label{sec_related}

The novel \nsimp mechanism is described fully in Section \ref{sec_nsimp_projection_main}. Before it can be understood in detail a significant amount of mathematical background is required, and provided in this section. Section \ref{sec_intro_subsec_general} gives some mathematical preliminaries, and Section \ref{sec_intro_subsec_related_angles} gives some more specific background in high-dimensional geometry. Section \ref{sec_intro_subsec_dim_red} gives a general background to dimensionality reduction.

\subsection{Metric spaces, embeddings and simplexes}
\label{sec_intro_subsec_general}
Our work relies on the ability to construct a \emph{simplex} in a $k$-dimensional Euclidean space, whose edge lengths correspond to the distances among any $(k+1)$ objects selected from any metric space which is $(k+1)$-isometrically embeddable in a Hilbert space. These underlying concepts are briefly explained in this subsection.

\subsubsection{Metric and Semimetric Spaces}

Let $(\mathcal{U},d)$ be a pair comprising a domain of objects $\mathcal{U}$ and a numeric dissimilarity function $d : \mathcal{U} \times \mathcal{U} \rightarrow \mathbb{R}$. In general the more similar the objects $x,y \in \mathcal{U}$, the smaller the value of $d(x,y)$. For a space to be semimetric, it requires $d$ to be positive or zero, with $d(x,y) = 0$ if and only if $x = y$, and symmetric, i.e. $d(x,y) = d(y,x)$. A (proper) metric space, governed by a (proper) distance function, also possesses the triangle inequality property, i.e. $d(x,z) \le d(x,y) + d(y,z)$. For the rest of this article, we use the terms \emph{distance} and \emph{metric space} to refer to functions and spaces with these properties.

\subsubsection{Metric Spaces and Isometric Embeddings}
\label{subsec_embeddings}
For metric spaces $(\mathcal{U},d)$ and $(\mathcal{U}',\zeta)$ we say that $\mathcal{U}$ 
is \textit{isometrically embeddable} in $\mathcal{U}'$ if there exists a function $\F: \mathcal{U} \rightarrow \mathcal{U}'$ such that $\zeta(\F(x),\F(y)) = d(x,y)$ for all $x, y \in \mathcal{U}$.  A \textit{finite isometric embedding} is defined when such a function exists for a finite subset of $\mathcal{U}$. A finite isometric embedding may be generalised to any fixed size of subset; for example, we state that $(\mathcal{U},d)$ is finitely $n$-embeddable in $(\mathcal{U}',\zeta)$ where such a function exists for any subset of $n$ values selected from $\mathcal{U}$.

These concepts give rise to an alternative definition of a metric space. Normally, a metric space is defined as a semimetric space which has the triangle inequality property. Alternatively, a metric space may be defined as a semimetric space which is finitely 3-embeddable in 2D Euclidean space, this being an equivalent property.

Finite isometric embeddings
are summarised by Blumenthal \cite{blumenthal1933note}.
He defines the  \emph{four-point property} to refer to any space that is finitely  4-embeddable in $3$-dimensional Euclidean space.
Wilson \cite{wilson1932relation} shows various properties of such spaces, and Blumenthal points out that results given by Wilson, when combined with work by Menger \cite{Menger1928}, generalise to show that some
spaces with the four-point property also have the $n$-point property: that is, for any $n$, they are finitely $n$-embeddable in $(n-1)$-dimensional Euclidean space.
In a  later work, Blumenthal \cite{blumenthal1953} shows that any space which is isometrically embeddable in a Hilbert space has the $n$-point property. This is a generalisation of the better-known result that any $n$ points from a Euclidean space of any dimension may be isometrically embedded in $(n-1)$-dimensional Euclidean space, and is the main abstract result we rely upon here.

\subsubsection{Hilbert Spaces}
A Hilbert space is a real or complex inner product space that is also a complete metric space with respect to the distance function induced by the inner product.
Hilbert spaces possess inner product and distance functions with properties analogous to,  but not necessarily the same as, the dot product and distance functions of a Euclidean space.

A Hilbert space is a generalisation of a Euclidean space, allowing the study of vector spaces which do not necessarily have finite coordinate systems. Hilbert spaces have always been important in the study of abstract geometry. 
All Euclidean vector spaces are Hilbert spaces; in more recent years however, many further useful spaces have been identified as being isometrically embeddable in Hilbert space. These include spaces governed by the Jensen-Shannon, Quadratic Form, Triangular, and Cosine\footnote{in one particular form, see Appendix \ref{appendix_hilbert_metrics} for details of this and other metrics.} distances. These spaces do not have  Euclidean coordinates, and so cannot be manipulated via matrix arithmetic, but can be used as the domain of the transform we propose here as they inherit the $n$-point property of a Hilbert space.

%




\subsubsection{Construction of a simplex}
A simplex is the generalisation of a triangle or a tetrahedron in arbitrary dimensions of Euclidean space. 
In one dimension, a simplex is a line segment. In two dimensions it is a triangle, while in three dimensions it is a tetrahedron.
In general, a point $\bb v_1$ forms a 0-simplex, and the $n$-simplex of vertices $\bb v_1,\dots,\bb v_{n+1}$ is given by the union of the simplex formed from $\bb v_1,\dots, \bb v_n$ with the line segments joining $\bb v_{n+1}$ to all vertices of that simplex. 

The property that a Hilbert space $(\mathcal{U},d)$ is finitely $(n+1)$-embeddable in $n$-dimensional Euclidean space directly implies that, for any $(n+1)$ objects in $\mathcal{U}$, it is possible to construct a simplex with $(n+1)$ vertices in $\R{n}$, where each vertex corresponds to an object in $\mathcal{U}$, and the edges joining all pairs of adjacent vertices correspond with the distances between the corresponding objects in $\mathcal{U}$. 

In Appendix \ref{appendix_simplex_construction} we show an algorithm for determining Cartesian coordinates for the vertices of a  simplex, given only the distances between all pairs of points. The algorithm is  inductive, at each stage allowing the apex of an $n$-dimensional simplex to be determined given the coordinates of an $(n-1)$-dimensional simplex, and the distances from the new apex to each vertex in the existing simplex. 


The outcome of this algorithm represents a simplex in $n$-dimensional space as a lower triangular $n+1$ by $n$ matrix representing the Cartesian coordinates of each vertex.
For example, 
the rows of the following matrix represent the coordinates $v_{i,j}$ of four vertices $\bb v_1,\dots, \bb v_4$ of a tetrahedron in 3D space:
\begin{equation}
\begin{bmatrix}
0		&	0		&	0		\\
v_{2,1}	&	0		&	0		\\
v_{3,1}	&	v_{3,2}	&	0	\\
v_{4,1}	&	v_{4,2}	&	v_{4,3}
\end{bmatrix}
\end{equation}

This matrix is derived from four objects $o_1,\dots, o_4$ in the Hilbert space, and the distances $d(o_i,o_j)$ are the same as the distances 
$\ell_2(\bb{v}_i,\bb{v}_j)$ 
where $\bb{v}_i$ and $\bb{v}_j$ 
are vectors given by the $i$-th and $j$-th rows of the matrix, respectively.

For all such sets of objects, the invariant that $v_{i,j} = 0$ whenever $j \ge i$ can be  maintained without loss of generality. For any simplex constructed, this can be achieved by rotation and translation within the Euclidean space while maintaining the distances among all the vertices. Furthermore, if we restrict $v_{i,j} \ge 0$ whenever $j = i-1$ then in each row this component represents the \emph{altitude} of the 
point $\bb v_i$ with respect to a base simplex formed by $\{\bb v_1, \dots, \bb v_{i-1}\}$, which is  represented by the matrix derived by selecting elements above and to the left of the entry $v_{i,j}$.

Finally, we note that as long as the entry $v_{i,j}$ 
is non-zero, i.e. represents a non-zero altitude above the base simplex defined by rows $1$ to $i-1$, then the set of vectors defined by the rows forms a basis for the $n$-dimensional space in which it is constructed. In this way, the process of forming the simplex gives an interesting comparison to the Gram-Schmidt method for forming a basis in a Euclidean space, but the simplex formation method does not require access to a coordinate space defining the original metric space, and can thus be applied to any metric space which is isometrically embeddable in a Hilbert space.




\subsection{Angles in High-Dimensional Metric Spaces}
\label{sec_intro_subsec_related_angles}

In Section \ref{sec_zen_function} we rely upon a property on the distribution of angles in high-dimensional Euclidean space that is described in this section. While the property itself is relatively straightforward, its derivation from high-dimensional Euclidean geometry is less so, and we therefore give a short justification.

In the context of a uniformly distributed Euclidean space of $n$ dimensions, we are interested in the distribution of angles formed by a hyperplane ${H}$ and a object $\bb c$ on a hypersphere\footnote{A hyperspheres in $\R{n}$ is a $n-1$-sphere. Note the possibly confusing conventions in the naming of  $n$-spheres and $n$-balls: in general, the \emph{surface} of an $n$-ball is denoted as an $(n-1)$-sphere. For example, the volume contained by a sphere in 3D space is a $3$-ball, whereas its surface is a $2$-sphere.} centred on a point $\bb b\in {H}$. Without loss of generality this distribution can  be measured as the angle between three points $\bb a$, $\bb b$, and $\bb c$, where objects $\bb a$ and $\bb b$ are fixed in $H$, and $\bb c$ is sampled within a fixed radius $r$ from $\bb b$. The considered situation can be easily depicted in 2D (i.e., considering the plane through these three points) in Figure \ref{fig_high_dim_hypersphere_1}.
Given this arrangement, we wish to understand the distribution of the angle $\theta \in [0,\pi]$, that is the angle formed by the three points.

In a high-dimensional vector space it is generally known that two randomly selected vectors are very likely to be close to orthogonal \cite[Chapter 2]{chapter_blum_hopcroft_kannan_2020,blum_hopcroft_kannan_2020}, and it is therefore no surprise that the value of $\theta$ is likely to be close to $\pi/ 2$. In fact this is the only result required for the understanding of Section \ref{sec_zen_function}, however further explanation is reasonably required.







\begin{figure}[tbp]
\centering
 \begin{subfigure}[b]{0.28\textwidth}
 \begin{tikzpicture}[scale=0.28]
\draw[dotted] (5,0) arc (0:180:5);
\draw [dotted](0,0) -- (-5,0);
\draw (7,0) circle (2pt);
\draw (7,-0.6)  node {$\bb a$};
\draw (0,0) circle (2pt);
\draw (0,-0.6) node{$ \bb b$};
\draw [->, dotted](0,0) -- (-4,3);
\draw (-2,2.5) node[anchor=east] {$r$};

\draw (1.8,1) node {$\theta$};
\draw  [->] (3,0) arc  (0:53.1:3);

\draw[thick]  (3,4) -- (0,0)-- (7,0);

\draw  (3,4) circle (2pt) node[anchor=south] {$\bb c$};
\end{tikzpicture}
\caption{}\label{fig_high_dim_hypersphere_1}
\end{subfigure}%
 \begin{subfigure}[b]{0.28\textwidth}
    \begin{tikzpicture}[scale=0.28]
    \shade (0,0) circle (5);
    \draw[dotted] (0,0) circle (5);
    
    \filldraw[fill=green!20,draw=green!50!black] (3,-4) -- (3,4) arc (53.13:-53.13:5)--(3,-4);
    \draw (7,0) circle (2pt);
    \draw (7,-0.6)  node {$\bb a$};
    \draw (0,0) circle (2pt);
    \draw (0,-0.6) node{$ \bb b$};
    
    \draw (1.5,1) node {$\theta$};
    \draw  [->] (2.5,0) arc  (0:53.1:2.5);
    \draw (1.5,-1) node {$\theta$};
    \draw  [->] (2.5,0) arc  (0:-53.1:2.5);
    
    \draw[thick]  (3,4) -- (0,0)-- (7,0);
    \draw[thick]  (3,-4) -- (0,0);
    
    \draw  (3,4) circle (2pt) node[anchor=south] {$\bb c$};

    \draw (-5,-6) circle (1pt) node[anchor=south] {$-r$};
    \draw (5,-6) circle (1pt) node[anchor=south] {$r$};
    \draw[<->] (-5,-6) -- (5,-6);
    \filldraw[fill=red!20,draw=red!50!black] (3,-6) circle (2pt);
    \filldraw[fill=red!20,draw=red!50!black] (3,-5.5) node{\color{red}{$t$}};
    \end{tikzpicture}
    \caption{}\label{fig_area}
\end{subfigure}%
 \begin{subfigure}[b]{0.44\textwidth}
\includegraphics[width=\textwidth]{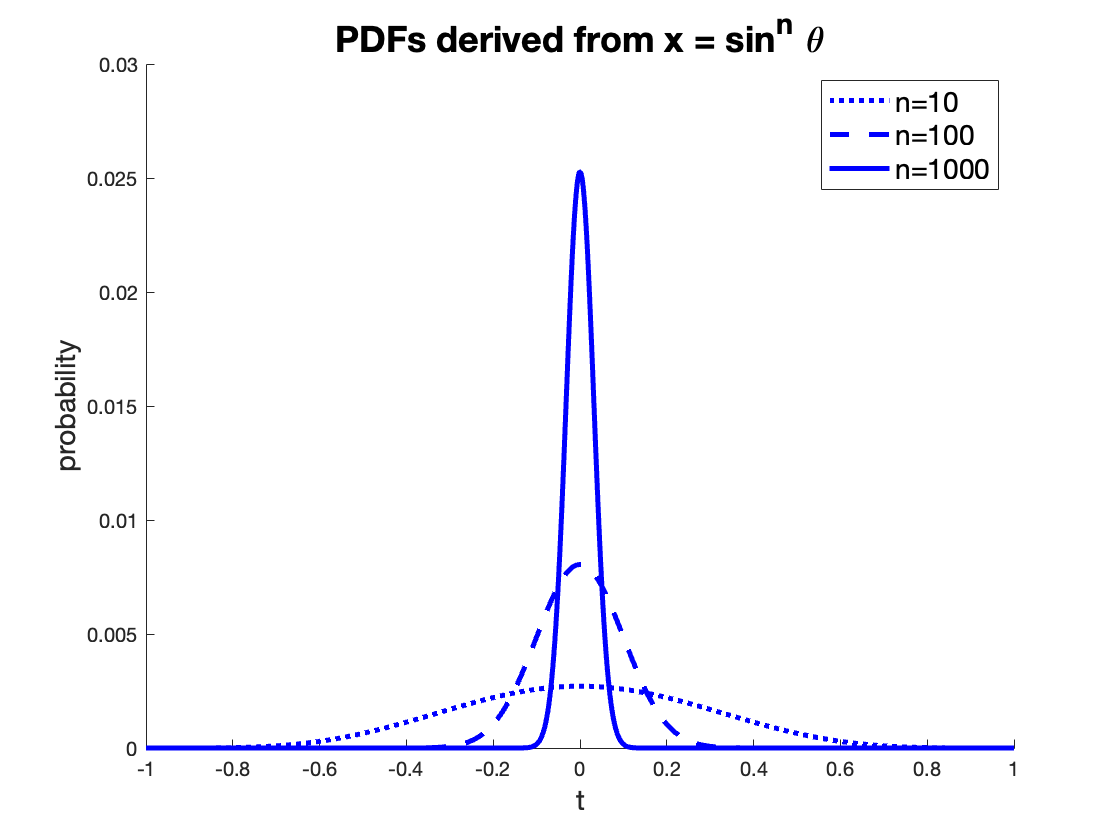}  
\caption{}\label{fig_high_dim_hypersphere_2}
\end{subfigure}%
%

\caption{
In $n$ dimensions, for fixed $\bb a$ and $\bb b$ within a given plane, $\bb c$ is sampled from within the same plane at a fixed radius $r$ from $\bb b$.
As the dimensionality of the space increases, the probability of $\theta$ being close to  $\pi/ 2$ increases rapidly: the right-hand plot shows probability density functions for various dimensions as $t=r\cos \theta$ varies between $-r$ and $r$.
}

\end{figure}

In \cite{connor2020sampled} we quantified the distribution of this angle and we observed that as the dimension of the space $n$ increases so too does the probability of $\theta$ being close to $\pi/ 2$. In particular we observed that
\begin{itemize}
    \item the total volume $V_n$ of a $n$-ball with radius $r$ is $V_n=r^n U_{n-1} \int_0^\pi \sin^n \psi d\psi$, where $U_{n-1}$ is the volume of a unit (n-1)-ball;
    \item  the volume $V_n(\theta)$ of the portion of the $n$-ball delimited by the hyperplane through $\bb c$ and orthogonal to $H$ (i.e. the portion delimited by $\theta$  and denoted by the green-shaded area in Figure \ref{fig_area}) is proportional to $r^n \int_0^\theta \sin^n \psi d\psi$.  
\end{itemize} 

Actually, here we are interested in the ``surface'' of the portion of the $n$-ball delimited by $\theta$, rather than its volume. However such regions are strongly related geometrically; in fact in general the surface of an $n$-ball, i.e. an $(n-1)$-sphere, has a volume in $(n-1)$ dimensions. The angular distribution of volume in an $n$-sphere is identical to that in an $(n-1)$-ball \cite{voelker2017}, therefore the PDF for the distribution of points on an $(n-1)$-sphere is given by a normalisation of the same formula.

While the formula to computes the volume is difficult to quantify, requiring the integration of high powers of the sine function, it has also been observed that for high values of $n$ the function is numerically almost indistinguishable from the normal distribution function given by setting $\mu = \pi / 2$ and $\sigma = 1 / \sqrt{(n)}$ \cite{cai2013distributions,stack_overflow}, for which integral values are highly accessible. Figure \ref{fig_high_dim_hypersphere_2} quantifies this volume as PDFs for various dimensions $n$. It can be observed that as $n$ increases the distribution of $t$ concentrates around the mean value $0$, so the distribution of the angle $\theta$ concentrates around $\pi/2$.




This theoretical model can be verified by experiment in unbounded Euclidean spaces. As usual, there are various factors in real-world spaces (in particular boundedness and non-uniformity) which affect the observed distribution of angles, but the theoretical effect is still highly visible; a deeper analysis  is given in \cite{connor2020sampled}. In Section \ref{sec_experiments} we analyse Euclidean spaces drawn from examples in 100 to 4,096 dimensions, in which  the effect is very evident.

\section{Related work: dimensionality reduction}
\label{sec_intro_subsec_dim_red}
\begin{table}[tb]
\footnotesize
\caption{Dimensionality reduction algorithms used in this research.} \label{table:dr}
\begin{center}
\begin{tabular}{p{0.07\linewidth}p{0.3\linewidth}p{0.26\linewidth}p{0.26\linewidth}}
\hline
 \textbf{Name}  &  \textbf{Goals} & \textbf{Input} & \textbf{Output}\\
\hline
PCA  & Project data onto a lower-dimensional subspace while maximising the retained variance & Euclidean space: original data (or its covariance matrix) & Projection matrix (principal components) and projected data  \\
RP  & Project data onto a lower-dimensional subspace with small distortion in pairwise distances  &  Euclidean space: original data & Random projection matrix and projected data\\ 
MDS  & Represent data in a lower-dimensional space, preserving the rank ordering of pairwise distances & Any semi-metric space: distance matrix (all pairwise distances) & Transformed data (coordinates) \\
LMDS  & As MDS, but using a subset of landmarks to reduce computational cost & Any semi-metric space: pairwise distances between data objects and landmark points, and pairwise distances among all landmark points & MDS on landmarks and transformed data (coordinates) \\
nSimplex Zen & Represent data in a lower-dimensional space with small distortion in pairwise distances & Any Hilbert-embeddable space: pairwise distances between data objects and reference objects, and pairwise distances among all reference objects & Matrix representing a base simplex and the transformed data (coordinates) \\
\hline
\end{tabular}
\end{center}
\label{tab_abbreviations}
\end{table}%

In simple terms, \textit{dimensionality reduction} refers to the transformation of a set of values in dimension $m$ to a set of values in dimension $k$, where $k < m$. Associated with such a transformation is a controlled loss of information within the reduced-dimension space. 

The notion of {dimensionality} in general metric spaces is itself complex, and indeed the establishment of a generally agreed definition of \textit{intrinsic dimensionality} is an ongoing research issue. In Euclidean spaces dimensionality can be understood as the minimum value of $m$ for which an isometric embedding in an $(\mathbb{R}^{m},\ell_2)$  space can be defined. For example, a set of points on a plane can be defined in a 3-dimensional space, but the intrinsic dimensionality of this set is 2. Mechanisms such as PCA and MDS can be used to detect the value of intrinsic dimensionality in a Euclidean space.
In general, metric spaces cannot be isometrically embedded in Euclidean space and so this definition cannot be used. The concept of dimensionality is important in non-vector spaces, and can be defined  upon properties such as the distribution of distances among objects of the space, as shown for example in \cite{Chavez2001similarity}. In this article we  refer to ``dimensionality reduction'' transforms which map from 
metric spaces to Euclidean spaces of a lower dimensionality, in these cases we implicitly rely upon such definitions of dimensionality.

Dimensionality reduction techniques are usually classified into two groups: linear and topological. Linear mechanisms are concerned with preserving the accuracy of simple distances, while topological mechanisms are concerned with the preservation of  higher-level properties, for example  relative rather than absolute distances. 
In all cases, the main purpose is to reduce the cost of distance comparisons, by reducing both the size of the data and the cost of the distance measurement.
In most contexts, we are interested in applying such transforms to a finite metric space $(\mathcal{S},d)$ which is a (typically very large) subset of an infinite space $(\mathcal{U},d)$. For example in metric search, the task is to find, from the subset $\mathcal{S}$, values which are similar to a query value $q \in \mathcal{U}$, where typically $q \notin \mathcal{S}$.
This causes us to reconsider the general notion of dimensionality reduction. Rather than a transform 
which maps some finite space $(\mathcal{S},d)$ to another space $(\mathcal{S}',\zeta)$,  where  $(\mathcal{S},d)$ is a finite subset of some infinite space $(\mathcal{U},d)$, we require a more general transform $\T: (\mathcal{U},d) \to(\mathcal{U}',\zeta)$ 
which can map individual elements of $\mathcal{U}$ to $\mathcal{U}'$. This is important for two reasons. First, any analytic technique that analyses the entire finite domain $\mathcal{S}$ in order to map to a new one will be intractable for very large finite domains. Second, such a technique would not allow the mapping of a query value $q$ where $q \in \mathcal{U}$ but  $q \notin \mathcal{S}$.
We make this distinction as classical definitions of techniques such as 
MDS assume analysis of the entire finite domain. 

In this section, as well as giving outline descriptions of these techniques, we also show how they can be applied in this more general context. There are many dimensionality reduction techniques, many of which have been developed for specific contexts. The Principal Component Analysis (PCA) \cite{pearson1901:PCA,hotelling1933analysis} stands out as the most widely used dimensionality reduction algorithm, finding applications in data compression, data processing, feature extraction, and data visualisation, among others. Random Projection (RP) \cite{dasgupta2013experiments}, Multidimensional Scaling (MDS) \cite{cox2008multidimensional,kruskal1964multidimensional,LMDS}, and their variants aim at representing the data in a lower-dimensional space in such a way that the distances between points in that space approximate the pairwise dissimilarities in the original space. Isomap \cite{tenenbaum2000global} focuses on preserving geodesic distances and local structures, often used in data analysis and machine learning. Additionally, other techniques such as t-Distributed Stochastic Neighbour Embedding (t-SNE) \cite{van2008visualizing} and Uniform Manifold Approximation and Projection (UMAP) \cite{mcinnes2018umap}, are commonly employed for exploratory data analysis and visualisation.   Given our focus on distance-preserving mechanisms and not on data visualisation, here we introduce three general mechanisms most relevant to our context, namely PCA, RP, MDS,  which we believe are the  mechanisms in most common use. Table \ref{table:dr} provides an overview of these mechanisms, alongside our proposed nSimplex Zen.



\subsection{Random Projection}
\label{sec_related_sub_random_projection}
According to the Johnson-Lindenstrauss Flattening Lemma (see e.g. \cite[page 358]{matousek2013book}), a random projection can be used to transform a finite set of $n$ Euclidean vectors into a $k$-dimensional Euclidean space ($k<n$) with a ``small'' distortion. Specifically the Lemma asserts that for any $n$ points  of the space $(\mathbb{R}^m,\ell_2)$, and for every $0<\epsilon < 1$, there is a mapping into $(\mathbb{R}^k,\ell_2)$ that preserves all the pairwise distances within a factor of $1 + \epsilon$, where $k = O(\epsilon^{-2}\log n)$. Note that this lemma depends on the size, and not the dimensionality, of the domain.

The low dimensional projection anticipated by the Johnson-Lindenstrauss lemma is particularly simple to implement. Specifically, for a Euclidean space represented by an $n \times m$ matrix, a suitable  transform into $k$ dimensions can be achieved through a randomly generated $m \times k$ orthonormal matrix.
Practically, there are even better ways of achieving the projection. Achlioptas \cite{achlioptas_rp} shows that equally good results can usually be achieved with a much cheaper transform, by creating a $m \times k$ pseudo-orthogonal matrix with, for example, the randomised strategy:
\begin{equation}
R_{i,j} = \sqrt{3} \times
\begin{cases}
+1 \quad &\text{with probability 1/6}\\
0 \quad &\text{with probability 2/3}\\
-1 \quad &\text{with probability 1/6}
\end{cases}
\end{equation}
This greatly improves the efficiency of the projection, as it introduces many zeros into the projection matrix, and allows integer arithmetic to be used instead of floating point. Further, the strategy has been shown to give better outcomes in some circumstances than a truly orthonormal matrix. We use this strategy in our comparative experiments in Section \ref{sec_experiments}.

It is self-evident that, for Euclidean data which is  uniformly distributed,  a random projection is no worse than any other linear technique. However, much real-world data is not uniformly distributed, often in ways that are difficult to predict or analyse, in which case other techniques typically perform better.

\subsection{Principal  Component Analysis}
\label{sec_related_sub_pca}

PCA \cite{pearson1901:PCA,hotelling1933analysis} is probably the best known and most widely used unsupervised dimensionality reduction technique, and has been used also for feature extraction and data visualisation.  
The main idea is to find a linear transformation of $m$-dimensional vectors to $k$-dimensional vectors ($k < m$) that best preserves the variance of the input data. Specifically, PCA determines the principal  components of the data, which are those  directions within the vector space showing maximum variance. The first such direction is found, and represented by a unit vector; then, the second direction is found within the $(m-1)$-dimensional subspace orthogonal to this unit vector, and so on until a set of $m$ orthonormal vectors is established. These vectors are represented in an $m$-dimensional square matrix whose columns correspond to the unit vectors established by this process (i.e., the so-called principal components).

If the intrinsic dimensionality of the data is less than $m$, then the last steps of the process will discover a variance of 0 in all directions and the unit vectors derived become arbitrary.

The principal  components can be computed by solving a maximisation problem. However, it has been shown that the principal  components are the eigenvectors of the covariance matrix of the centred input data. Thus typically they are computed by using spectral analysis via Singular Value Decomposition of the data rather then solving the optimisation problem, which is more expensive.

The eigenvalues $\lambda_1, \dots, \lambda_m$ give the variances of their respective principal components. Moreover, the ratio
\begin{equation}
\label{eqn_pca_variance}
    \frac{\sum_{i=1}^k \lambda_i}{\sum_{j=1}^m \lambda_j}
\end{equation}
represents the proportion of the total variance in the original data set accounted for by the first $k$ principal components.

The dimensionality reduction transform itself is achieved by multiplication of the matrix representing the domain by the first $k$ columns of the principal component matrix. If the value given by Equation \ref{eqn_pca_variance} is large, then the loss of accuracy in distances measured in the projected space will be correspondingly small.

While PCA is defined as the orthogonal projection of the data onto a lower dimensional linear subspace, such that the variance of the projected data is maximised, there also exists an equivalent definition of PCA that gives rise to the same algorithm. In the latter, the PCA is defined as the linear projection that minimises the average projection cost, defined as the mean squared distance between the data points and their projections \cite{Pearson1901}. This property  implies that PCA is the best strategy for dimensionality reduction in a Euclidean space, where the goal is to minimise the introduced inaccuracy of arbitrary distance measurements.

For a very large data set, the cost of calculating the principal components using the entire set is likely to be intractable. However this cost may be avoided by using  a representative sample of the data to generate the principal components. As the projection to construct the reduced-dimension set comprises multiplication of the $n \times m$ data matrix by the $m \times k$ principal component matrix, principal components derived from a representative subset can be used to transform the remainder of the data. 

\subsection{Multidimensional Scaling}
\label{sec_related_sub_mds}
MDS \cite{cox2008multidimensional} is a technique which analyses the pairwise distances within a finite semimetric space $(\mathcal{S},d)$ and, given a target dimension $k$, generates a $k$-dimensional Euclidean space which preserves topological features of these distances as far as possible. There are two main variants of MDS, so-called ``classic'' (metric) and non-metric.  Here we consider the ``non-metric'' version as this may be applied to spaces not governed by the Euclidean distance and can thus be  compared with the \nsimp technique.

MDS iteratively constructs  Euclidean vectors, using a gradient descent technique, in order to minimise a stress formula. In the non-metric variant, this is typically  Kruskal's (see Section \ref{sec_global_stress}) \emph{stress1} definition:
\begin{equation}
S_K = \sqrt{\frac{\sum_{i<j}(\zeta_{ij} - d^*_{ij})^2}{\sum_{i<j} \zeta_{ij}^2}}
\end{equation}
In this formula, $i$ and $j$ are indices over the data objects in $\mathcal{S}$, $\zeta_{ij}=\zeta(\T(s_i),\T(s_j)$ is the Euclidean distance measured in the reduced space, and $d^*_{ij}=d^*(s_i,s_j)$ is a function generated by an isotonic regression over the true distances $d(s_i,s_j)$ as a function of the reduced distances $\zeta_{ij}$. Stress is therefore affected not  by the absolute difference between distances in the two spaces, but instead according to the relative ordering of distances between them: if this is preserved, then the measured stress will be lower.

MDS is an expensive ($\mathcal{O}(n^4)$) algorithm to compute, significantly limiting the size of data to which it can be applied.  It has the further disadvantage that as the analysis is over a given finite set of distances among objects, it cannot therefore  produce a transform which may be applied to other non-manifest elements of the same domain. This would imply that a representative sample  cannot be used to construct a transform which can subsequently be applied to a very large domain, and also that a query from the same universal domain cannot be subsequently transformed into the generated $k$-dimensional space.

In fact it may be possible to generate such a transform when MDS is applied to an $m$-dimensional Euclidean space, using Procrustes analysis and a pseudo-inverse matrix operator, as follows. First, a representative set of $l$ objects is selected from the $n$ objects of the domain, and the $l \times l$ distance matrix is generated. MDS takes this as input and an $l \times k$ matrix is produced to represent the $k$-dimensional space representing those objects. Procrustes analysis can then be used to produce a transform from this space back to a best fit within the original $m$-dimensional space. Although in general this transform is represented by a non-square matrix, and therefore is not guaranteed to have an inverse, a pseudo-inverse technique can be used to successfully construct the inverse transform. This can therefore be subsequently applied to other samples from the same original space. We use this technique in Section \ref{sec_experiments} to compare MDS as a dimensionality reduction technique for large Euclidean spaces.

\subsection{Landmark Multidimensional Scaling}
\label{sec_related_sub_lmds}
Landmark MDS (LMDS) \cite{LMDS} is a technique which allows MDS to be used for the generation of a general transform over metric spaces, using a triangulation technique. A representative set of landmark values $\mathcal{L}$ is selected from the domain $\mathcal{U}$, and classical MDS is applied in order to transform the (typically) non-Euclidean space $\mathcal{L}$ to a $k$-dimensional Euclidean space, minimising the stress as above. 

As already noted, classical MDS does not generate a transform function which can be applied to data not included in the manifest space whose distances are used to construct this transform. Instead, LDMS allows the addition of further elements of the domain  to the transformed space using only the distances calculated to each element of $\mathcal{L}$. A triangulation approach is then used to place each subsequent element into the reduced-dimension space with minimal stress on this set of distances.

In this manner, LMDS extends classical MDS in such a way that it can be extended for use over very large data sets and non-manifest queries in non-Euclidean spaces. We use this technique in Section  \ref{sec_sub_exp_jsd} to compare LMDS with our \nsimp \zen transform over spaces not governed by the Euclidean distance.

\section{The \nsimp Projection}
\label{sec_nsimp_projection_main}
The \nsimp transform can be applied to Hilbert spaces in general, and relies on the Hilbert property that any $k+1$ values can be isometrically embedded in an $k$-dimensional Euclidean space.
In outline, the transform from a Hilbert space $\mathcal{U}$ to $\R{k}$ 
is defined as follows:
\begin{enumerate}
\item $k$ values $r_1, r_2, \dots r_k$ are first selected from $\mathcal{U}$ to form a reference set $\mathcal{R}$. (Typically, $\mathcal{R}$ will be selected from  a large finite subset $\mathcal{S}$ of $\mathcal{U}$.)
\item All pairwise distances among the values  in $\mathcal{R}$ are calculated, and used to construct a \textit{base simplex} $\Sigma$ in a Euclidean space of $k-1$ dimensions, where each vertex $\bb v_i$ in $\Sigma$ corresponds to one value $r_i$ in $\mathcal{R}$ with  $\ell_2(\bb v_i, \bb v_j)=d(r_i,r_j)$ for all $i,j=1, \dots, k$.
\item For any further value $u \in \mathcal{U}$, the distances between $u$ and all values in $\mathcal{R}$ are calculated.
\item These distances are used to construct a point $\bb v_u$ in $k$-dimensional Euclidean space, where $\bb v_u$ is the apex of a simplex formed by its addition to the base simplex $\Sigma$, such that $\ell_2(\bb v_u, \bb v_i)=d(u,r_i)$ for all $i=1, \dots, k$
\end{enumerate}

These apex points form the target of the transform. The process therefore gives a mapping $\sigma$ from the general Hilbert space $\mathcal{U}$ to a $k$-dimensional Euclidean space, where $\sigma(u)=\bb v_u,\, \forall \, u \in \mathcal{U}$.
MATLAB and Java code for computing the nSimplex projection is available on GitHub at \url{https://bitbucket.org/richardconnor/metric-space-framework}\footnote{
Matlab implementation: \url{https://bitbucket.org/richardconnor/metric-space-framework/src/master/MATLAB/NSimplexProjection/}; 
Java implementation: \url{https://bitbucket.org/richardconnor/metric-space-framework/src/master/src/n_point_surrogate/}
}.
For Python users, the corresponding code implementation can be accessed at \url{https://github.com/vadicamo/nSimplex}.

As a simple example, Figure \ref{fig_simple_projection}  shows a projection from a 3D Euclidean space $\mathcal{S}$ to a 2D space. In this case the reference set $\mathcal{R}$ comprises two values $r_1$ and $r_2$  selected randomly from $\mathcal{S}$, and the base simplex formed is a line segment. This is arbitrarily embedded in the 2D target space with vertex coordinates $\bb v_1=[0,0]$ and $\bb v_2=[d(r_1,r_2),0]$. Every other value $u$ from $\mathcal{S}$ is then placed into the 2D projection according to its distances from these two reference values, therefore forming for each point an apex of a triangle whose base is the line (1D simplex) formed from $\bb v_1$ and $\bb v_2$.

\begin{figure}[tbp]
\centering
\begin{subfigure}{0.45\textwidth}
\centering
{\includegraphics[trim=0cm 0cm 0cm 1cm, clip,width=0.7\textwidth]{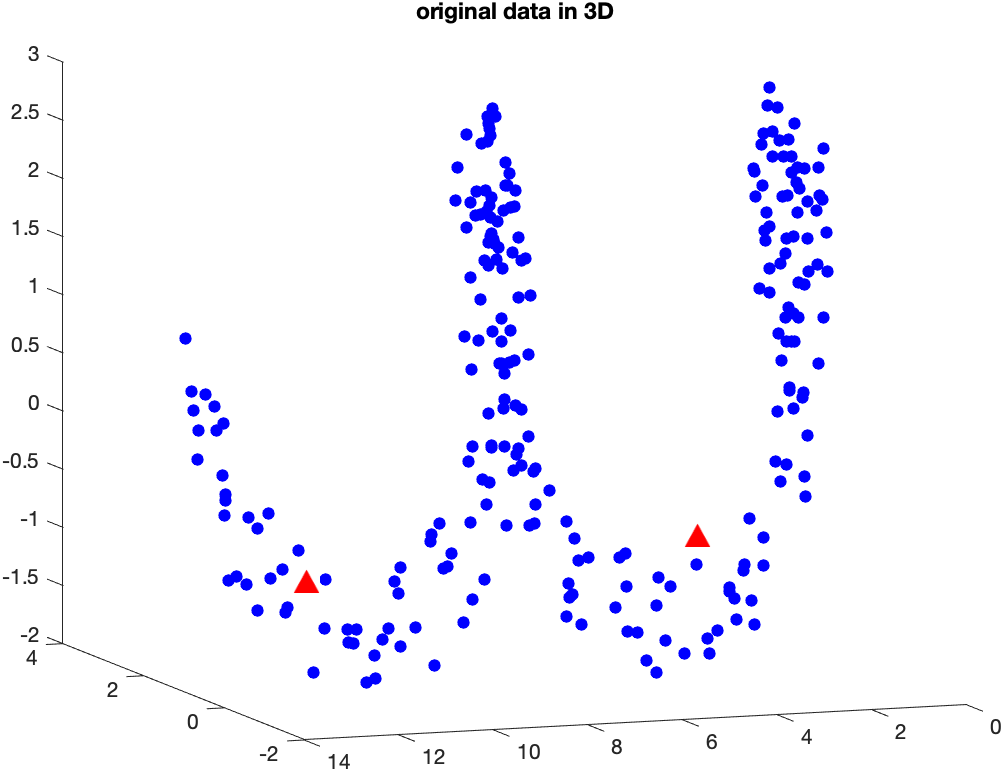}}
\caption{Original data in 3D}
\end{subfigure}
\begin{subfigure}{0.45\textwidth}
\centering
\includegraphics[trim=0cm 0cm 0cm 1cm, clip,width=0.7\textwidth]{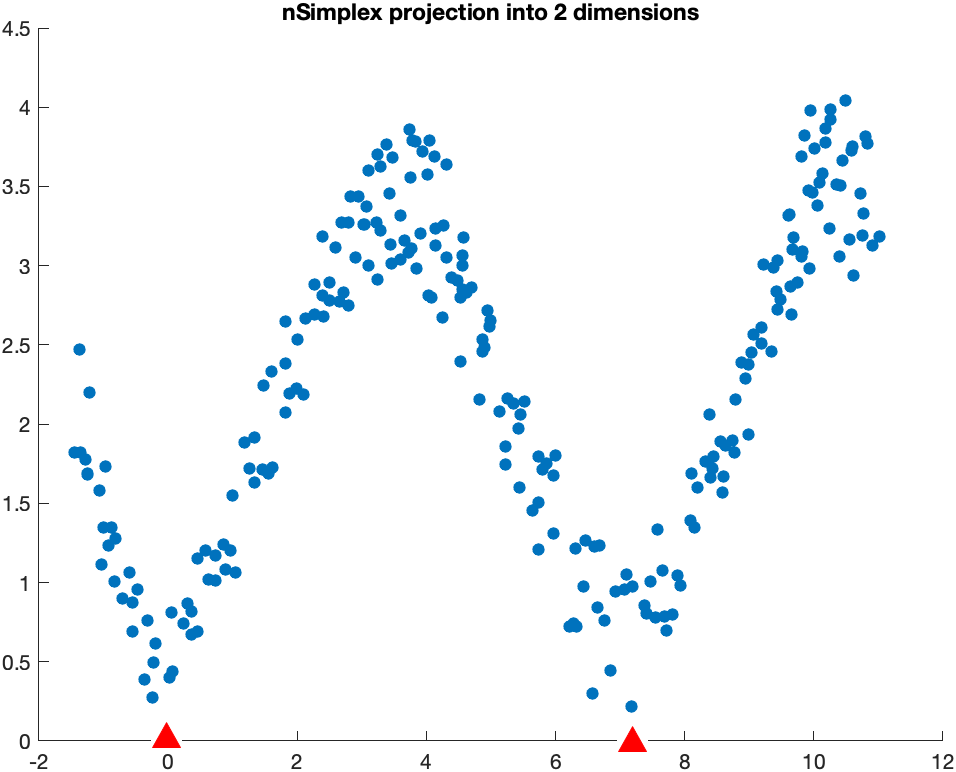}
\caption{\nsimp projection into 2 dimentions}
\end{subfigure}
\caption{Example projection from 3D to 2D using \nsimp. The left figure shows some generated points roughly in a 3D spiral pattern. Two of these points 
(depicted with red triangles) 
have been randomly selected to form the reference set $\mathcal{R}$. The right figure shows the 2D projection, formed over a 1D simplex derived from the distance between these points, whose vertices are shown in red. Each other point from the 3D set has been plotted at the apex of the triangle formed from its distances to these two points.}
\label{fig_simple_projection}
\end{figure}

In fact, not quite any set of values can be used for $\mathcal{R}$. The distances within the set must be able to form a set of linearly independent points in the projected space; in most spaces this is rarely an issue for a random selection. In fact the choice made for $\mathcal{R}$ affects various aspects of the projection, and will be discussed in detail later. It may also be noted that, in Step (4), there are two possible apex points that might be formed, one on either side of the hyperplane containing the base simplex. In our example, we have aligned the base (1-dimensional) simplex with the X axis, and selected apices with a positive Y coordinate. This choice can in fact be generalised over any number of dimensions, as the base simplex can always be formed with only zero values in the $k^\textit{th}$ dimension, i.e., it is constructed so that it lies in the hyperplane $\{[x_1, \dots, x_k]\in \R{k} | x_k=0\}$. We include in Appendix \ref{appendix_simplex_construction} an algorithm to construct a simplex with these properties in arbitrary dimensions. 

In this example, the 2D projection could be formed for any metric space, as the triangle inequality property means that it is always possible to construct the apex points of the triangles, i.e. determine a point $\bb v_u$ such that $\ell_2(\bb v_u, \bb v_1)=d(u,r_1)$ and $\ell_2(\bb v_u, \bb v_2)=d(u,r_2)$\footnote{The apex is in the intersection of a hypersphere centred in $\bb v_1$ with radius $d(u,r_1)$ and a hypersphere centred in $\bb v_2$ with radius $d(u,r_2)$. The intersection exists because $d(r_1,r_2)\leq d(u,r_1)+d(u,r_2)$.}. However as we will show the properties of the derived space are stronger if the domain of the transform has the Hilbert properties.

In due course we will define three functions over this $k$-dimensional coordinate space. First, however, we will introduce its important properties.

\subsection{Properties}
\label{sec_sub_properties}

For a Hilbert space $\mathcal{U}$ governed by a distance function $d$, we refer to $\sigma_\mathcal{R} : \mathcal{U} \rightarrow \mathbb{R}^k$ as an \nsimp transform defined by some appropriate set $\mathcal{R}$ of $k$ reference points selected from $\mathcal{U}$. For the sake of simplicity we henceforth use the notation $\sigma$ in place of $\sigma_\mathcal{R}$.

The most important properties of the \nsimp transform are the following:

\begin{itemize}
\item $\sigma$ is a contraction mapping, i.e.
\[
\forall u_i,u_j \in \mathcal{U}, \quad \ell_2( \sigma(u_i),\sigma(u_j)) \quad \le \quad d(u_i,u_j)
\]
\item Over the same coordinate space, there exists a function \upb  which is an expansion mapping, i.e.
\[
\forall u_i,u_j \in \mathcal{U}, \quad d(u_i,u_j)  \quad \le \quad \textit{Upb}( \sigma(u_i),\sigma(u_j)) 
\]
\item Finally, there exists a further function \zen which gives a value between these two, i.e.
\[
\forall u_i,u_j \in \mathcal{U}, \quad \ell_2( \sigma(u_i),\sigma(u_j)) \quad \le \quad \textit{Zen}( \sigma(u_i),\sigma(u_j))  \quad \le \quad \textit{Upb}( \sigma(u_i),\sigma(u_j)) 
\]
\end{itemize}

It can be seen from these inequalities that the \zen function is a better estimator of the true distance than either $\ell_2$ or \upb. In fact, as we will show, the \zen function acts as an excellent estimator of true distance particularly when the original space is high dimensional, allowing good estimates to be made even when these are projected onto relatively low dimensions. To give better consistency of naming, we will henceforth refer to the $\ell_2$ function as \lwb when it is used in this context.

We first give definitions of the three functions, and will give a geometric explanation in the following section.
Let $\mathbb{R}^k$ be a space in the co-domain of some \nsimp transform $\sigma$.
Let the Euclidean coordinates of $\bb x, \bb y \in \mathbb{R}^k$ be given by $[x_1, x_2, \dots , x_k]$ and $[y_1, y_2, \dots , y_k]$ respectively.
Then
\begin{align}
 \textit{base\_dist}(\bb x, \bb y) &\quad=\quad \sum_{i=1}^{k-1} (x_i - y_i)^2 \\
\notag\\
\textit{Lwb}(\bb x, \bb y) &\quad=\quad \sqrt{ \textit{base\_dist}(\bb x, \bb y) + (x_k - y_k)^2} \label{eqn_lwb} \\
\notag\\
\textit{Upb}(\bb x, \bb y) &\quad=\quad \sqrt{ \textit{base\_dist}(\bb x, \bb y) + (x_k + y_k)^2} \label{eqn_upb} \\
\notag\\
\textit{Zen}(\bb x, \bb y) &\quad=\quad \sqrt{ \textit{base\_dist}(\bb x, \bb y) + x_k^2 + y_k^2} \label{eqn_zen}
\end{align}

Of the three functions, only \lwb is a proper metric. The others are not even semimetric, as for example they do not have the identity property: i.e. $Zen(\bb x, \bb x) \neq 0$  if the last vector component is non-zero. They do, however, all possess the triangle inequality property, and so are suitable for use with metric search techniques. In fact the lack of the identity property from the \zen function is actually a requirement for it to produce very good estimates when used in low dimensions.

Furthermore, it can be seen that the three functions can, if required, be evaluated efficiently as a triple, by observing that

\[lwb^2(\bb x, \bb y) + 2x_ky_k = zen^2(\bb x, \bb y) = upb^2(\bb x, \bb y) - 2x_ky_k \]

\subsection{Geometry of the Simplex}

The easiest introduction to the intuition of the \lwb and \upb functions is to consider first a projection into two dimensions. Although this is not  the primary  intended use, it is useful to illustrate principles that apply also more generally in higher dimensions with the simpler case.

\begin{figure}[tbp]
\begin{subfigure}{0.35\textwidth}
\includegraphics[width=\textwidth]{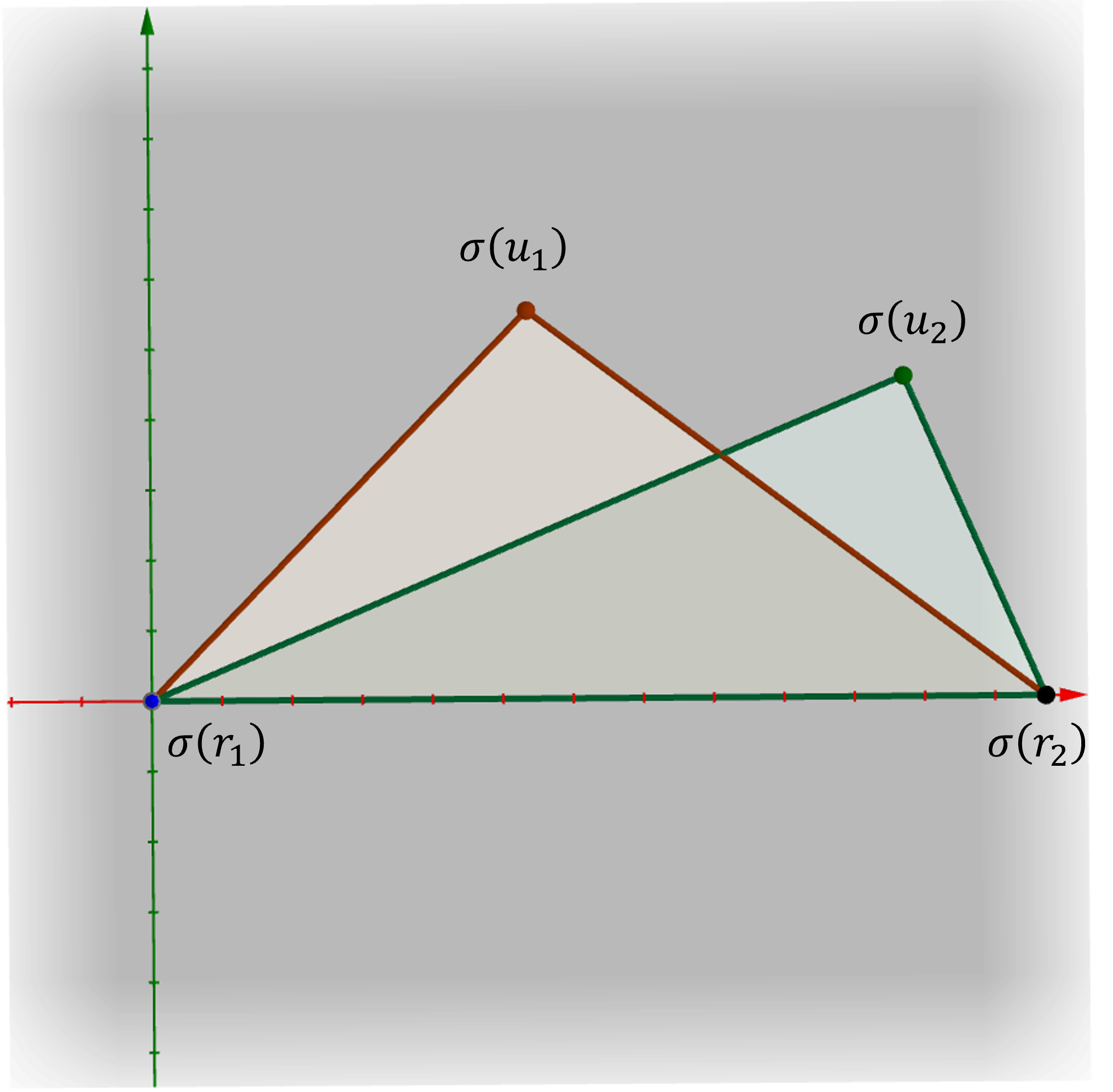} 
\bigskip\caption{ A 2D \nsimp projection of a space $(\mathcal{U},d)$. The projection $\sigma$ is constructed according to the reference objects $r_1,r_2 \in (\mathcal{U},d)$. The distances $d(u,r_1)$ and $d(u,r_2)$ give a unique position in the 2D plane for any element $u \in \mathcal{U}$.}
\label{fig_triangles_a}
\end{subfigure}
\hfill
\begin{subfigure}{0.55\textwidth}
\includegraphics[width=\textwidth]{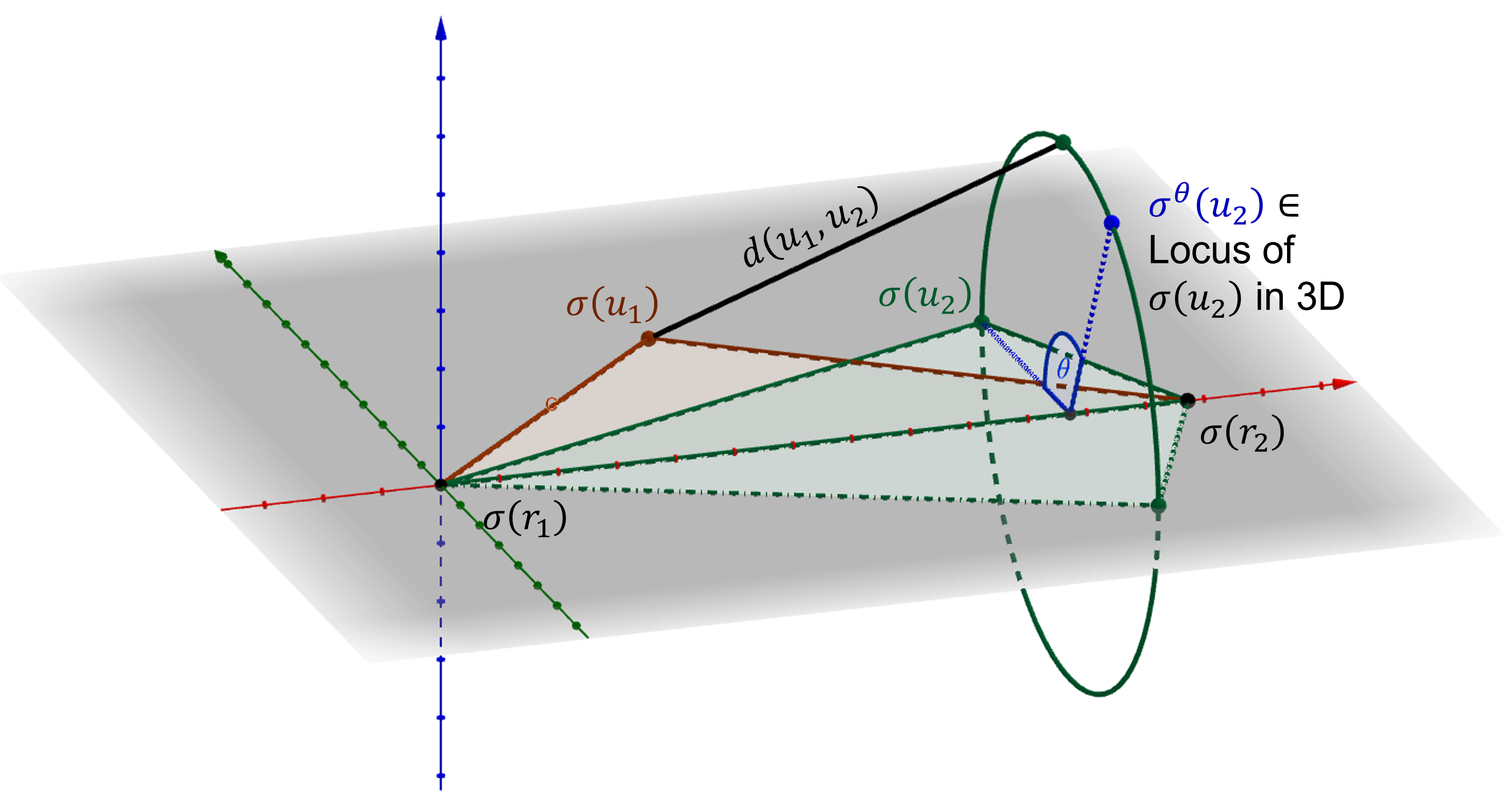}
\caption{From the Hilbert properties, any four objects from $(\mathcal{U},d)$ can be isometrically embedded in a 3D space. It can be seen that the distance $\ell_2(\sigma(u_1),\sigma(u_2))$ is a lower bound of the true distance $d(u_1,u_2)$.}
\label{fig_triangles_b}
\end{subfigure}
\caption{Two-dimensional projection of two values based on two reference objects (\ref{fig_triangles_a}), and the two possible planar tetrahedra formed by all four objects (\ref{fig_triangles_b}).}
\label{fig_triangles}
\end{figure}

Figure \ref{fig_triangles_a} shows two objects $u_1$ and $u_2$ from a (potentially high-dimensional) Hilbert space $(\mathcal{U},d)$ projected into two dimensions, using two reference objects $r_1$ and $r_2$. The reference objects are used to form the one-dimensional simplex comprising the line segment $[(0,0),(d(r_1,r_2),0)]$ and the objects $u_1$ and $u_2$ are projected into the 2D space, each as a separate apex of the base simplex formed by this line, according to their respective distances from $r_1$ and $r_2$. The points in the 2D space are thus the projections $\sigma(u_1)$ and $\sigma(u_2)$ of the \nsimp transform, where $\sigma$ is a mapping $\sigma : \mathcal{U} \rightarrow \mathbb{R}^2$ defined by the two reference points $r_1$ and $r_2$.

Due to the Hilbert properties of the projection domain, any $4$ values can be isometrically embedded in an $3$-dimensional Euclidean space therefore there must exist a tetrahedron in 3-dimensional Euclidean space whose vertices correspond to the four objects $\{r_1,r_2,u_1,u_2\}$, and whose edge lengths correspond to the distances between each corresponding pair. Two of the faces of this tetrahedron are congruent with the two triangles illustrated in Figure \ref{fig_triangles_a}. Considering only the 2D projection, 5 of the 6 inter-vertex distances have been calculated and are directly available from the projection (i.e., $d(r_1, r_2)$ and $d(r_i, u_j)$, $i,j=1,2$). The distance $d(u_1,u_2)$ is not available. Without loss of generality the vertices of the tetrahedron are $\{\sigma (r_1),\sigma(r_2),\sigma(u_1),\bb v_{u_2}\}$, where the vertex $\bb v_{u_2} \in \R{3}$ can be calculated only by explicitly computing $d(u_1,u_2)$.  However, it is possible to put upper and lower bounds on this distance from the tetrahedral geometry which is guaranteed to exist in a 3D projection.

Figure \ref{fig_triangles_b} shows a third dimension added to this diagram, which can accommodate the fourth unknown vertex $\bb v_{u_2}$ of the tetrahedron. Note that this is a  \emph{hypothetical} space, in that it is not explicitly constructed by the 2D \nsimp projection, but only used to reason about properties of the 2D projection. We introduce the term $\sigma^\theta(u)$ to refer to the mapping of an object $u \in \mathcal{U}$ into this $(\mathbb{R}^3,\ell_2)$ space so that $\ell_2(\sigma^\theta(u), \sigma(r_i)) =d(u, r_i)$, $i=1,2$, while still considering the 2D projection $\sigma$. The angle $\theta$ is the angle formed by the point $\sigma^\theta(u)$ and the hyperplane in $\R{3}$ containing the other three vertex of the tetrahedron.

The two adjacent faces of the tetrahedron share the line segment $[\sigma(r_1),\sigma(r_2)]$ as their common edge. Given that the tetrahedron must exist, it can be fully defined by the five available edge lengths in combination with the true angle $\theta^*$ between these faces, which must be somewhere in the interval $[0,\pi]$ radians. Without loss of generality, we fix the point $\sigma(u_1)$ in the $XY$ plane. The locus of the point $\sigma^\theta(u_2)$ in the higher dimension is thus restricted to the circle defined by the rotation of  apex point $\sigma(u_2)$  around the $X$ axis, and its exact location within the 3D space could be determined with knowledge of the distance $d(u_1,u_2)$, or equivalently from the knowledge of the exact angle $\theta^*$.

It is clear that the lower and upper bounds of the distance $d(u_1,u_2)$ in the original space occur with the planar tetrahedra formed when the angle $\theta$ is $0$ and $\pi$ radians respectively. These planar tetrahedra are contained within the 2D space of the original projection $\sigma$, and the 3-dimensional model does not need to be explicitly formed in order to establish their geometry. The tetrahedron defined by the angle $\theta=0$ has vertices exactly as already projected. Due to the manner in which the projection is constructed, with the final coordinate of the projection representing the altitude of the apex point above the hyperplane containing the base simplex (see Appendix \ref{appendix_simplex_construction} for full details), the tetrahedron defined by the angle $\theta=\pi$ can be created simply by taking the negative value of the $Y$ coordinate of point $\sigma(u_2)$.
These observations lead directly to the derivation of the \lwb and \upb functions as defined in Equations \ref{eqn_lwb} and \ref{eqn_upb} respectively.
It should be noted that these bounds apply only to projections made from general Hilbert spaces.

As noted above, any metric space can be projected into two dimensions, as the ability to perform this mapping is guaranteed by the triangle inequality property. However, the lower and upper bound properties do not hold for the 2D projection unless a stronger condition, the ability to isometrically embed any four objects into 3D Euclidean space%
\footnote{This is the so-called \emph{four-point property}; it is slightly more general than Hilbert properties, and some useful  non-Hilbert metric spaces possess this. We have previously defined such spaces as \emph{supermetric} \cite{Connor2016:IS_Supermetric}, and shown how general metric search techniques can be improved through its use.}, also holds in the domain of the projection.

While the intuitive argument given is valid only for the two-dimensional projection, it carries through a projection into any number of dimensions, as the Hilbert properties give the ability to isometrically embed any $k$ objects into $(k-1)$ Euclidean dimensions. It is possible, for example when $k=3$, to rotate the apex of a tetrahedron through a fourth dimension, around the plane containing its triangular base, whilst preserving the edge lengths, but this is not so clear in terms of intuition.
The general result as stated above, that the \lwb and \upb functions given in Section \ref{sec_sub_properties} are lower and upper bounds respectively of the true distance, is independent of the dimension of the projection when applied to any Hilbert space.
We enclose a  proof of correctness of this result in Appendix \ref{appendix:ProofCorrectness}.

\subsection{The \zen function}
\label{sec_zen_function}

Figure \ref{fig_rotating_triangles_a} is an illustration of the same 2D projection as in Figure \ref{fig_triangles}, but  shows the triangle $\Delta \sigma(r_1)\sigma(r_2)\sigma^\theta(u_2)$ with a different orientation in the hypothetical 3D space, while Figure \ref{fig_rotating_triangles_b}  shows the case where this triangle is set at the angle $\pi/2$ with respect to 
the hyperplane $H$ containing $\sigma(r_1)$, $\sigma(r_2)$, $\sigma(u_1)$.
The \zen (zenith) function is named after this last orientation, and gives the $\ell_2$ distance between the points $\sigma(u_1)$ and $\sigma^\theta(u_2)$ when $\sigma^\theta(u_2)$ is at the zenith of this circle, i.e. the point with the highest altitude above  the hyperplane $H$. 
This distance can be simply calculated using only the information in the projection, as given in Equation \ref{eqn_zen}. In this section, we explain why this function provides the best estimator for the true distance $d(u_1,u_2)$ in an original high-dimensional space.

\begin{figure}[tbp]
\begin{subfigure}{0.49\textwidth}
\includegraphics[width=\textwidth]{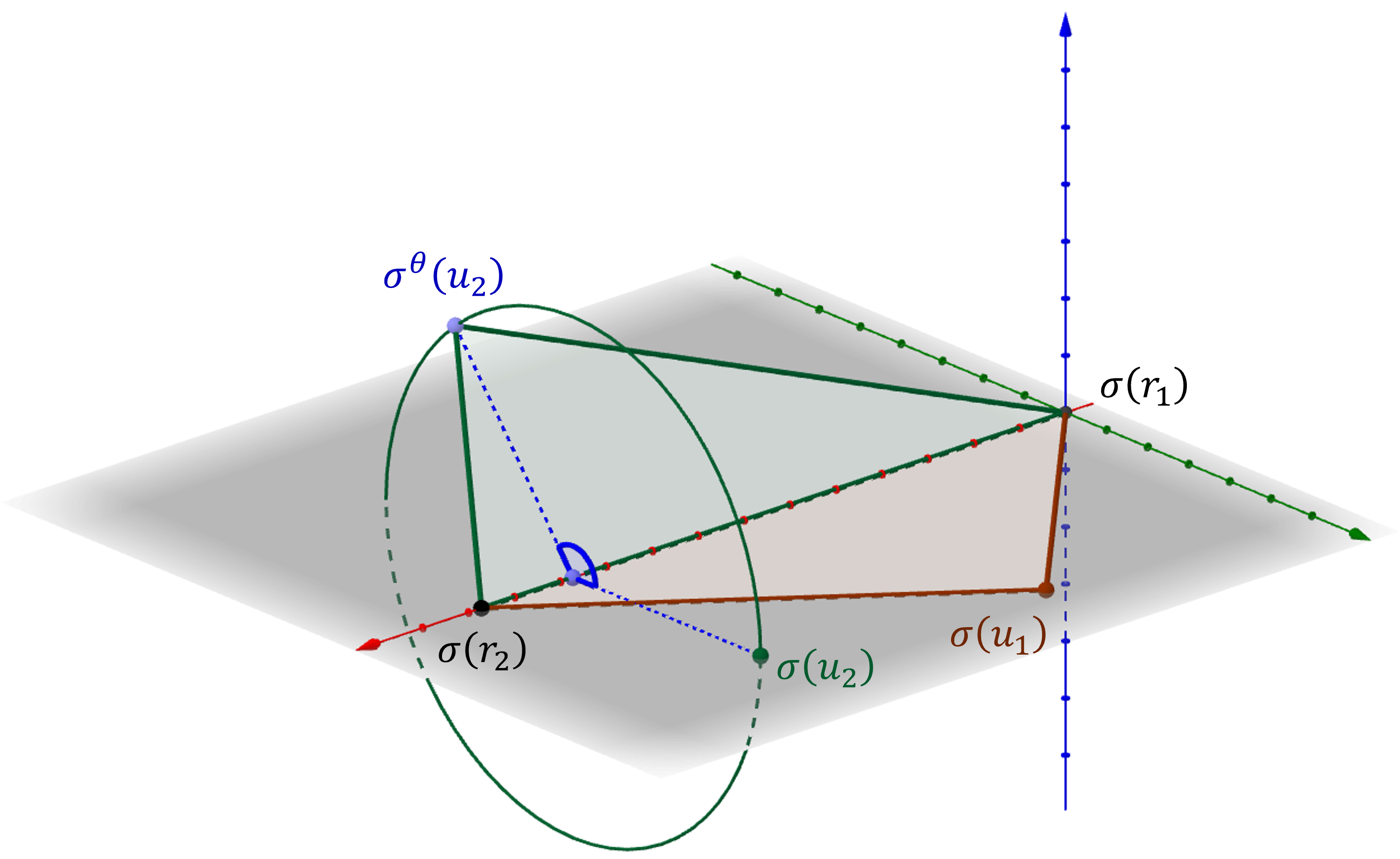}
\caption{Four objects $r_1,r_2,u_1$ and $u_2$ are projected into two dimensions. $\sigma^\theta(u_2)$ is a hypothetical position in 3D space.}
\label{fig_rotating_triangles_a}
\end{subfigure}
\hfill
\begin{subfigure}{0.49\textwidth}
\includegraphics[width=\textwidth]{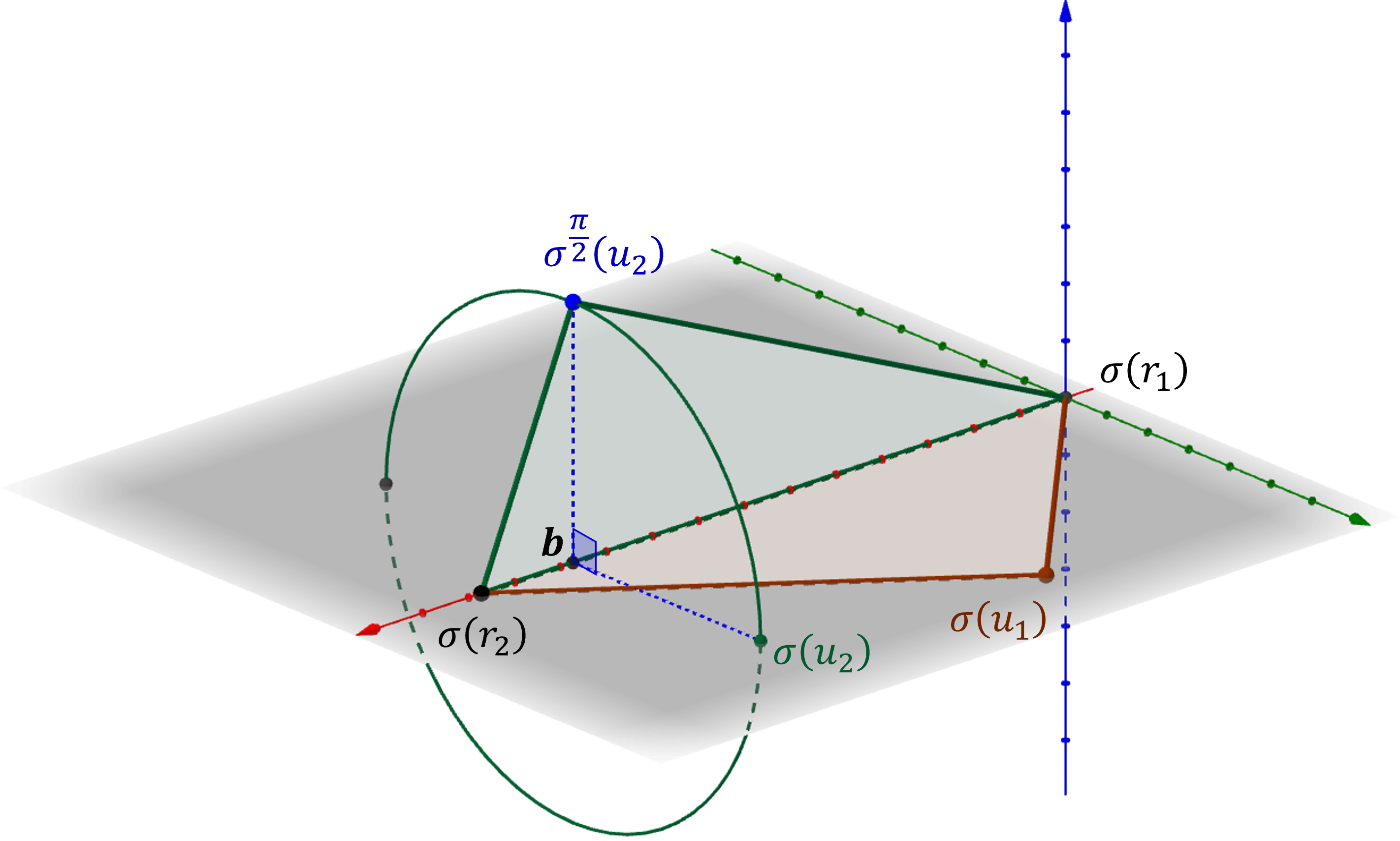}
\caption{Adding $\bb b$, the centre of the locus of $\sigma^\theta(u_2)$, and setting the angle  $\theta=\angle \sigma(u_2)\bb b\sigma^\theta(u_2)$ to $\pi/2$.}
\label{fig_rotating_triangles_b}
\end{subfigure}
\caption{The \zen function is defined when the angle between the two triangles is set at $\pi/2$ in the \emph{hypothetical} further dimension. There is no requirement to calculate a projection in this dimension:  $\textit{Zen}(\sigma(u_1),\sigma(u_2)) = \ell_2(\sigma(u_1),\sigma^\theta(u_2))$.}
\label{fig_rotating_triangles}
\end{figure}
 
Considering Figure \ref{fig_rotating_triangles} again, and all of the possible tetrahedra that could be formed from the five known and the one unknown distances, it might be supposed that there is no constraint on the particular position in which the point $\sigma^\theta(u_2)$ is most likely to lie on the circle depicted. This assumption however typifies the danger of basing intuition on low-dimensional spaces. In fact there is no absolute constraint, but there is a very significant probabilistic constraint, assuming the domain is evenly distributed, and this gets tighter as the \emph{dimensionality} of the {domain} increases.


\subsubsection{Considering higher dimensions}

If the projection is onto a $k$-dimensional space, 
then the hypothetical space being considered is in $k+1$ dimensions. 
{
In that case, $k$ reference objects and any two data points $u_1$ and $u_2$ are projected in $\R{k}$ using the \nsimp projection $\sigma$.
The vertex $\sigma^\theta(u_2)\in \R{k+1}$ is obtained by rotating $\sigma(u_2)$ around the $k-2$ dimensional space containing the base simplex formed by $\{ \sigma (r_1), \dots, \sigma(r_k)\}$; the angle $\theta$ is the angle formed by $\sigma^\theta(u_2)$ and the hyperplane $H=\{[x_1, \dots, x_{k+1}]\in \R{k+1} \,  |\, x_{k+1}=0\}$ containing both $\sigma(u_1)$ and the simplex base. In other words, 
$\sigma^\theta(u_2)$ lies in the intersection of  $k$ hyperspheres  $B_i=\{\bb v\in\R{k+1}| \ell_2(\bb v,\sigma(r_i))=d(u_2,r_i)\}$  for $i=1, \dots, k$, that forms a circle on a plane orthogonal to the hyperplane containing the base simplex of vertices $\sigma(r_1), \dots, \sigma(r_k)$ and the projected point $\sigma(u_1)$.  The exact angle $\theta^*$ that would give $\ell_2(\sigma(u_1), \sigma^{\theta^*}(u_2))=d(u_1, u_2)$ it is not known without explicitly calculating  $d(u_1, u_2)$.  However, as shown in Section \ref{sec_intro_subsec_related_angles}, in high dimensional space the most likely value for this angle is $\pi / 2$, and furthermore, as the dimension of the space increases, the variance rapidly decreases. This variance is a factor of the dimensionality of the domain, rather than the range, of the projection.
}
With this angle set to $\pi / 2$, the distance $\ell_2(\sigma(u_1),\sigma^\theta(u_2))$ in the hypothetical further dimension can be simply calculated given the projection values 
$\sigma(u_1)$ and $\sigma(u_2)$ in the $k$-dimensional space of the projection. This finally gives the explanation of the \zen formula (Equation \ref{eqn_zen}) which gives this distance in the context of the projection from $m$ to $k$ dimensions.

In Appendix \ref{appendix:ProofCorrectness} we give a formal derivation of this intuitive argument in arbitrary Hilbert spaces. In particular, we show that  if $\sigma: \mathcal{U}\to \mathbb{R}^k$ is the \nsimp transform defined by a set of $k$ reference points then for any $u_1, u_2 \in \mathcal{U}$ given the transformed points $\bb x=\sigma(u_1)$ and $\bb y=\sigma(u_2)$ it holds
	\begin{equation}
  d(u_1,u_2)= \sqrt{ \sum_{i=1}^{k-1}(x_i-y_i)^2+x_k^2+y_k^2-2x_ky_{k} \cos \theta}
  \label{eqn_zen_with_theta}
  \end{equation}
where $\theta$ corresponds to the angle $\angle \sigma(u_2) \bb b \sigma^\theta(u_2) $ in Figure  \ref{fig_rotating_triangles_b}. It is clear from this form that as the probability of $\theta$ being close to $\pi / 2$ increases (as happens when the dimensionality of domain $\mathcal{U}$ increases)  the \zen function applied to the $\sigma$ projection gives an increasingly accurate estimate of the true distance $d(u_1,u_2)$.

There is one caveat here however. If $d(u_1,u_2)$ is very small, this can affect the probability of  $\theta$ being close to $\pi / 2$. Considering Figure \ref{fig_rotating_triangles} it can be seen that if $d(u_1,u_2)$ is close to the lower-bound, shown in the diagram by the distance $d(\sigma(u_1),\sigma(u_2))$, then $\theta$ will be significantly less than $\pi / 2$. Equation \ref{eqn_zen_with_theta} shows that, for small values of $\theta$, the \zen function will not be such a good estimator.

The implications of  the general result however are quite extraordinary: it is possible to compress a space of perhaps thousands of dimensions into a very low-dimensional space of only a few dimensions, where the majority of pairwise distances are well-preserved. We demonstrate that this is in fact the case in Section \ref{sec_experiments}.


\section{Experimental Analysis}
\label{sec_experiments}
Experimental analysis is presented in four main sections, each of which tests dimension reduction over  a different class of metric space. Section \ref{sec_sub_exp_gen_euc} tests the different  transforms against uniformly generated Euclidean spaces, and Section \ref{sec_sub_exp_real_euc} uses two high-dimensional Euclidean spaces deriving from real-world applications. Section \ref{sec_sub_exp_cos}  tests two spaces governed by the Cosine metric, and Section \ref{sec_sub_exp_jsd} tests two spaces governed by the Jensen-Shannon metric.

For the first three of these sections, the mechanisms tested are: \nsimp \zen, PCA, MDS and RP. 
 For Jensen-Shannon distance, where there is  no coordinate space, the  mechanisms tested are  \nsimp \zen and LMDS.

First, in Sections \ref{subsec_quality_measurement} and \ref{subsec_test_data} respectively the quality measures and  data sets used are introduced.

All of the code used to generate our experimental results is available from \url{https://github.com/richardconnor/dr-matlab-code}.

\subsection{Quality Measurement}
\label{subsec_quality_measurement}
Dimensionality reduction is a very generic concept, defining any mechanism whose purpose is to transform a set of values into a lower-dimensional space whilst maintaining, as far as possible, the most important aspects of the geometry of the original space. This rather general definition leaves much room for the interpretation of quality, depending on the context of use. A comprehensive survey of quality measurement techniques is given in \cite{GRACIA20141}; based on this, we have picked the following measures as the most representative for the general context.

For a  space $(\mathcal{S},d)$ which has been reduced to a lower-dimensional space $(\mathcal{S}',\zeta)$ using a DR transform $\T$, we adopt the following notation and measures
 \begin{align*}
\delta_{ij} \quad &= \quad d(s_i,s_j)\quad s_i,s_j \in (\mathcal{S},d)  \\
\zeta_{ij} \quad &= \quad \zeta({s_i'},{s_j'})\quad s_i',s_j' \in (\mathcal{S}',\zeta) \quad \text{where } {s_i'}=\T(s_i) \, \forall i
\end{align*}

\begin{description}
\item[Shepard Plots]  A scatter plot of sampled distances $\delta_{ij}$ from the domain, plotted against distances $\zeta_{ij}$, 
which gives a simple visual impression of quality. Plots are typically overlaid with the monotonic function implied by the Kruskal Stress measurement.
\item[Kruskal Stress] The Kruskal \emph{stress1} criterion, which gives a measure of the monotonicity of the transform. This is a topological measure; stress will be zero if the DR transform is purely monotonic, independent of the actual values of $\delta_{ij}$ and $\zeta_{ij}$.
\item[Sammon Stress]  Deriving from Sammon mapping, the Sammon stress formula is affected by the absolute differences between $\delta_{ij}$ and $\zeta_{ij}$, as well as their topological relationship.
\item[Quadratic Loss] A purely distance-based measure, which particularly punishes the existence of outliers in $\delta_{ij} - \zeta_{ij}$.
\item[Spearman Rho] A topological measure of order preservation of distances within sampled pairs of objects from the domain, essentially a measure of the likelihood that  $\delta_{ij}< \delta_{ik}$ implies  $\zeta_{ij}< \zeta_{ik}$.
\item[kNN Query Recall] Here the results of $k$NN searches in the reduced space are tested for quality against the same search performed in the original space.
This aspect is not measured in \cite{GRACIA20141}, and we are not aware of any commonly accepted measure for testing it. Nonetheless it seems that nearest-neighbour search over the reduced space is an important use of these techniques. It is not captured by any of the quality metrics listed above, as behaviour over very small distances may differ from randomly sampled distances. We have therefore devised our own measure of recall, described in Appendix \ref{sec_intro_subsec_quality}, where discounted cumulative gain is measured over a relevance function based on rank.

\end{description}
Appendix \ref{sec_intro_subsec_quality} gives fuller background on  all of these measurements.

\subsection{Test data and methodology}
\label{subsec_test_data}
\begin{table}[tbp]
\caption{Data sets used in experiments and their outline properties. \emph{Cosine} distance refers to the $\ell_2$ metric applied over $L_2-$normalised data, and for Jensen-Shannon distance the data is $L_1$-normalised as required.}
\begin{center}
\small
\begin{tabular}{p{4cm}p{3cm}p{3cm}p{2.5cm}}
\hline
\textbf{Data Set}		&\textbf{Representational} \newline \textbf{Dimension}&\textbf{Metric}&\textbf{Dimension of } \newline \textbf{80\%  variance}\\
\hline
100-dimensional  generated	&100		&Euclidean&80\\
500-dimensional  generated	&500		&Euclidean&400\\
Twitter GloVe					&200		&Euclidean&120\\
MirFlickr fc6					&4096		&Euclidean&109\\
ANN SIFT						&128		&Cosine&28\\
MirFlickr fc6 RELU			&4096		&Cosine	&1111\\
100-dimensional generated	&100		&Jensen-Shannon&n/a\\
MirFlickr GIST					&480		&Jensen-Shannon&n/a\\
\hline
\end{tabular}
\end{center}
\label{tab_dataset_overview}
\end{table}%
In all cases we have used data sets that are  widely available, or can be recreated using widely available software, and have at least one million elements to allow reasonable recall experiments. The data sets used and their main features are given in Table \ref{tab_dataset_overview}. Details are given in Appendix \ref{appendix_data_sets}.

In all experiments, a randomly selected subset of objects from the domain is used as a \emph{witness}%
\footnote{or \emph{training} set; we prefer the term \emph{witness} in this context  to denote  a relatively small representative subset.}.
 set with which to create the transforms. The witness set should be sufficiently large for the initial analysis of any  manifold within which the actual data set is embedded, depending on the technique being considered, to allow the  general $(\mathcal{U} \rightarrow \mathbb{R}^k)$ transform to be created. The RP transform is created without reference to the domain, and  the \nsimp transform is created from a  set of $k$ objects randomly selected from the witness set.

For the majority of quality tests a  further (non-intersecting) subset of objects is used as the domain of the transform. For Shepard plots,  a subset of just 50 objects
  is used to avoid  overcrowding the plot. A set of $10^4$ objects  is used to calculate the  Kruskal stress used to annotate the plot, and the other quality measures other than recall.  For recall experiments, a set of $10^6$ elements is used, against which the ground truth of 1,000 nearest neighbours is  calculated for 100 elements of this subset. All experiments have been multiply repeated with different random selections to ensure the results shown are representative and repeatable.
  
\begin{figure}[!tp]
\centering
\begin{subfigure}{0.34\textwidth}
\includegraphics[width=\textwidth]{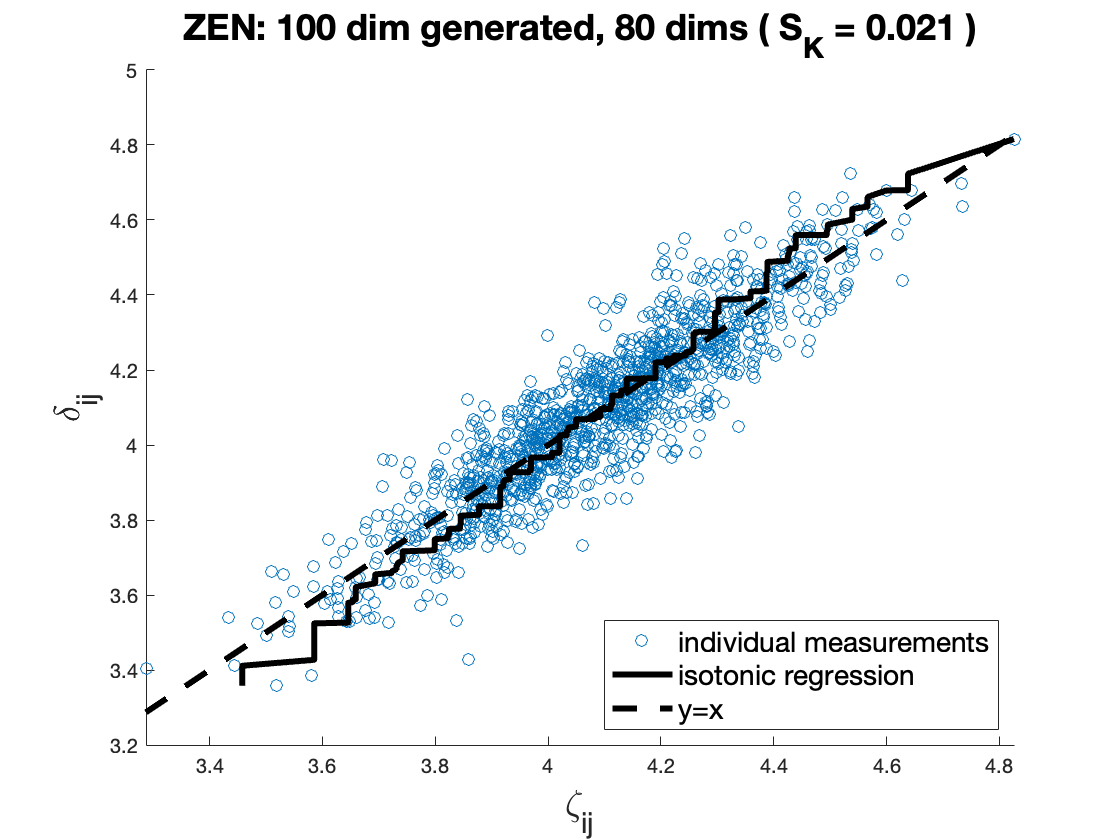}\hfill
\caption{\nsimp \zen}
\end{subfigure}
\begin{subfigure}{0.34\textwidth}
\includegraphics[width=\textwidth]{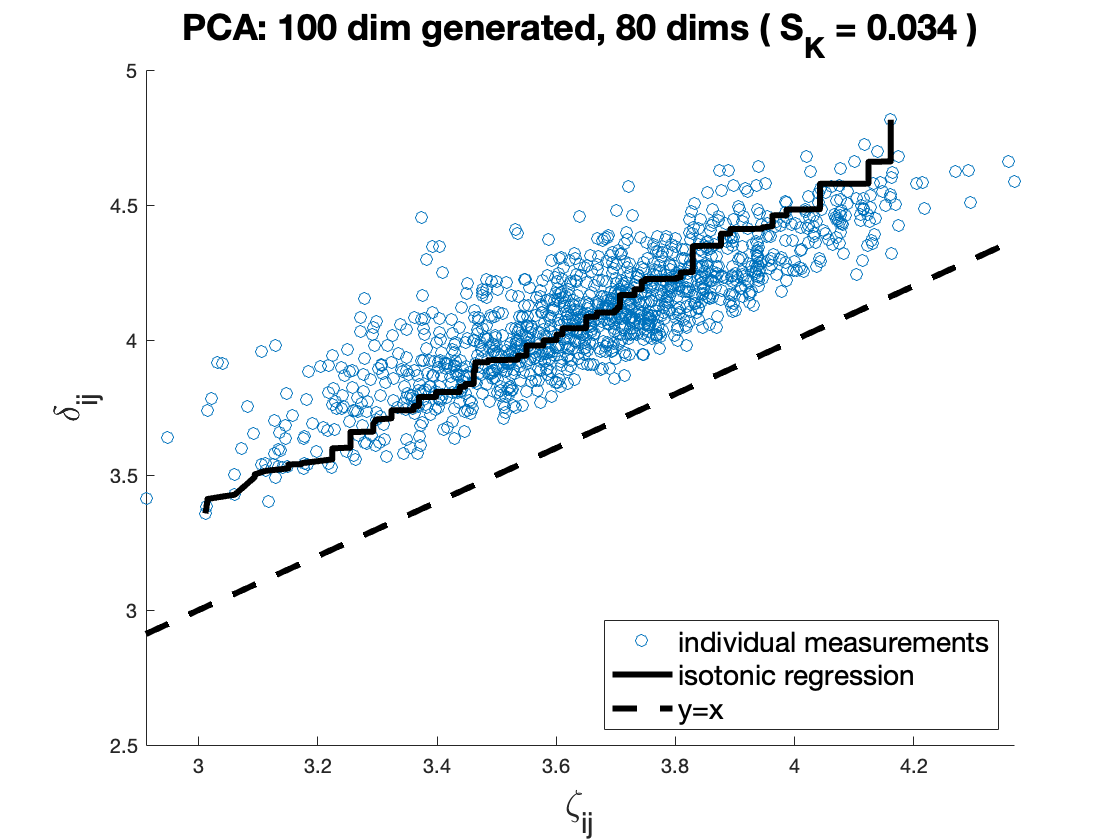}
\caption{PCA}
\end{subfigure}\\
\begin{subfigure}{0.34\textwidth}
\includegraphics[width=\textwidth]{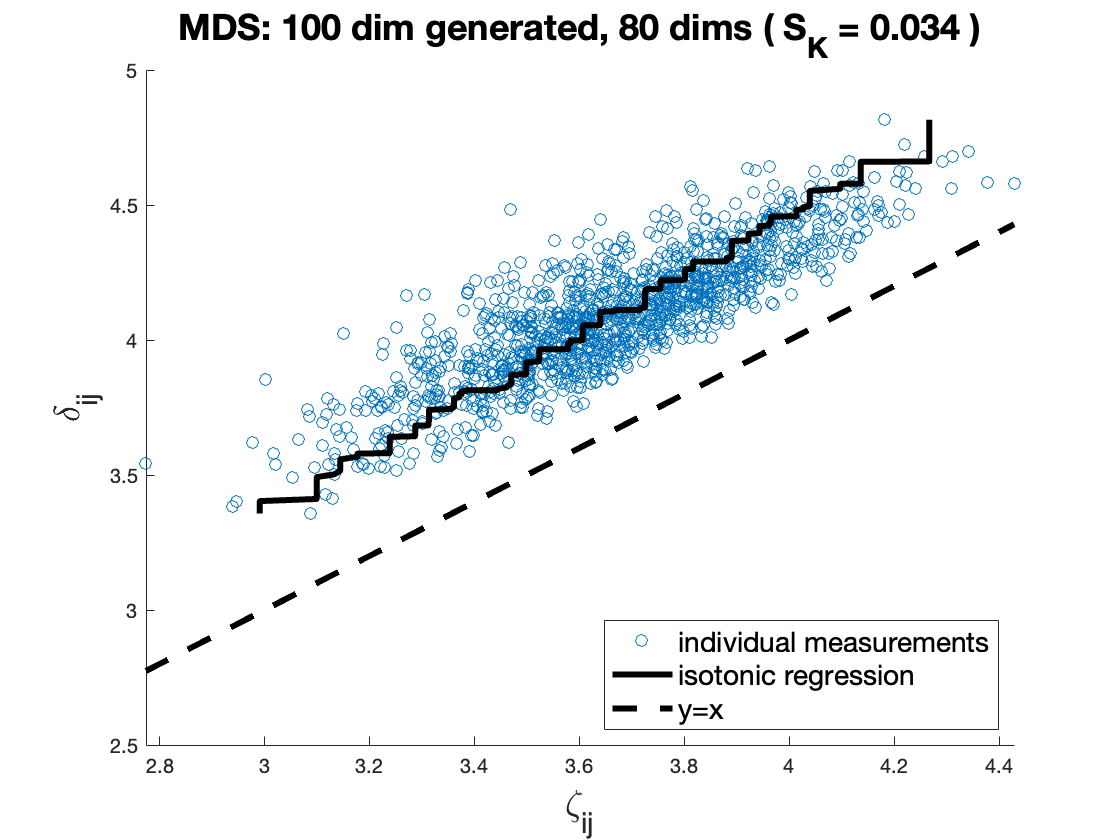}\hfill
\caption{MDS}
\end{subfigure} 
\begin{subfigure}{0.34\textwidth}
\includegraphics[width=\textwidth]{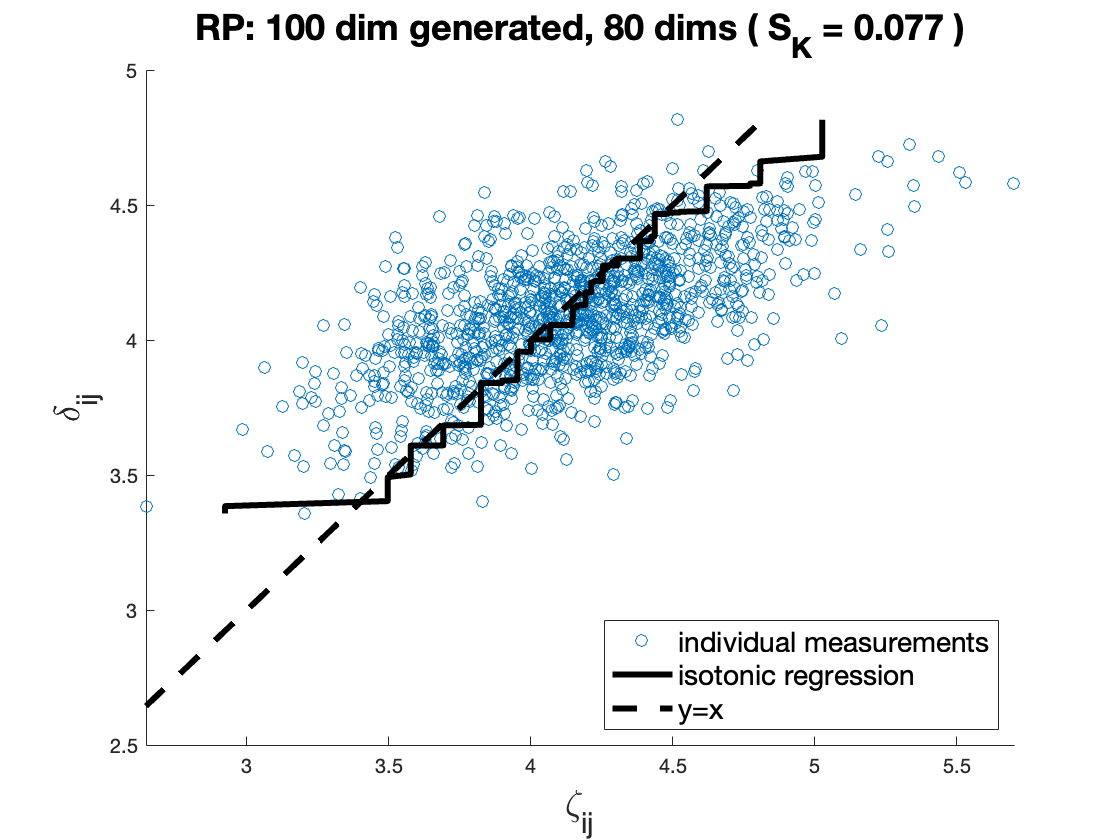}
\caption{RP}
\end{subfigure}
\caption{Shepard plots for the  reduction transforms, each having  reduced 100-dimensional generated data to 80 dimensions. For 50 randomly selected values, all pairwise distances are plotted in both original and transformed spaces. The Y-axis represents true distance, and the X-axis is the distance measured in the reduced space. The solid black line shows the fitted least-squares monotonic regression function from which the Kruskal stress ($S_K$) is measured. It can be seen that \nsimp \zen and RP point clouds are centred around the true distance function ($y =x$, the dashed line), whereas PCA is a contraction mapping. While MDS gives the appearance of a contraction mapping, in fact this is not a guarantee. 
}
\label{fig_10dim_shepards}
\end{figure}
\subsection{Generated Euclidean spaces}
\label{sec_sub_exp_gen_euc}


Uniformly distributed Euclidean spaces%
\footnote{The experiments have been repeated for generated data with a Gaussian distribution, the results are not significantly different from those shown here.}
 were generated in 100 and 500 dimensions. Reduced-dimension versions were produced using RP, PCA, MDS, and \nsimp, 
 and tested using the quality measures outlined in Section \ref{subsec_quality_measurement}. 

In the case of generated data, the witness set contains no useful information about the data, as there is no lower-dimensional manifold contained within the representational space. Both PCA and MDS therefore  effectively apply a random projection to the experimental data. The PCA transform is guaranteed to be orthonormal, and while the MDS transform is not, it is always  close to this given a uniform distribution of the witness data. The RP technique used in these experiments is much further from orthonormal, especially with  lower dimensionalities.

 It is generally perceived that there is  no value in applying non-random dimensionality reduction to  uniformly distributed data, but these  experiments  demonstrate that the $Zen$ function preserves distances better than the other methods, due to the geometric model  described in Section \ref{sec_zen_function}, even in the absence of a lower-dimensional manifold.

\subsubsection{100 dimensional generated space}
\begin{figure}[tbp]
\centering
\begin{subfigure}{0.32\textwidth}
\includegraphics[width=\textwidth]{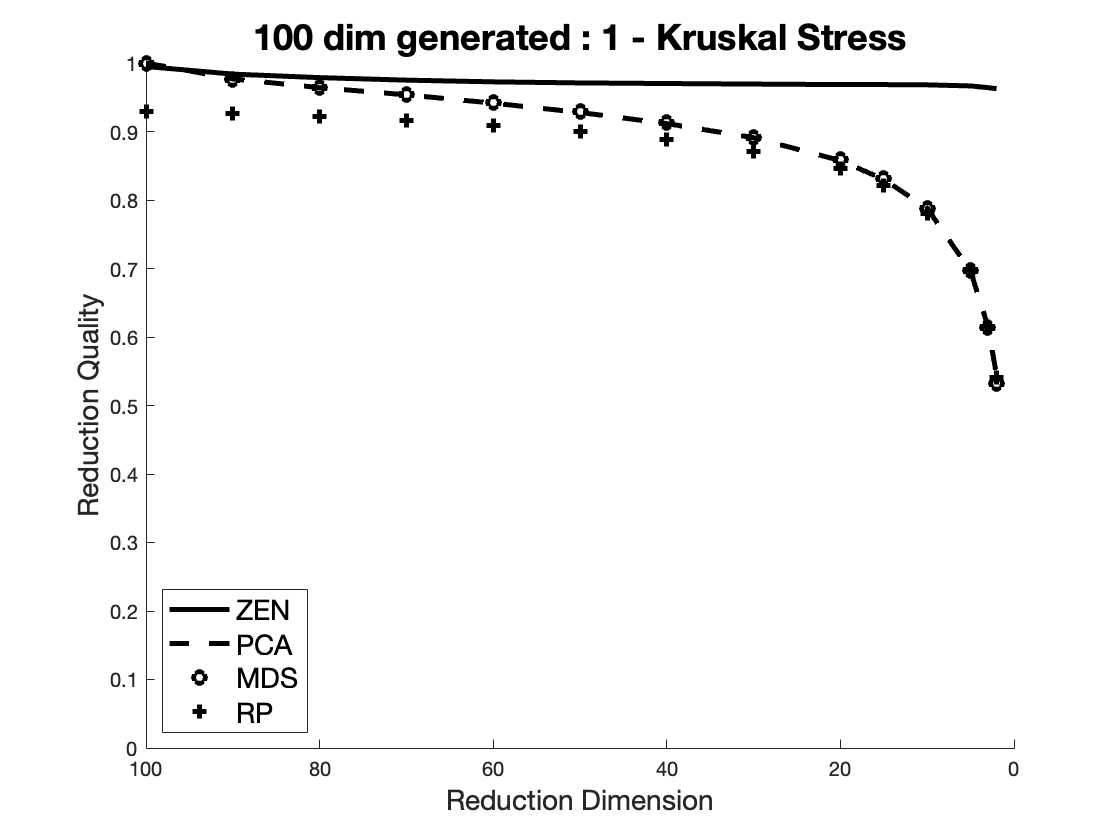}\hfill
\caption{Kruskal stress}
\end{subfigure}
\begin{subfigure}{0.32\textwidth}
\includegraphics[width=\textwidth]{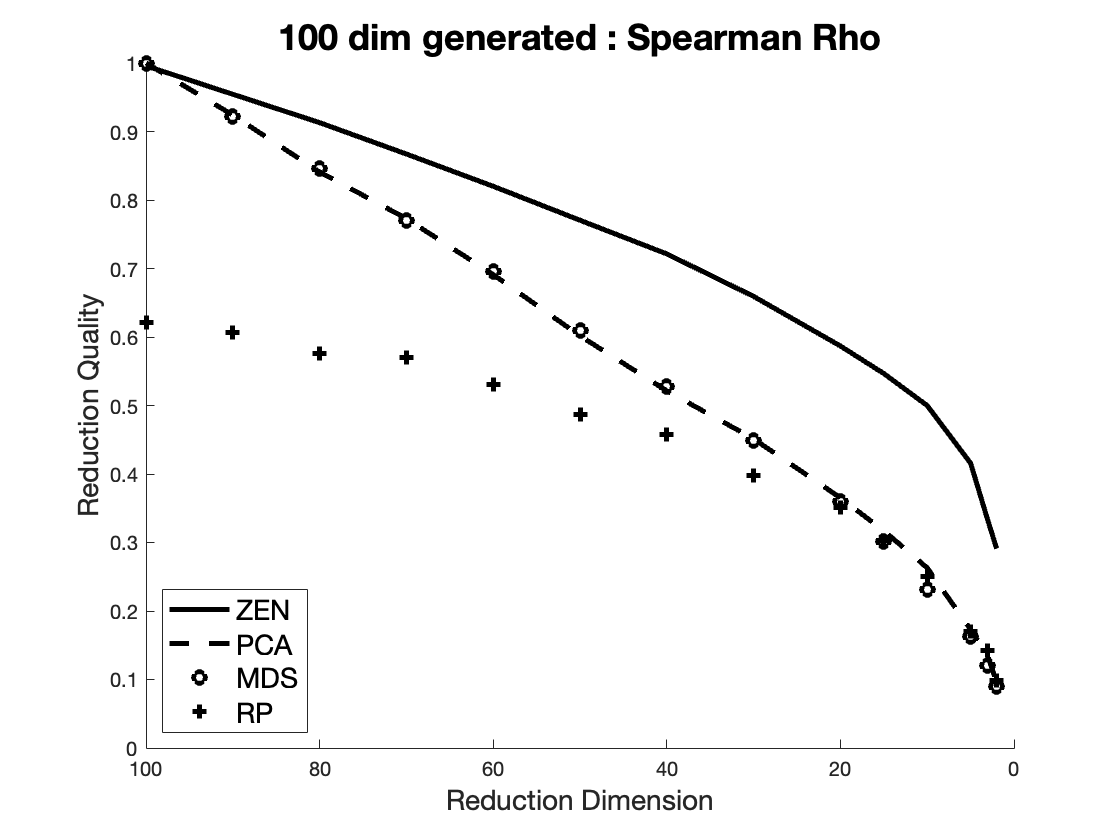}
\caption{Spearman Rho}
\end{subfigure}
\begin{subfigure}{0.32\textwidth}
\includegraphics[width=\textwidth]{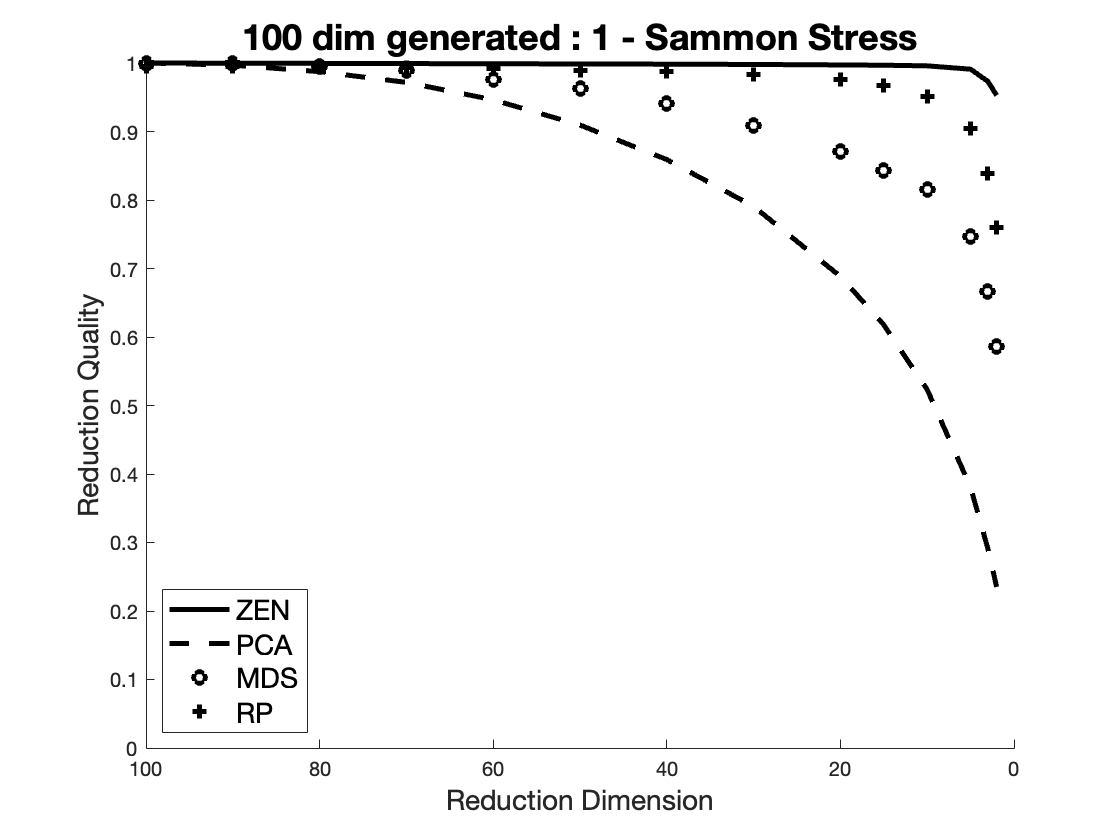}\hfill
\caption{1- Sammon stress}
\end{subfigure}
\begin{subfigure}{0.32\textwidth}
\includegraphics[width=\textwidth]{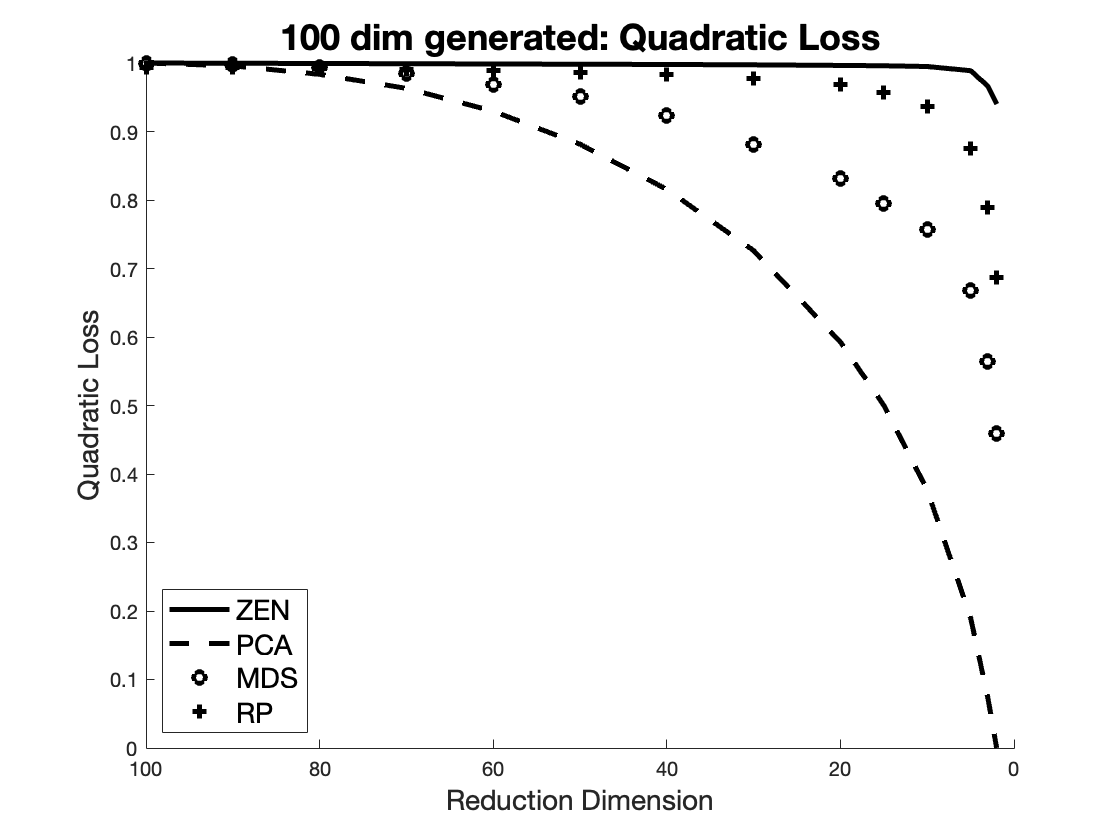}
\caption{Quadratic loss}
\end{subfigure}
\begin{subfigure}{0.32\textwidth}
\includegraphics[width=\textwidth]{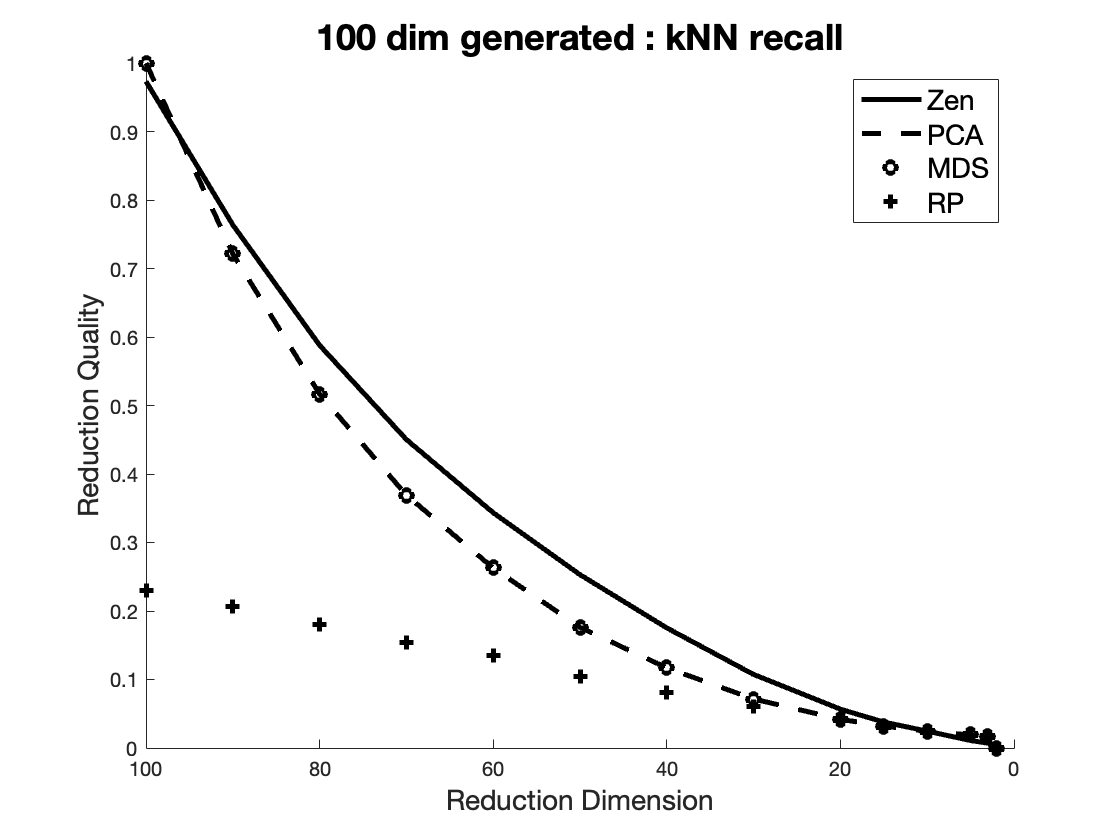}
\caption{kNN Recall}
\end{subfigure}
\caption{Five measures of quality as the reduction dimension decreases. The X-axis in each plot is the dimension of the reduction used, in this case starting from 100 on the left and ending at 2 on the right. The Y-axis shows the measure of quality, for each transform, at each dimension. All quality measures are normalised into $[0,1]$ to make the comparisons clearer, as recommended in \cite{GRACIA20141}. For all measures, a value of 1 implies a perfect representation of the original space, a value of 0 means the transform has no effective value.}
\label{fig_100dim_quality}
\end{figure}

Figure  \ref{fig_10dim_shepards} shows Shepard plots for the 100-dimensional data reduced to 80 dimensions using various reduction transforms. The reduction to 80 dimensions has been chosen as it represents the number of dimensions that explain 80\% of the variance using PCA analysis and Eq. \eqref{eqn_pca_variance}.

As can be seen, \nsimp \zen is the best transform according to the Kruskal stress criterion, giving a significantly better outcome than either PCA or MDS. 
%
In this example, PCA gives a slightly better outcome than MDS, but this is not significant.
It is however significant than \nsimp \zen gives a better outcome: this function relies upon properties of the domain geometry which are not available to a linear transform. RP, as expected, performs the worst of the four transforms.

Figure \ref{fig_100dim_quality} shows the outcomes of the various quality measures, as the reduction dimension is reduced from the dimension of the original data down to 2 dimensions. Each quality measure has been normalised into the range $[0,1]$, where 1 implies a perfect outcome and 0 implies that the transform has no effective value. The expectation is that, for each measure, the outcome will start high and monotonically reduce as the dimension of the reduction decreases.

It is clear that for this data set the \nsimp \zen transform consistently performs better  than any of the other techniques for all measures and for all reduction dimensions.
It is also particularly evident that the Kruskal quality of  \nsimp \zen  does not appear to significantly degrade as the reduction dimension is reduced to surprisingly low dimensions. Extraordinarily, the Kruskal stress of \nsimp \zen in the 2-dimensional reduction is less than that of the other techniques at 80 dimensions. This aspect will again be discussed further in Section \ref{sec_discussion}.

The poor Quadratic Loss and Sammon Stress outcomes  for PCA are due to the mechanism being a contraction transform, which therefore introduces a consistent  error across all measurements. These quality measures punish any absolute, rather than relative, error. 
The \nsimp \zen transform gives much better results in these tests, as the underlying geometry  holds the transformed distances close to those measured in the original space as seen in Figure \ref{fig_10dim_shepards}.

\subsubsection{500-dimensional generated space}
{
\begin{figure}[tp]
\centering
\begin{subfigure}{0.32\textwidth}
\includegraphics[width=\textwidth]{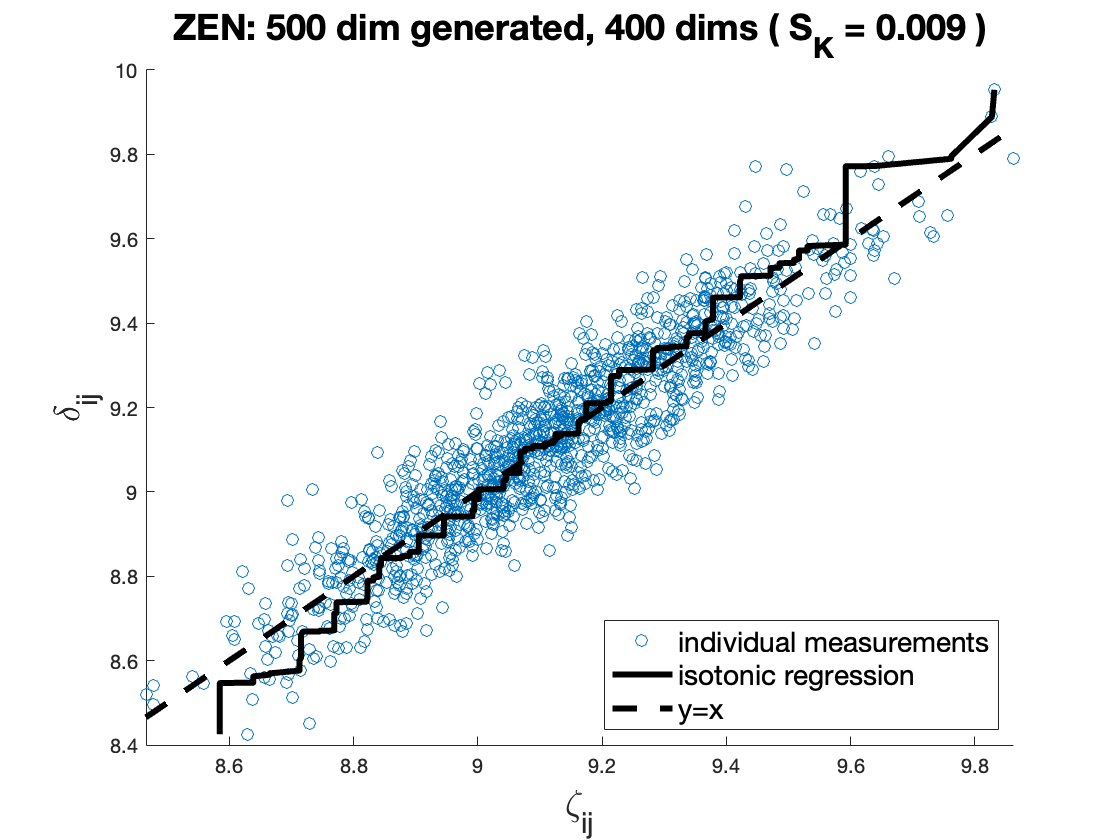}\hfill
\caption{Zen}
\end{subfigure}
\begin{subfigure}{0.32\textwidth}
\includegraphics[width=\textwidth]{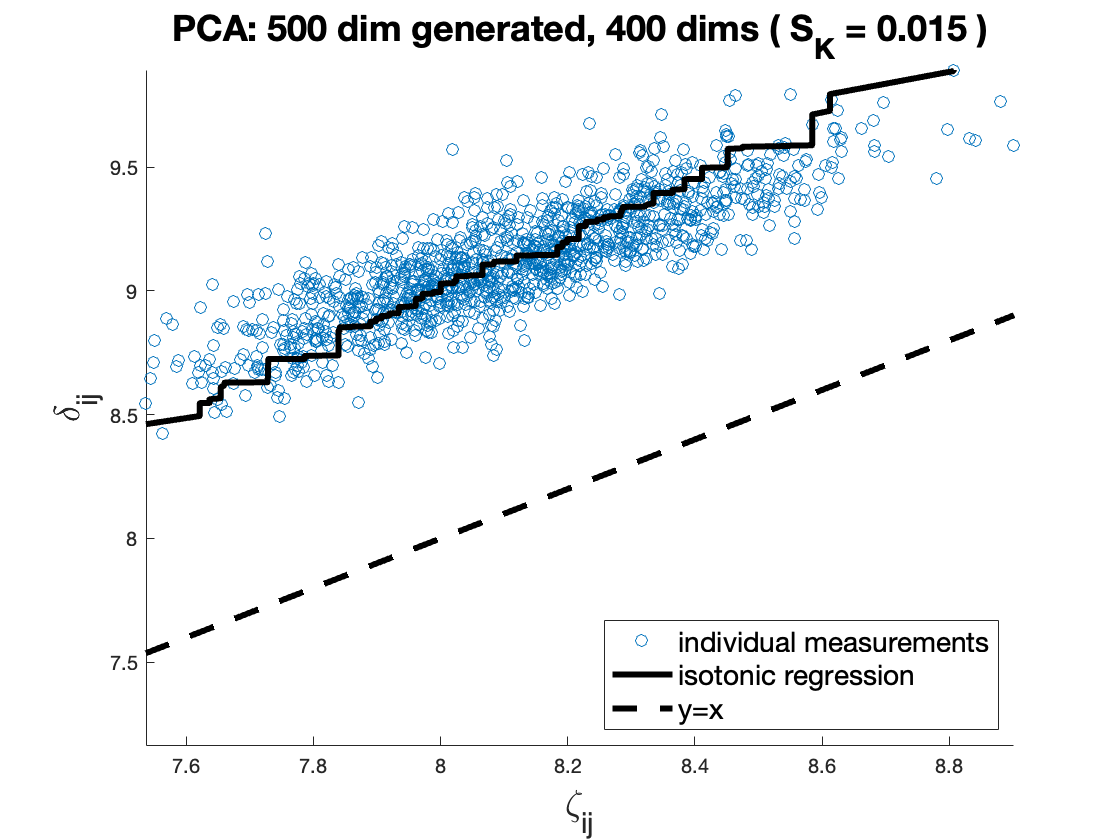}
\caption{PCA}
\end{subfigure}
%
%
\begin{subfigure}{0.32\textwidth}
\includegraphics[width=\textwidth]{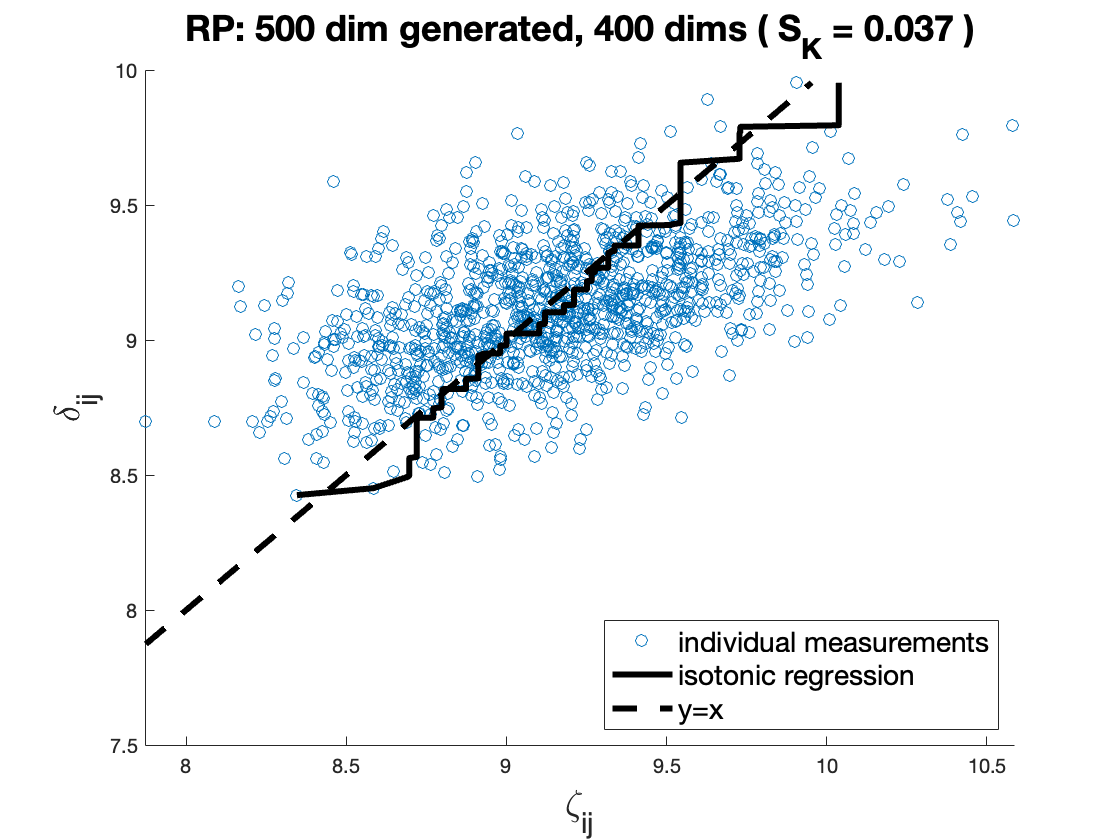}
\caption{RP}
\end{subfigure}
\caption{Shepard plots for the DR mechanisms applied to 500-dimensional space, reduced to 400 dimensions. MDS, as before, gives very similar results to PCA, and from now on we omit that figure from this analysis.}
\label{fig_500_dim_shepards}
\end{figure}
\begin{figure}[tp]
\centering
\begin{subfigure}{0.33\textwidth}
\includegraphics[width=\textwidth]{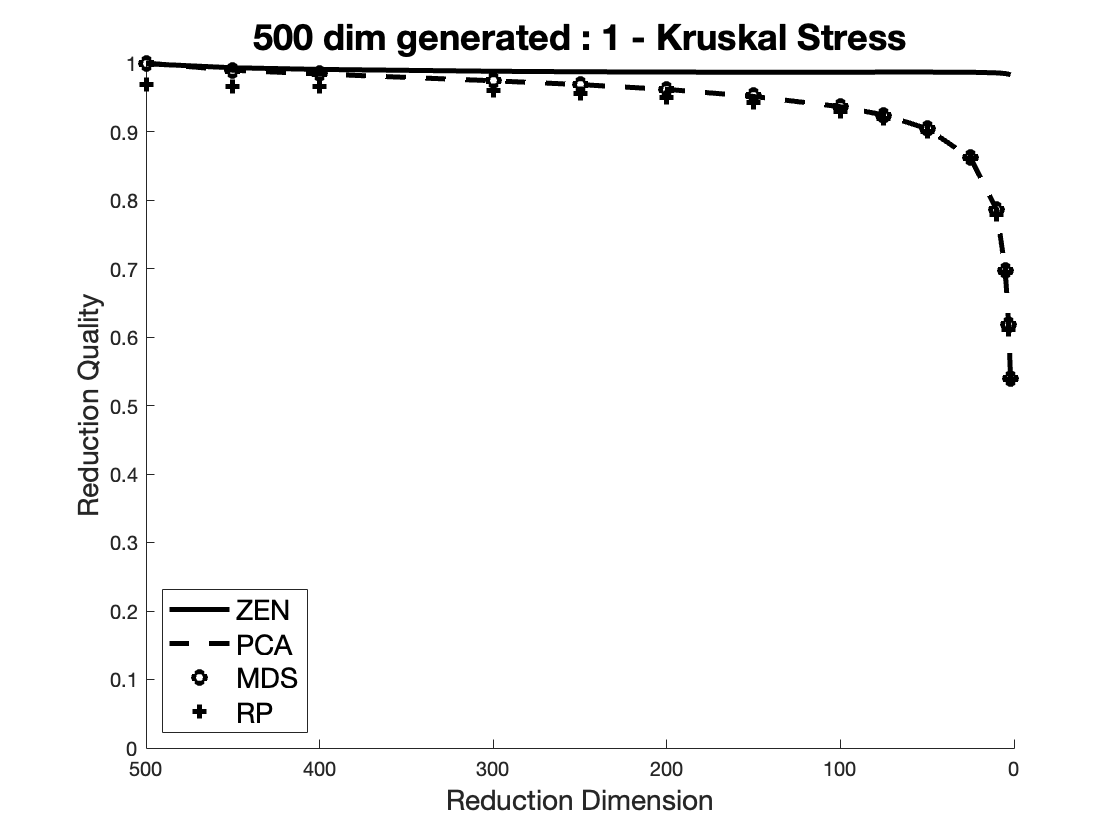}\hfill
\caption{Kruskal stress}
\end{subfigure}
\begin{subfigure}{0.32\textwidth}
\includegraphics[width=\textwidth]{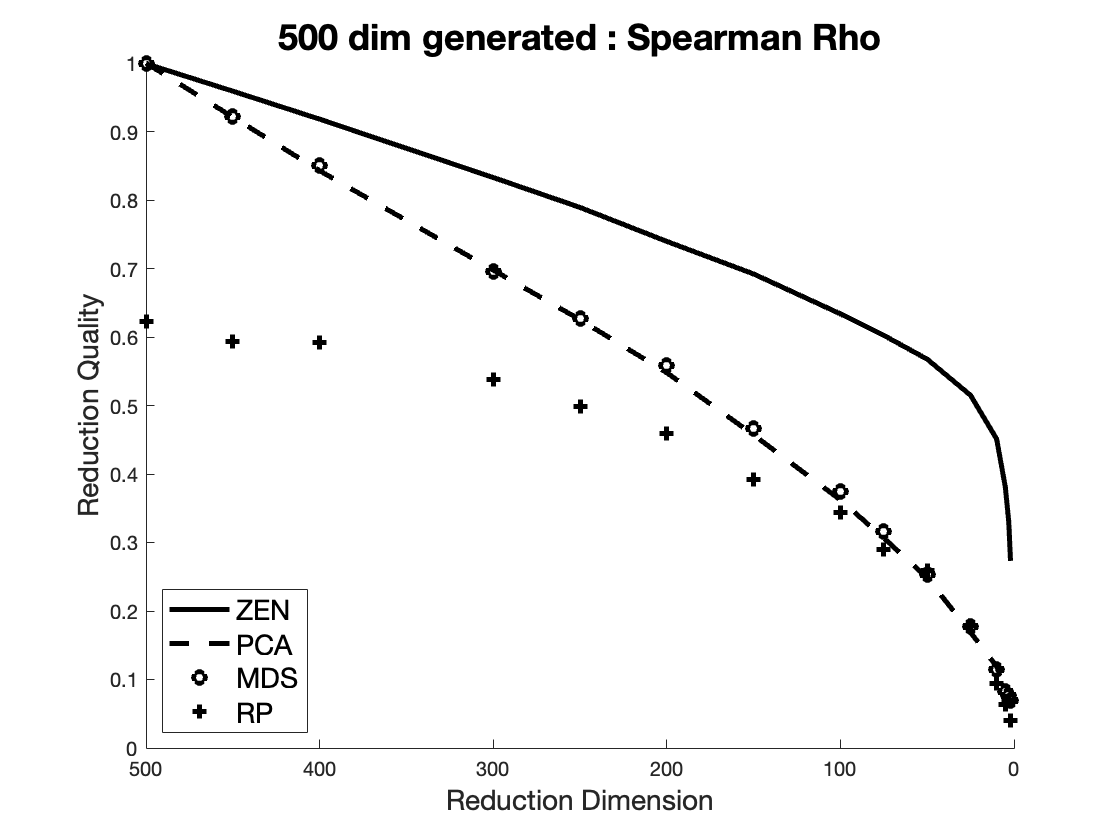}
\caption{Spearman Rho}
\end{subfigure}
\begin{subfigure}{0.32\textwidth}
\includegraphics[width=\textwidth]{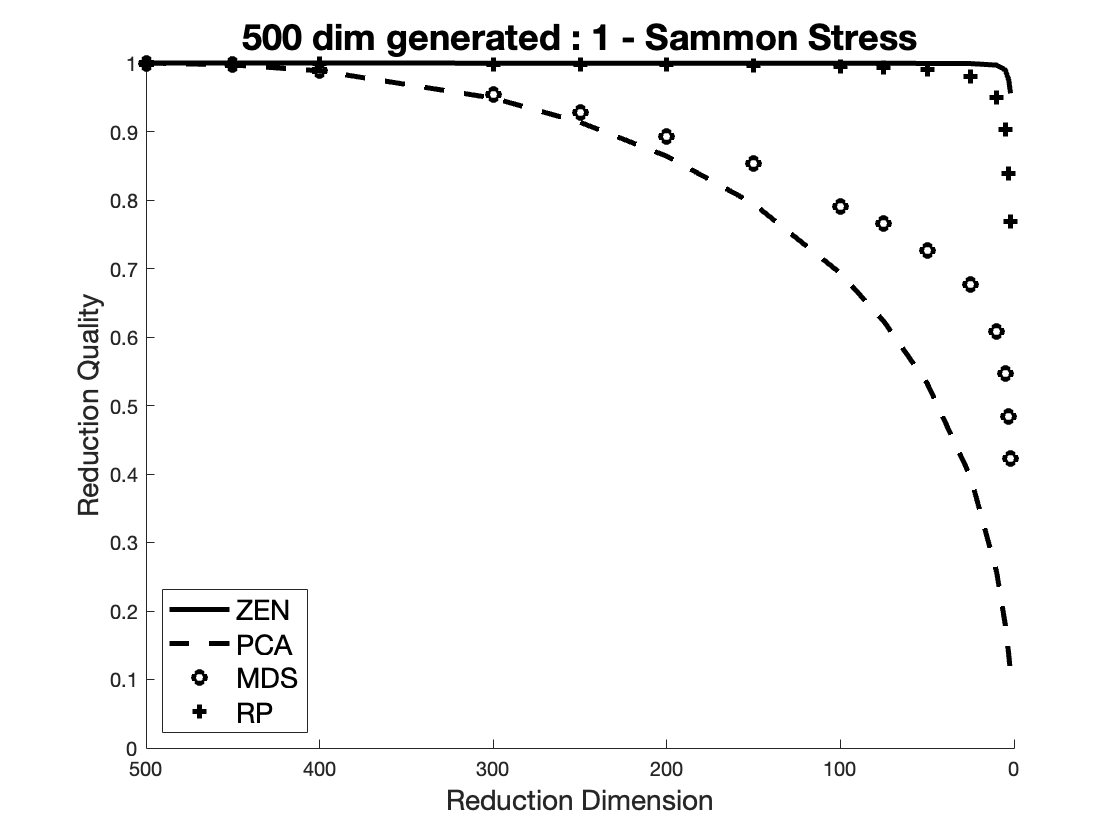}\hfill
\caption{Sammon stress}
\end{subfigure}
\begin{subfigure}{0.32\textwidth}
\includegraphics[width=\textwidth]{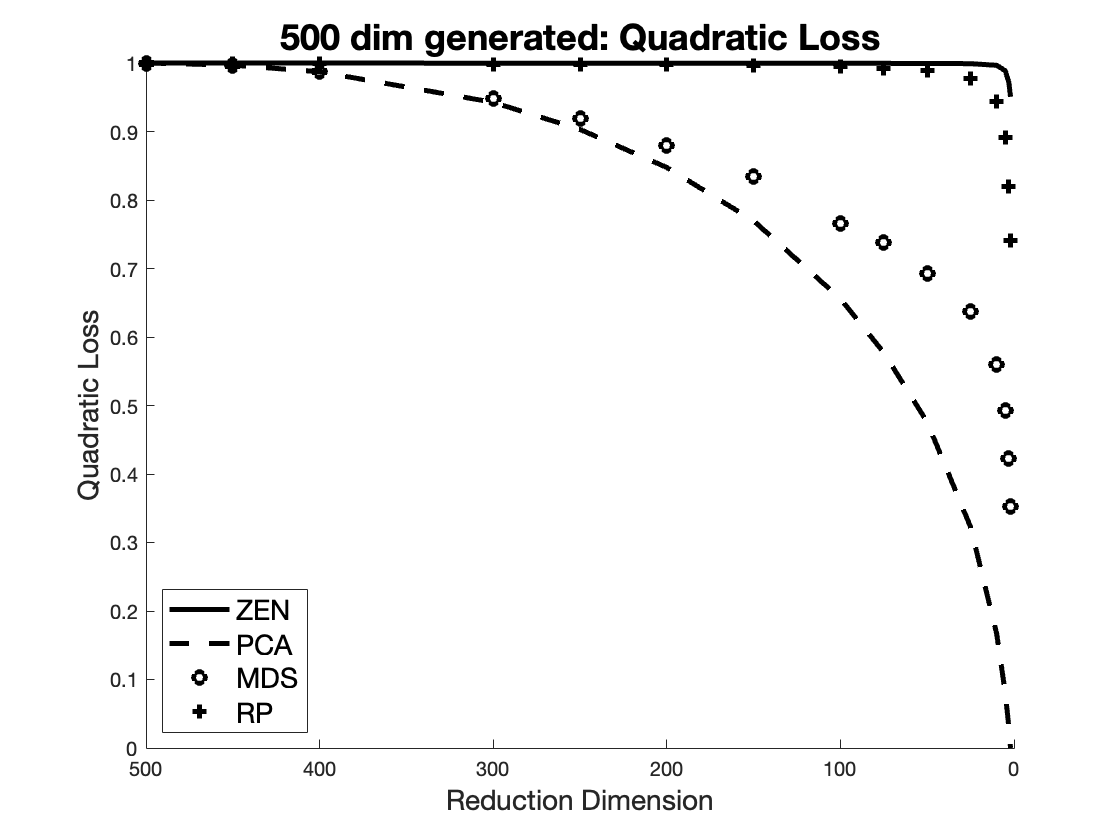}
\caption{Quadratic loss}
\end{subfigure}
\begin{subfigure}{0.32\textwidth}
\includegraphics[width=\textwidth]{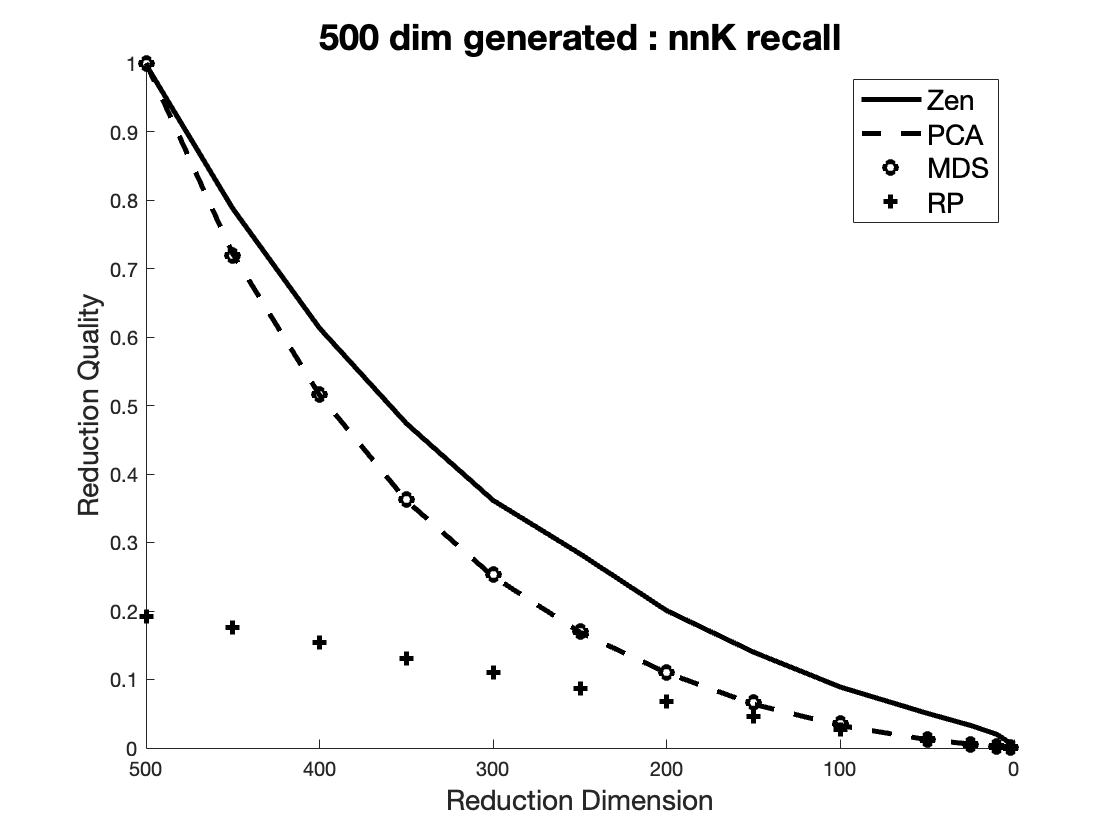}\hfill
\caption{kNN Recall}
\end{subfigure}
\caption{Quality metrics for 500 dimensional Euclidean spaces reduced to between 500 and 2 dimensions.}
\label{fig_500dim_quality}
\end{figure}
}

We repeat the above analysis for a  higher dimension space.  Figure \ref{fig_500_dim_shepards} gives Shepard plots for \nsimp \zen, PCA and RP reduced to 400 dimensions.

Even although the reduction is again to 80\% of the original dimensions, it can be seen that the higher dimensions give relatively better outcomes, as predicted by Johnson-Lindenstrauss. This is particularly evident in the quality charts shown in Figure \ref{fig_500dim_quality}, where it can be further seen that, as dimensions are reduced, RP starts to give equal performance to both PCA and MDS in all quality  measures, and is better for Sammon stress and Quadratic Loss. However, from our perspective, the key result is substantially better performance of the  \nsimp \zen transform for all quality measures across all dimensions.

\subsection{Euclidean spaces from other applications}
\label{sec_sub_exp_real_euc}
\begin{figure}[tbp]
\includegraphics[width=0.32\textwidth]{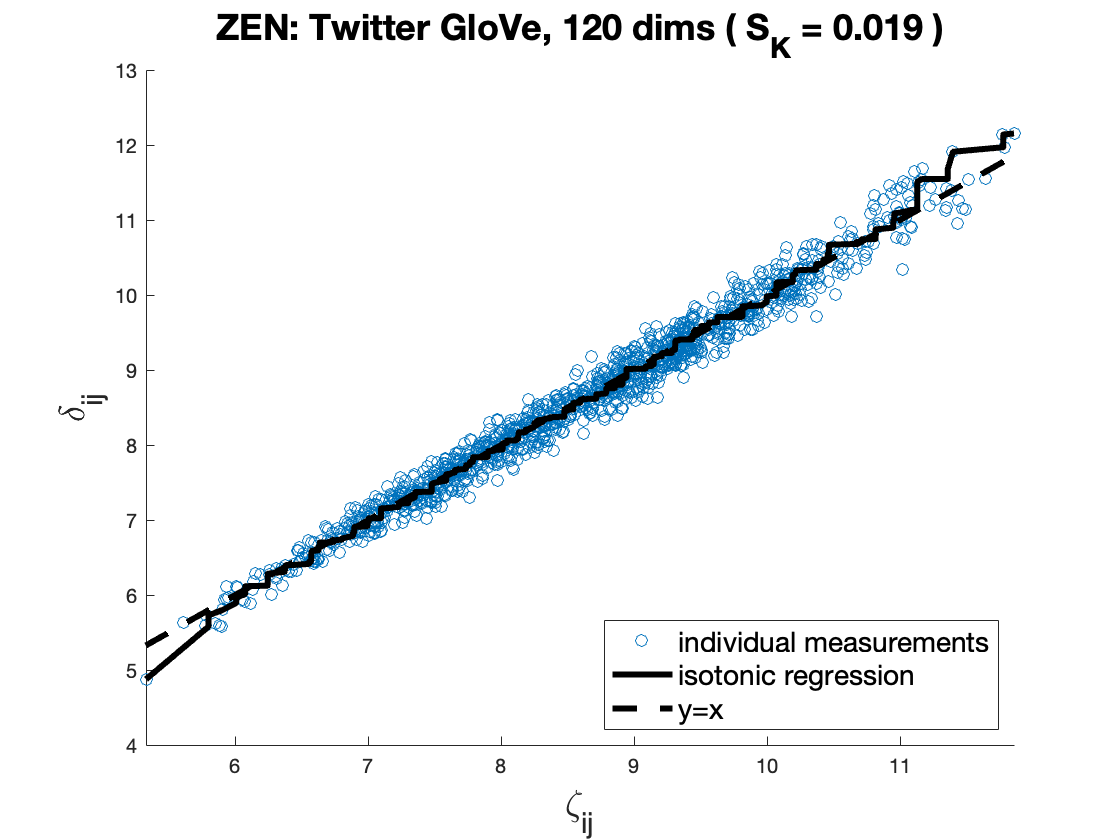} \hfill
\includegraphics[width=0.32\textwidth]{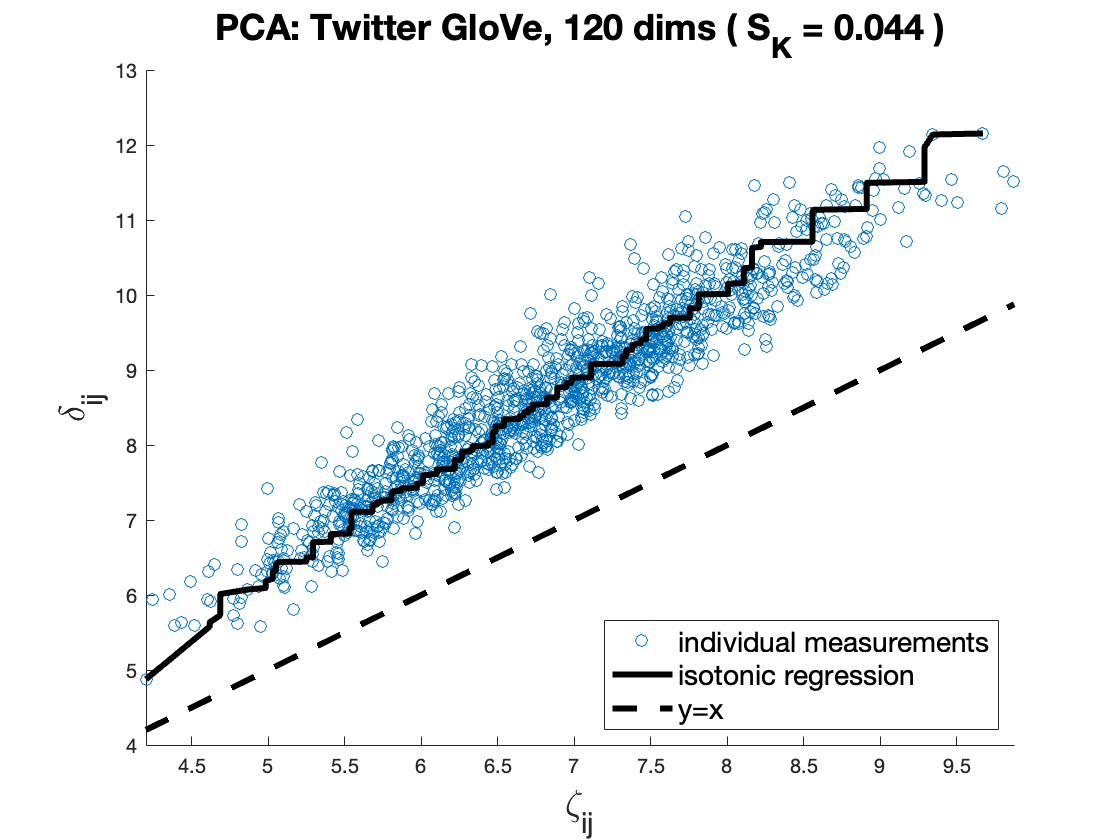} \hfill
\includegraphics[width=0.32\textwidth]{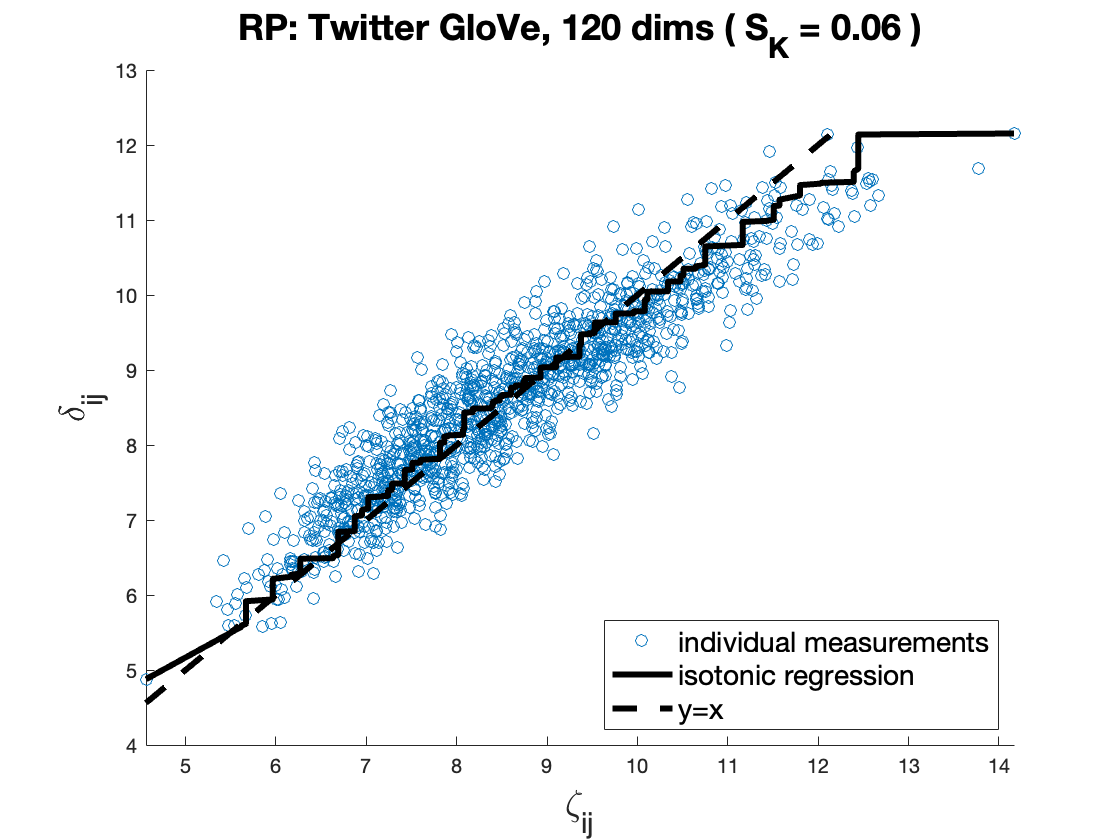} \hfill
\caption{\nsimp \zen, PCA and RP transforms mapping GloVe from 200 to 120 dimensions. MDS gives, as before, very similar results to PCA and is omitted from the Figure.}
\label{fig_glove_shepards}
\end{figure}

\begin{figure}[tbp]
\centering
\begin{subfigure}{0.33\textwidth}
\includegraphics[width=\textwidth]{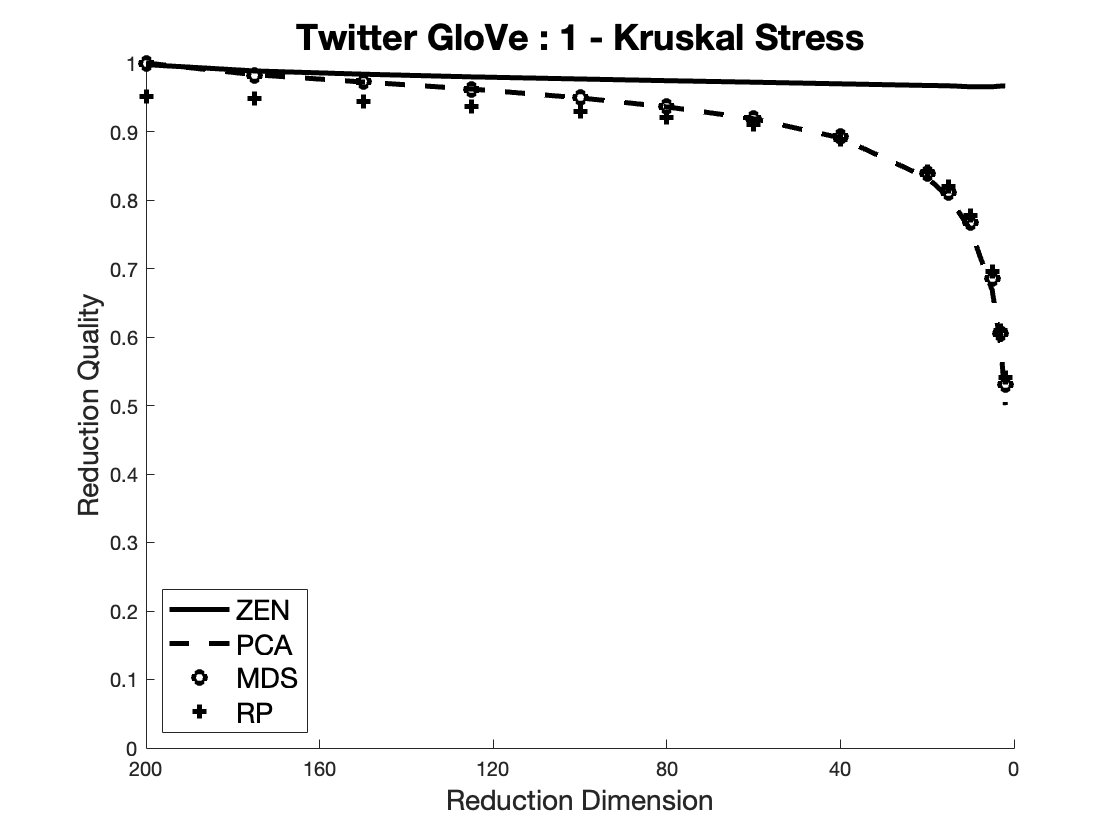}\hfill
\caption{Kruskal stress}
\end{subfigure}
\begin{subfigure}{0.32\textwidth}
\includegraphics[width=\textwidth]{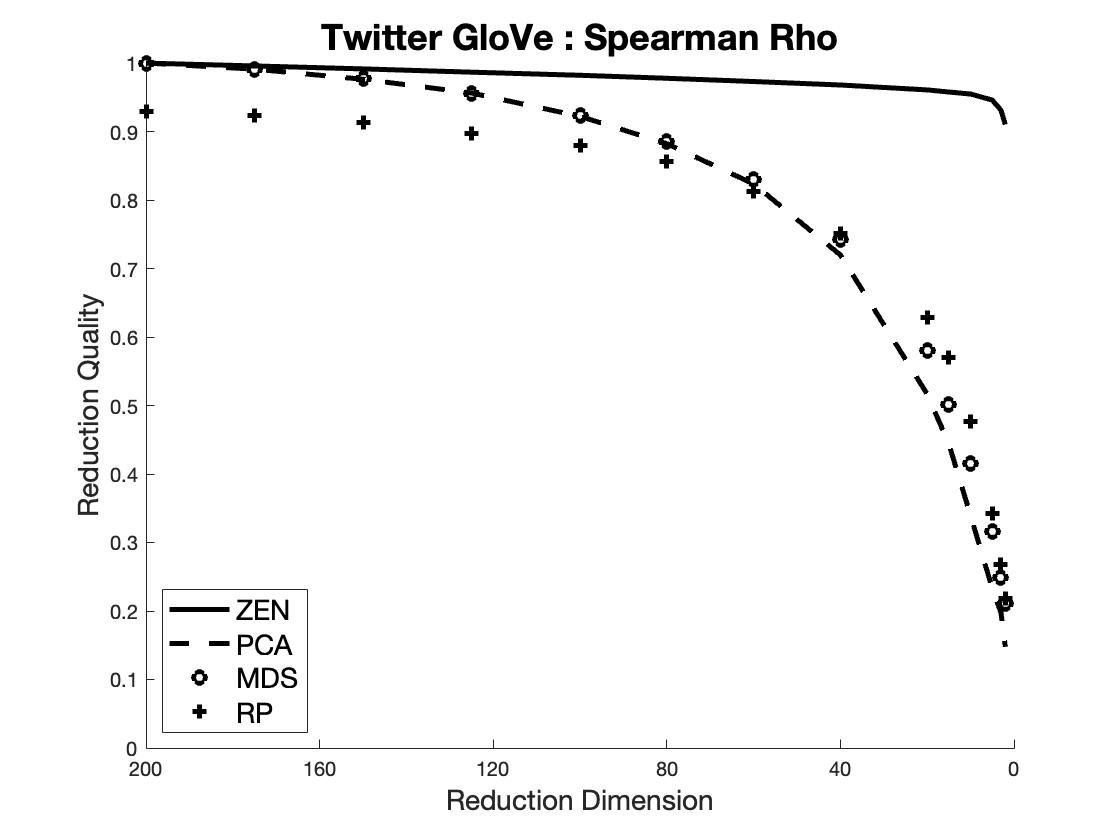}
\caption{Spearman Rho}
\end{subfigure}
\begin{subfigure}{0.32\textwidth}
\includegraphics[width=\textwidth]{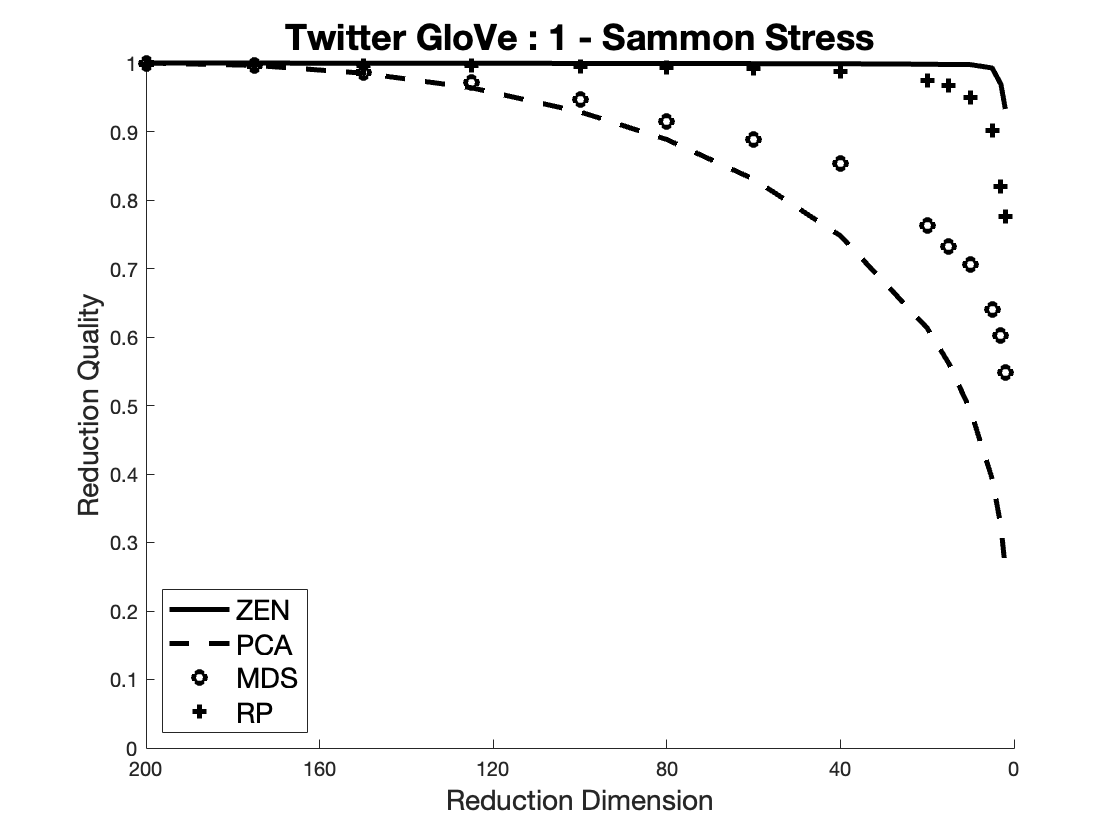}\hfill
\caption{Sammon stress}
\end{subfigure}
\begin{subfigure}{0.32\textwidth}
\includegraphics[width=\textwidth]{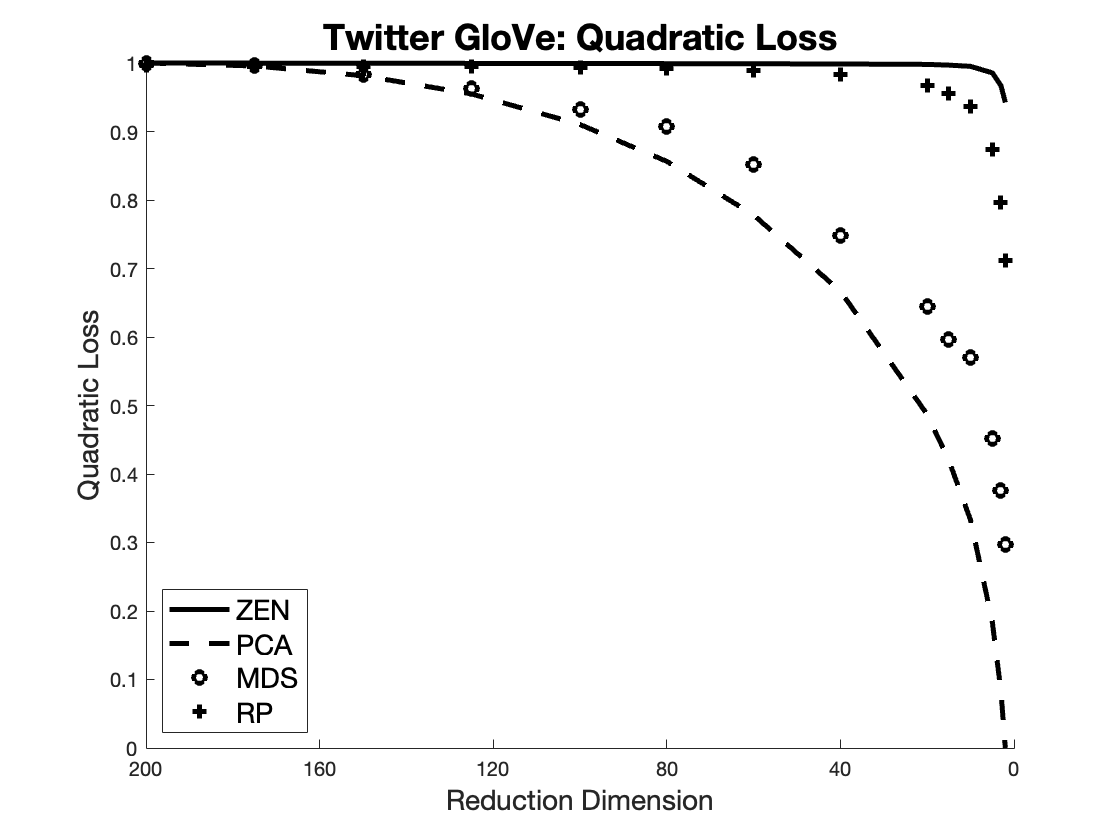}
\caption{Quadratic loss}
\end{subfigure}
\begin{subfigure}{0.32\textwidth}
\includegraphics[width=\textwidth]{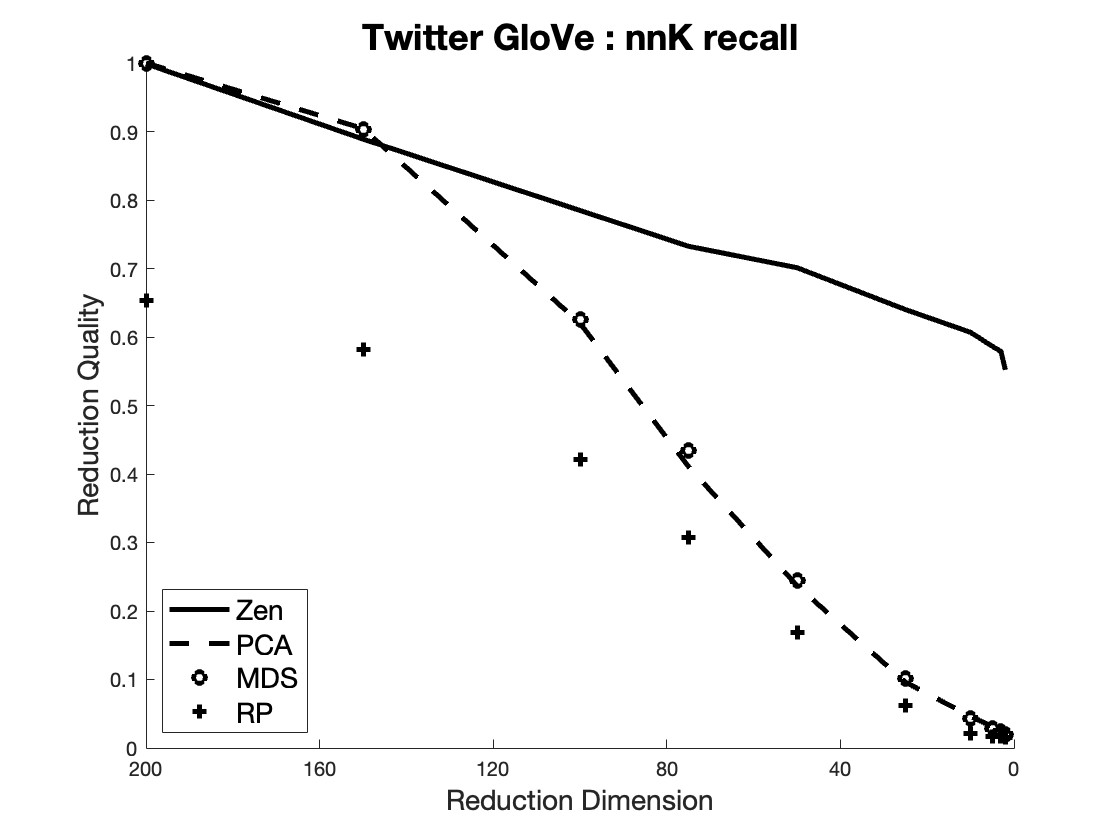}\hfill
\caption{kNN Recall}
\end{subfigure}
\caption{Quality metrics for  Twitter GloVe reduced  to between 200 and 2 dimensions.}
\label{fig_glove_quality}
\end{figure}
 In this section we examine the  reduction of some high-dimensional Euclidean data sets produced in the application of representational techniques to real-world data. These spaces are known to lie within complex manifolds of the Euclidean space in which they are embedded, and are therefore  better subjects for dimensionality reduction than the uniform spaces of the previous section.
  
\subsubsection{Twitter GloVe 200}
GloVe \cite{pennington2014glove} is an unsupervised learning algorithm for obtaining vector representations for words, with the  intent that the distance between vectors is semantically significant.
Twitter GloVe  is the outcome of this algorithm applied to $2 \times 10^9$ individual short texts from Twitter, from which $10^6$ individual tokens are assigned vector values. The achieved semantic similarity is quite striking, for example the closest vectors to the term \emph{frog} are, in order: \emph{frogs, toad, litoria and leptodactylidae}.

Linear substructures are also preserved, for example the relative difference between the word pair (\emph{man}, \emph{king}) is similar to that between the pair (\emph{woman}, \emph{queen}). The authors have considered both Euclidean and Cosine distances over the records and have found no significant advantage to either metric. Pre-trained word vectors 
with 25, 50, 100 and 200 dimensions are available online\footnote{\url{https://nlp.stanford.edu/projects/glove_}}, here we have used the 200 dimension version.

The same experiments as above were performed on the Twitter GloVe data set. Before creating the Shepard plots, PCA was used to find the number of dimensions necessary to explain 80\% of the variance (according to Eq. \ref{eqn_pca_variance}). This value is 120, significantly less than the 160 dimensions that would be required for uniform data. This was selected as the reduction dimension to illustrate using Shepherd plots in  in Figure \ref{fig_glove_shepards}. As can be seen, in this context the \nsimp $Zen$ transform   performs by far the best of those tested, the Kruskal stress now being less than half of that obtained using PCA or MDS.

The plots of transform quality with reducing dimensions are shown in Figure \ref{fig_glove_quality}. Some of these results are quite startling: the \nsimp \zen transform is almost always best, for all quality measures, and in some cases maintains high quality values down to tiny reduction dimensions compared to the other techniques.

The reason for this relative increase in performance is, we believe, due to the nature of the manifold in which the data lie.  PCA, MDS and RP all  produce linear transforms of this manifold, whereas \nsimp \zen's transform is non-linear, allowing it to respect the geometry of the original manifold with respect to each object mapped into the lower dimension. The \nsimp \zen transform shown in Figure \ref{fig_glove_shepards} is produced with reference to only 120 reference objects, as opposed to the $1,000$ objects used for PCA , and in Figure \ref{fig_glove_quality} the transform at each dimension is produced using only that number of reference points to represent the manifold in which the domain lies. Even for example with a random selection of only 20 reference objects, mapping to 20 dimensions, it is clear that \nsimp \zen performs far better than the linear transforms which use many more reference objects, even when mapped to many more dimensions.

%

\subsubsection{MirFlickr 1M / Alexnet}
\label{sec_subsec_MF_Alex}
\begin{figure}[tbp]
\includegraphics[width=0.32\textwidth]{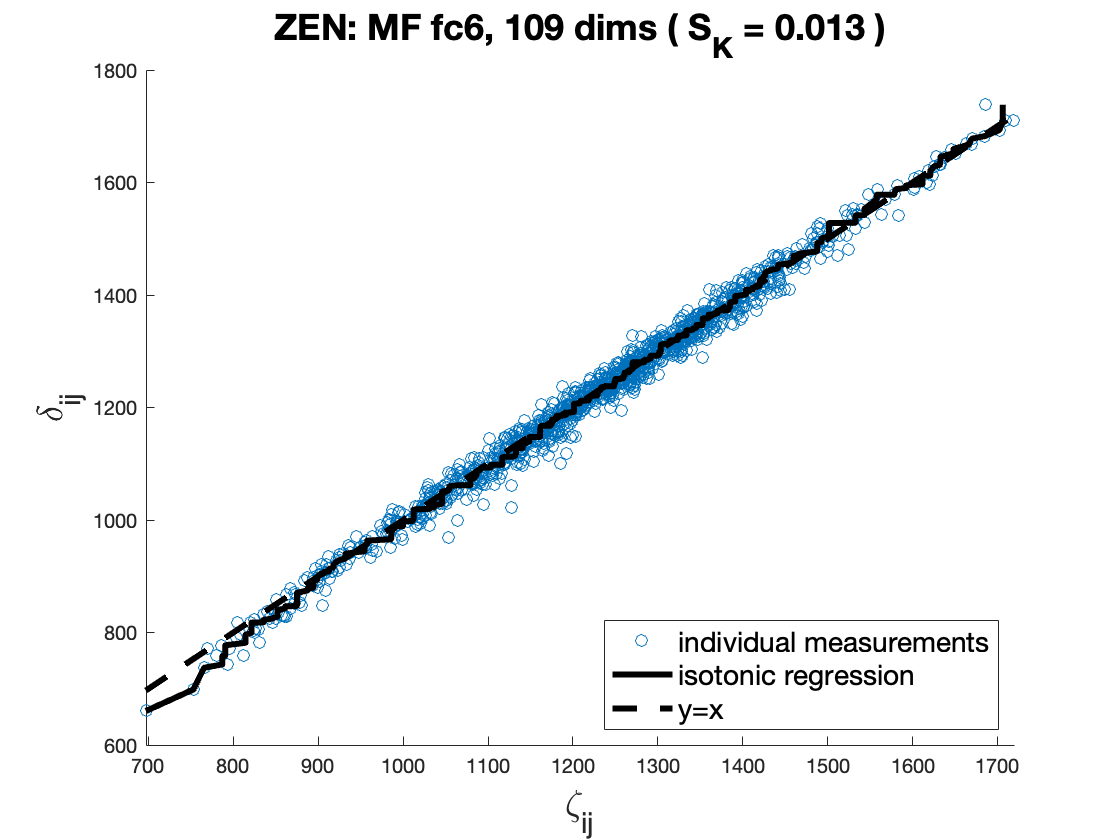}
\includegraphics[width=0.32\textwidth]{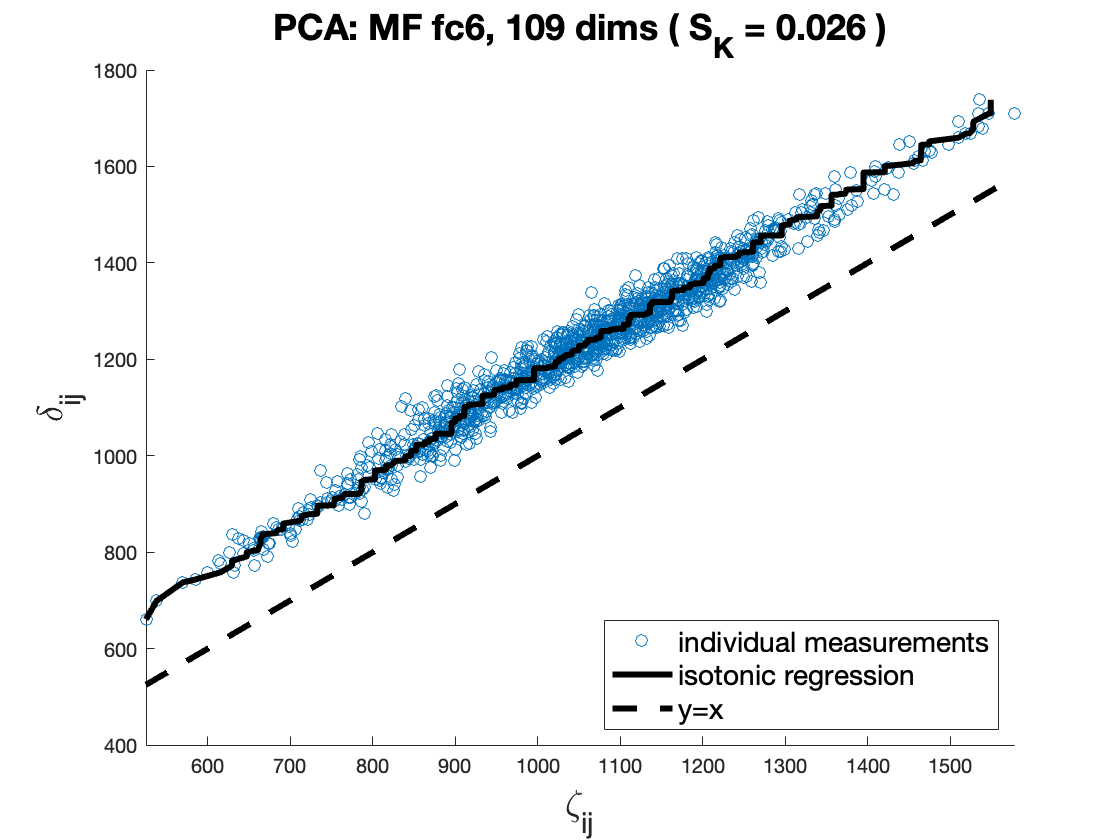}
\includegraphics[width=0.32\textwidth]{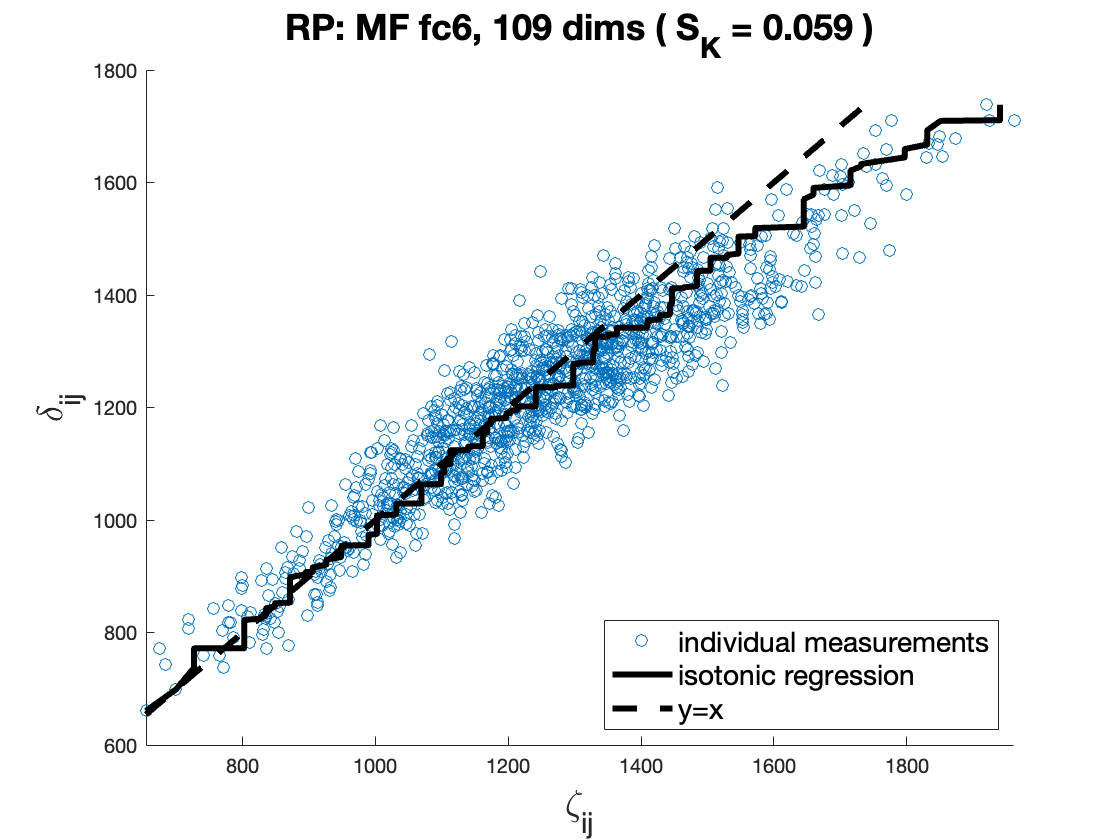}
\caption{MirFlickr fc6 representations, reduced from 4,096 dimensions to 109.}
\label{fig_fc6_shepards}
\end{figure}
\begin{figure}[tbp]
\centering
\begin{subfigure}{0.32\textwidth}
\includegraphics[width=\textwidth]{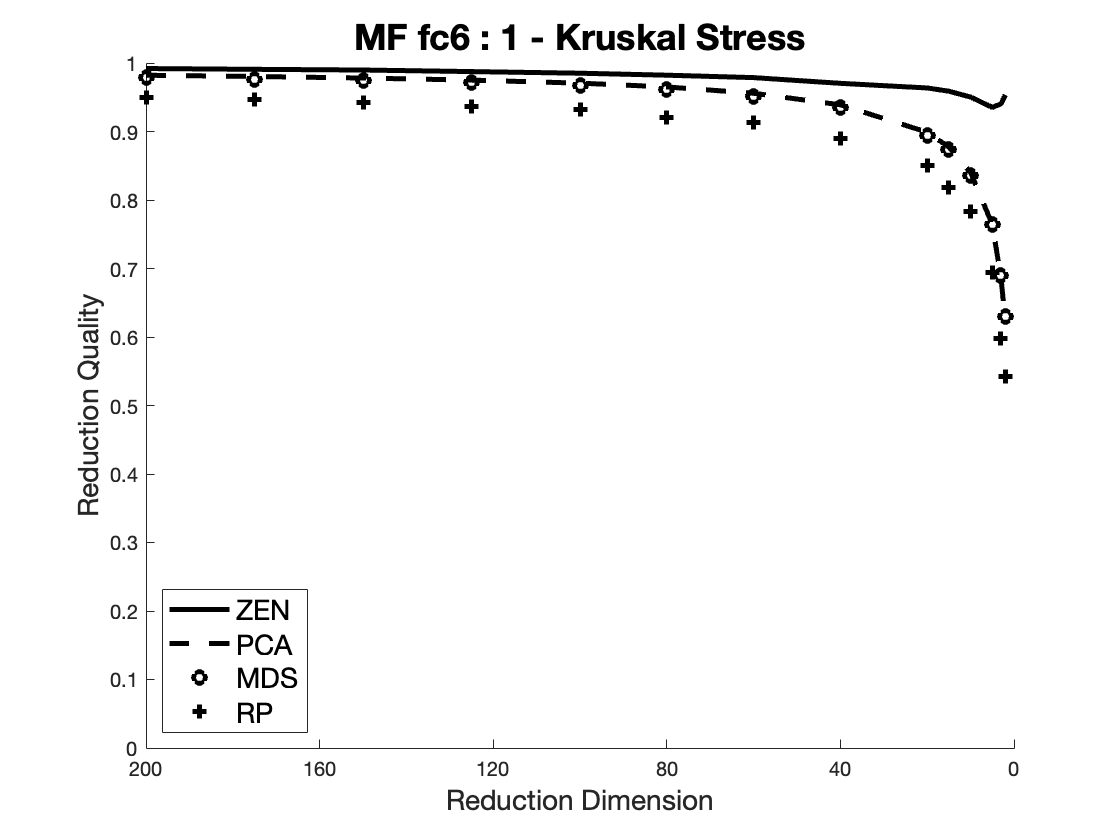}\hfill
\caption{Kruskal stress}
\end{subfigure}
\begin{subfigure}{0.32\textwidth}
\includegraphics[width=\textwidth]{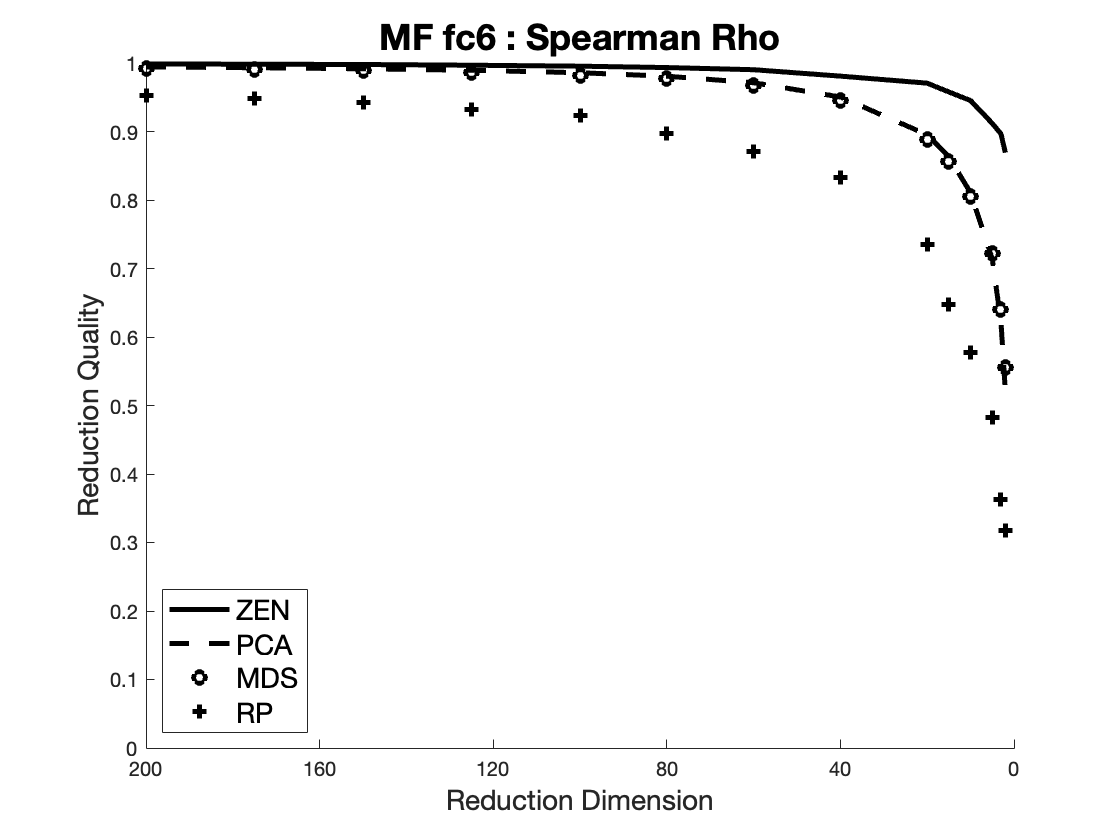}
\caption{Spearman Rho}
\end{subfigure}
\begin{subfigure}{0.32\textwidth}
\includegraphics[width=\textwidth]{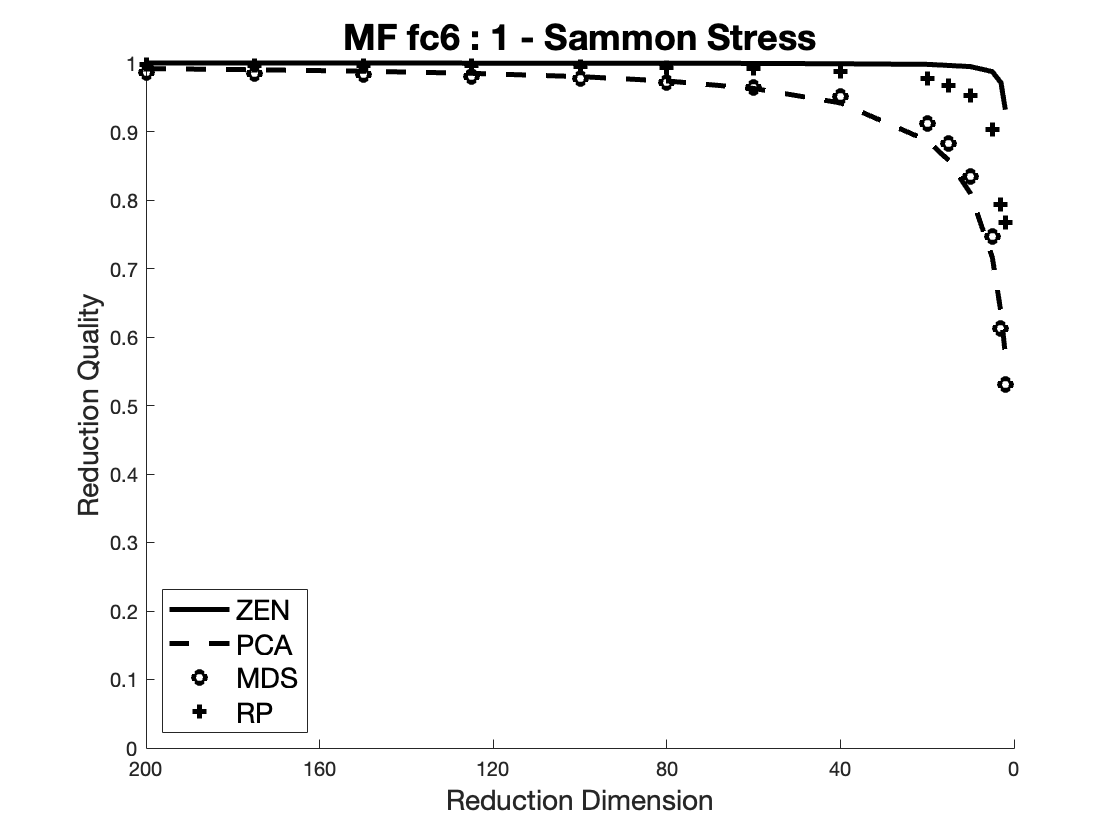}\hfill
\caption{Sammon stress}
\end{subfigure}
\begin{subfigure}{0.32\textwidth}
\includegraphics[width=\textwidth]{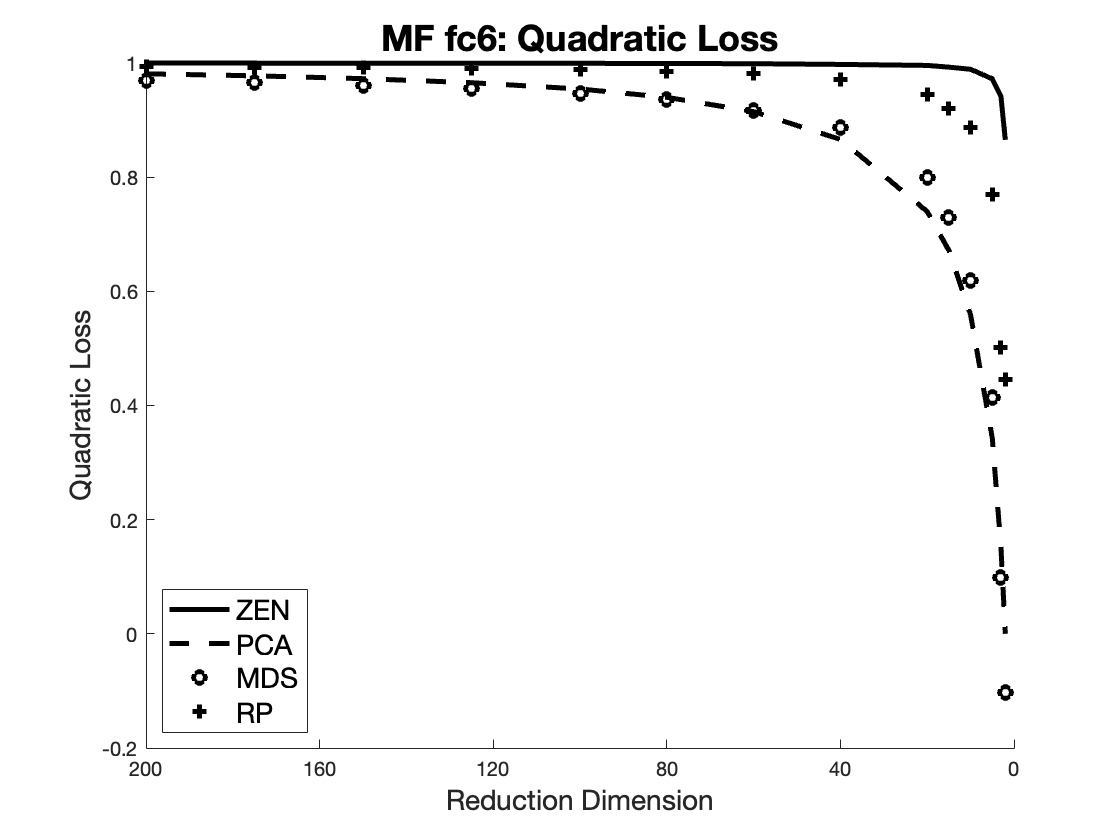}
\caption{Quadratic loss}
\end{subfigure}
\begin{subfigure}{0.32\textwidth}
\includegraphics[width=\textwidth]{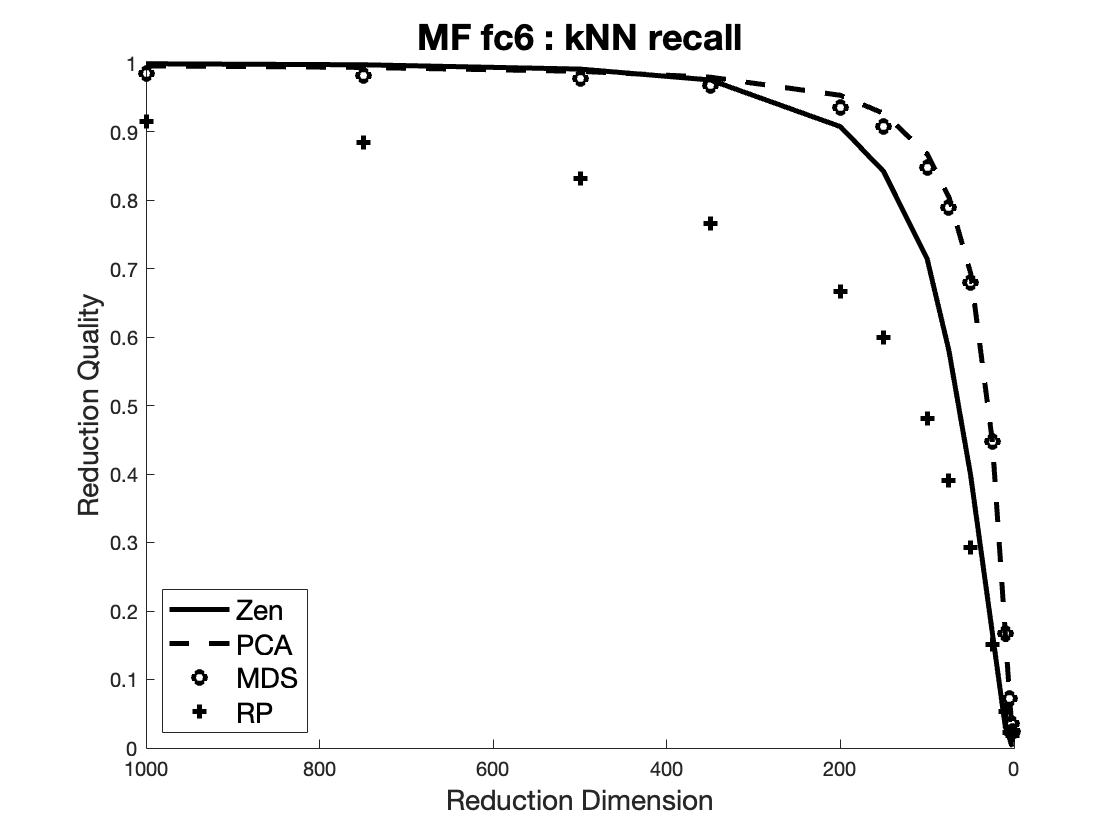}\hfill
\caption{kNN Recall}
\end{subfigure}
\caption{Quality metrics for  Mirflickr fc6. For most plots, the X-axis (dimensionality of the reduction) goes between 200 and 2, as there is very little loss of quality for any of the measurements above 200 dimensions. The recall experiment is run from 1,000 dimensions down to 2.}
\label{fig_fc6_quality}
\end{figure}
This data derives from the AlexNet convolutional neural network \cite{alexnet} applied to the set of one million images available from the MirFlickr project \cite{huiskes08}. While this network is starting to be considered a little dated, as its categorisation performance is less good than some more modern networks, the combination gives  a highly available network applied to a highly available large image collection: the purpose here is just to provide a realistic set of data, with meaningful semantics, in high dimensions. The data is taken from the first fully-connected  (\emph{DeCAF, fc6}) layer of the network, after the initial convolutional layers and before the remaining fully-connected layers of the network. Euclidean distance over this representation has been shown to give an excellent proxy to image similarity even for categories of image that are not included in the original classification \cite{decaf,decaf_novak}. In this experiment we apply Euclidean distance to the data as extracted, and in Section \ref{sec_subsec_mf_relu_cos} we use the same data with Cosine distance applied to the post-RELU filtered version.

The data  used for this section  therefore comprises 4,096 Euclidean dimensions, including both  positive and negative values, and lies within a complex manifold where the PCA eigenvalues determine that only 109 dimensions are required to explain 80\% of its variance (Eq. \ref{eqn_pca_variance}). As before we compare RP, PCA and \nsimp $Zen$ at this reduction dimension, the results of which are shown in Figure \ref{fig_fc6_shepards}.  Again MDS  and PCA give almost  indistinguishable outcomes for this test.

Again, it is visually evident that the $Zen$ function gives a much tighter fit to the true distances than either RP or PCA, borne out by the lower value of Kruskal stress, and is generally  much closer than either to the true distance.

For the first time, we show a result where the  \nsimp \zen transform is less good than either PCA or MDS across the range of reduced dimensions: while \nsimp \zen remains the highest-scoring mechanism for almost all the quality measures across all reduction dimensions, Figure \ref{fig_fc6_quality} shows that the recall test for \nsimp \zen is  worse for recall than either PCA or MDS when the reduction dimension is less than around 300. This is initially surprising, as the Spearman Rho measure, which tests the preservation of ordering among distances, shows a better performance for \nsimp \zen. The phenomenon being displayed is that \zen's performance in this test is less good over very small distances, which we examine in more detail in Section \ref{sec_subsec_very_small_distances}.

%
%

\subsection{Other Hilbert spaces - Cosine Distance}
\label{sec_sub_exp_cos}
Cosine similarity is frequently applied over large high-dimensional spaces in the context of Information Retrieval \cite{manning_raghavan_schtze_2008}. As noted in Appendix \ref{appendix_hilbert_metrics}, the most common interpretation of Cosine distance, the complement of the normal cosine similarity (the cosine of the angle between vectors) is not a proper metric. The angle itself does give a proper metric,  and can be used as a proxy which gives the same ordering within a space. To avoid the potentially expensive $\arccos$ function, Euclidean distance measured over the end-points of $\ell_2$-normalised vectors is another proper metric, with the same ordering, and which also has the Hilbert properties. While the Euclidean metric is used, such spaces are however very different from general Euclidean spaces in terms of the distribution of distances.

\subsubsection{ANN SIFT}

The  ANN\_SIFT1M \cite{ann_sift}
 dataset comprises vectors of Angular Quantisation-based Binary Codes (AQBC) \cite{aqbc} deriving from the SIFT \cite{sift} feature analysis of one million images. The similarity of such representations is intended to be assessed using Cosine similarity.
 SIFT is no longer state-of-the-art in image similarity, but the benchmark is still widely used and provides a valuable set of data for this purpose. In these experiments the 128-dimensional data is $\ell_2$-normalised and Euclidean distance is used to provide  a semantic proxy for Cosine distance.
 
 As before, PCA and Eq. \eqref{eqn_pca_variance} are used to determine the number of dimensions required to explain 80\% of the variance in distances, which turns out to be only 28 dimensions. Figure \ref{fig_sift_ann_sift_shepards} shows the Shepard plots of this data for \nsimp \zen, PCA and RP. In this case there is little visual difference between the plots for PCA and \nsimp \zen, although the Kruskal score for \nsimp \zen is  significantly better. 
\begin{figure}[tbp]
\includegraphics[width=0.32\textwidth]{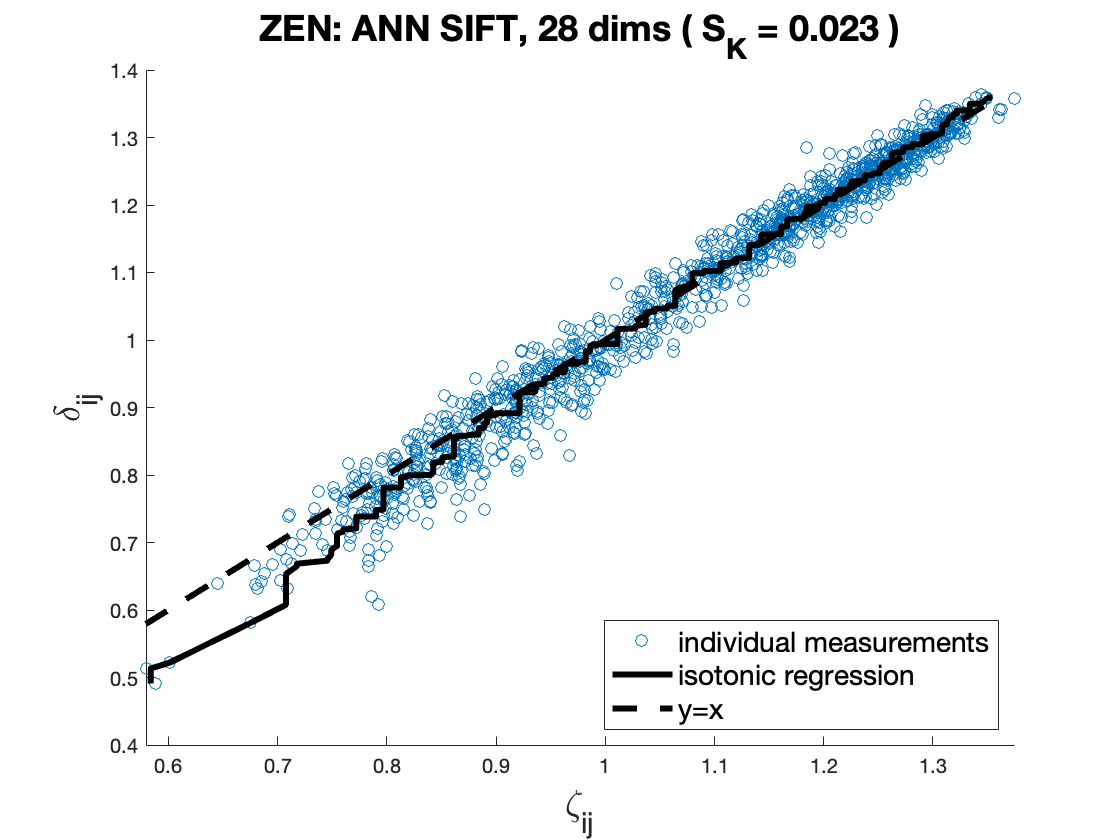} \hfill
\includegraphics[width=0.32\textwidth]{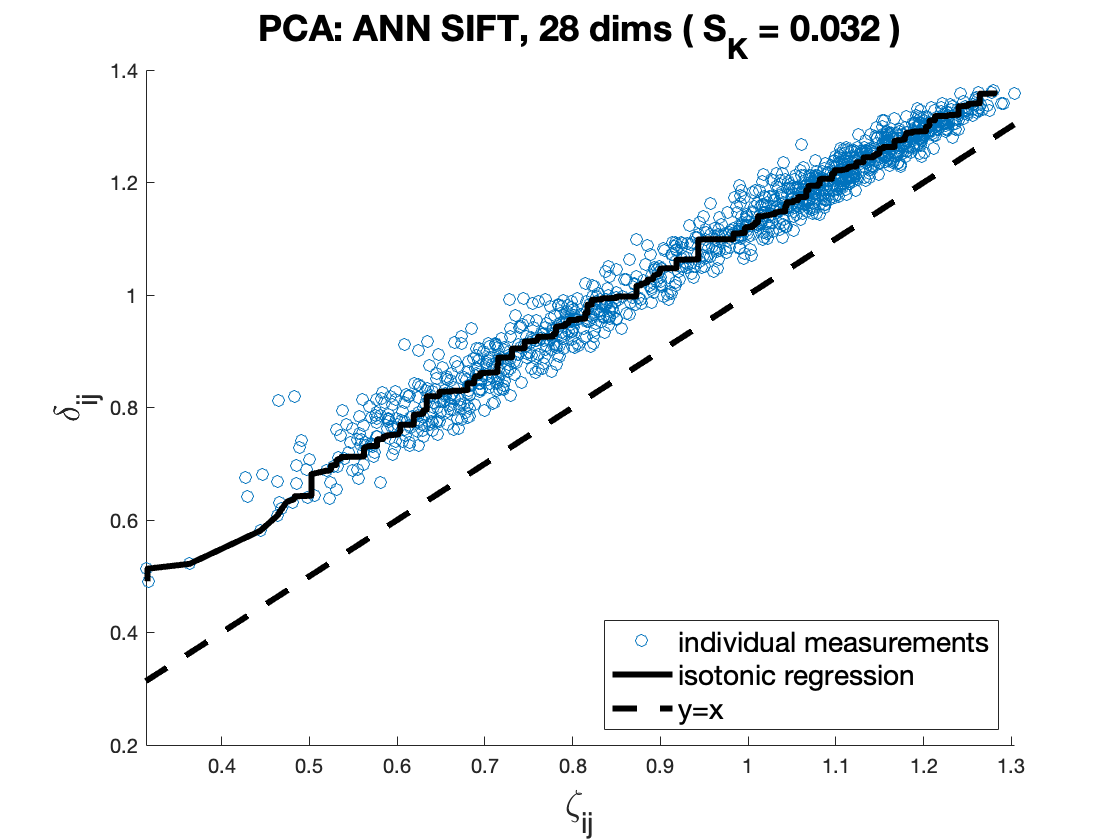} \hfill
\includegraphics[width=0.32\textwidth]{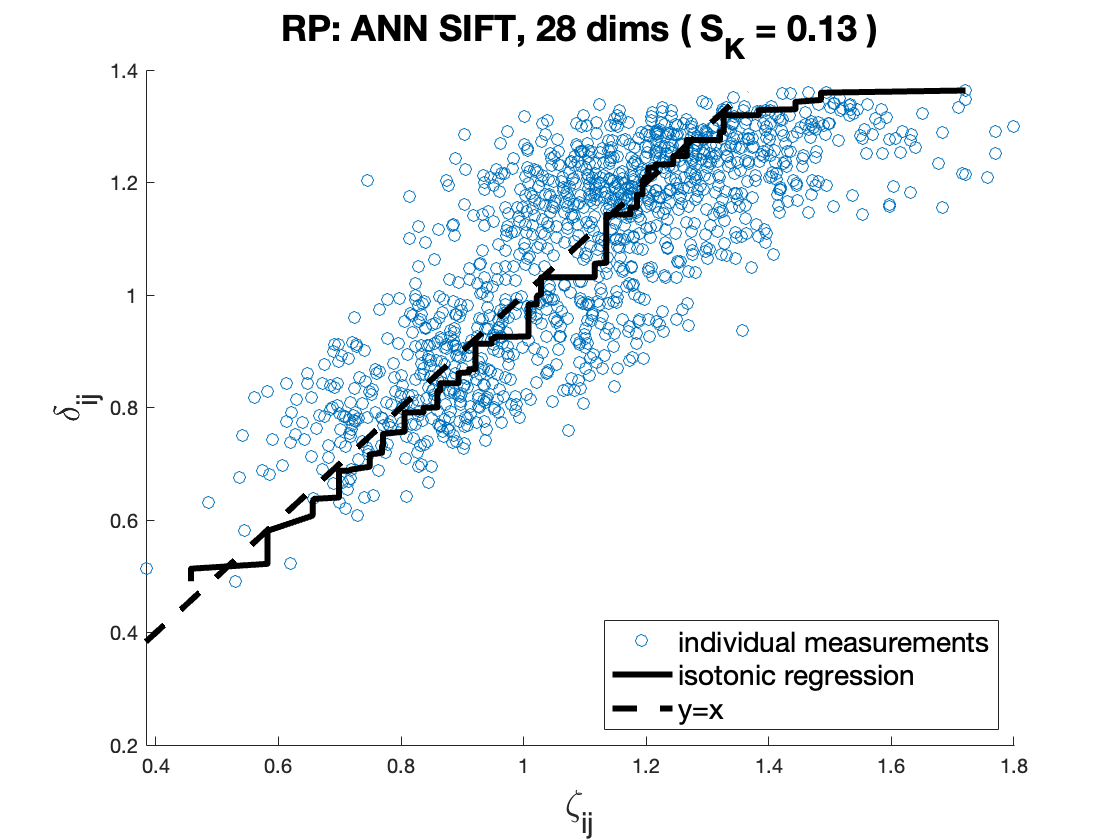}
\caption{ANN-SIFT $\ell_2$-normed representations, reduced from 128 dimensions to 28.}
\label{fig_sift_ann_sift_shepards}
\end{figure}

\begin{figure}[tbp]
\centering
\begin{subfigure}{0.32\textwidth}
\includegraphics[width=\textwidth]{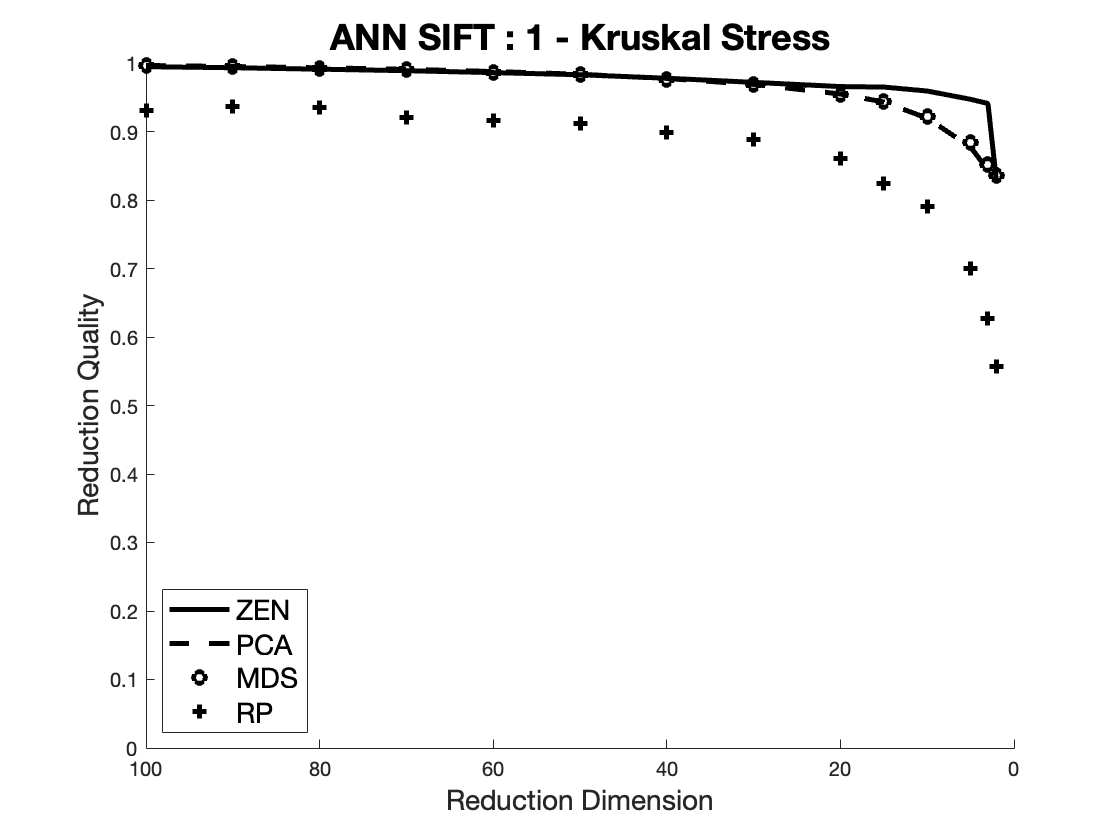}\hfill
\caption{Kruskal stress}
\end{subfigure}
\begin{subfigure}{0.32\textwidth}
\includegraphics[width=\textwidth]{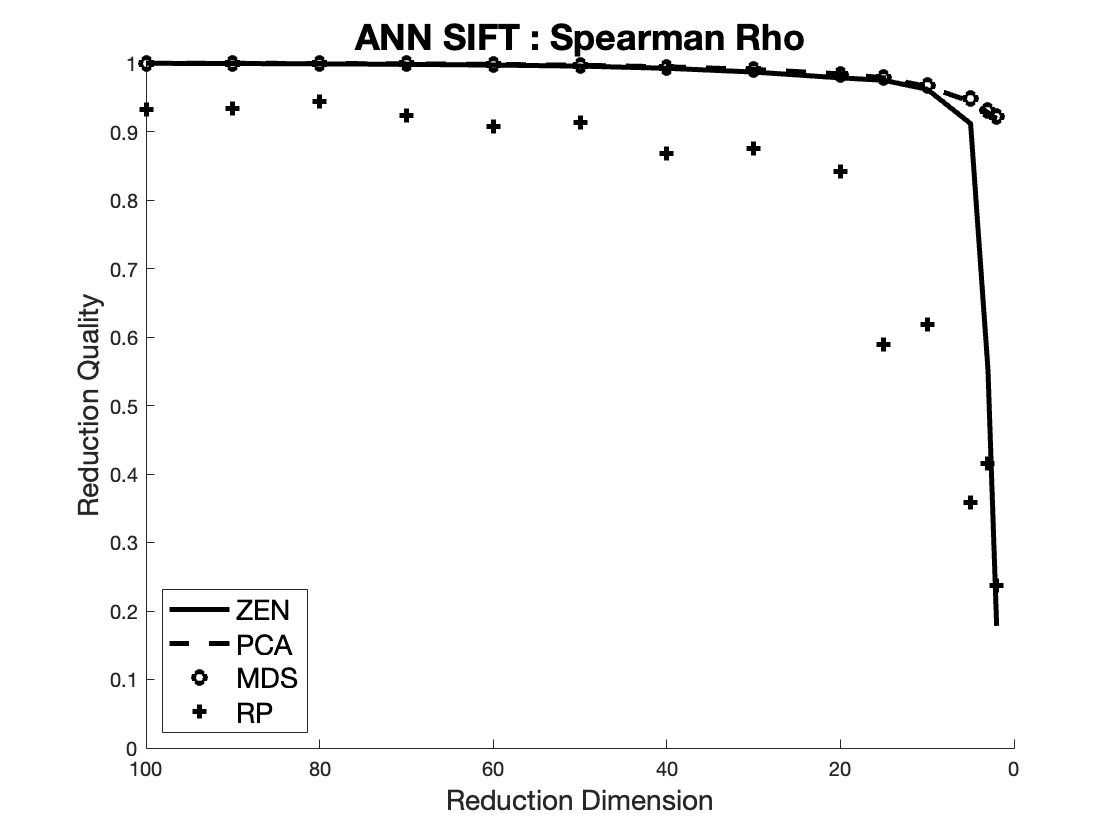}
\caption{Spearman Rho}
\end{subfigure}
\begin{subfigure}{0.32\textwidth}
\includegraphics[width=\textwidth]{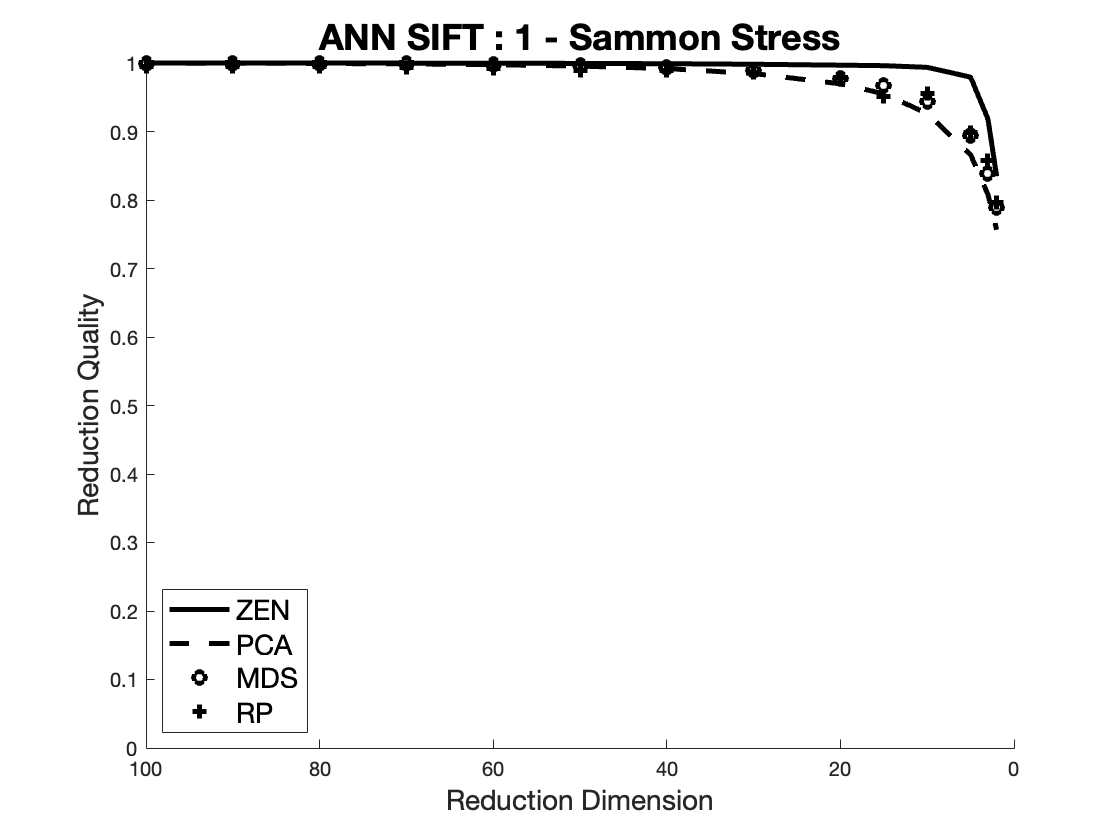}\hfill
\caption{Sammon stress}
\end{subfigure}
\begin{subfigure}{0.32\textwidth}
\includegraphics[width=\textwidth]{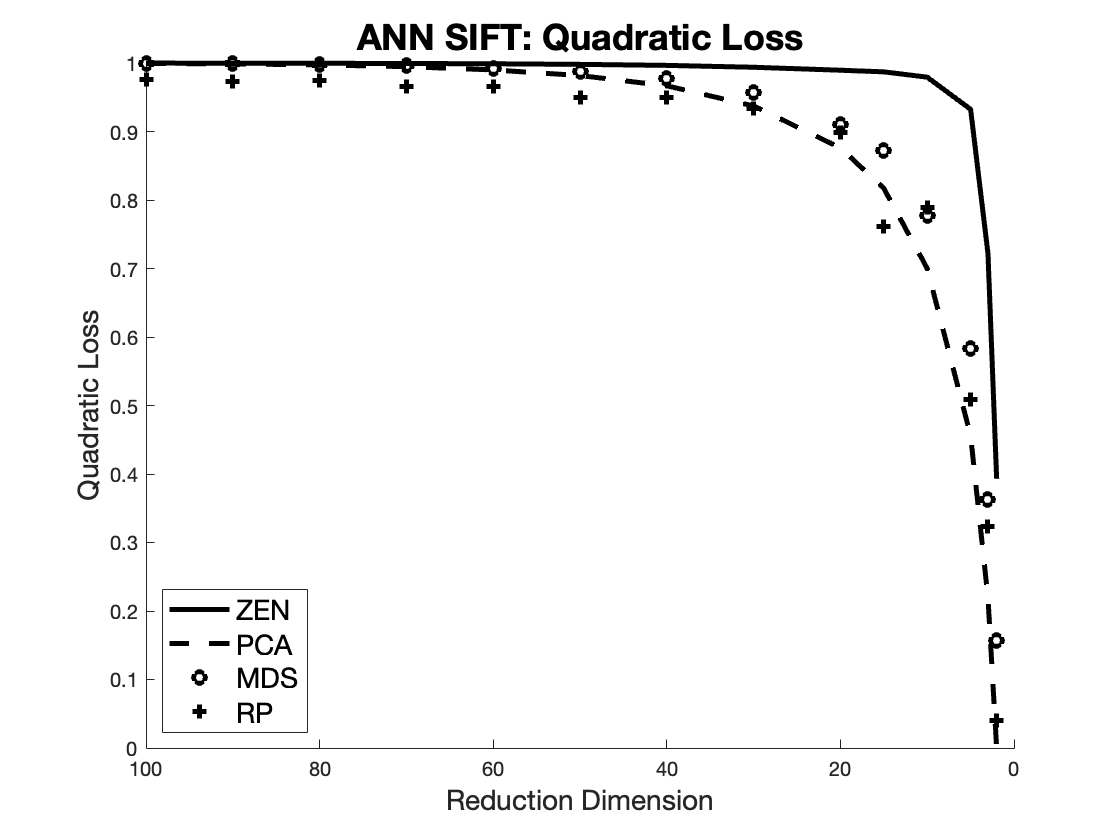}
\caption{Quadratic loss}
\end{subfigure}
\begin{subfigure}{0.32\textwidth}
\includegraphics[width=\textwidth]{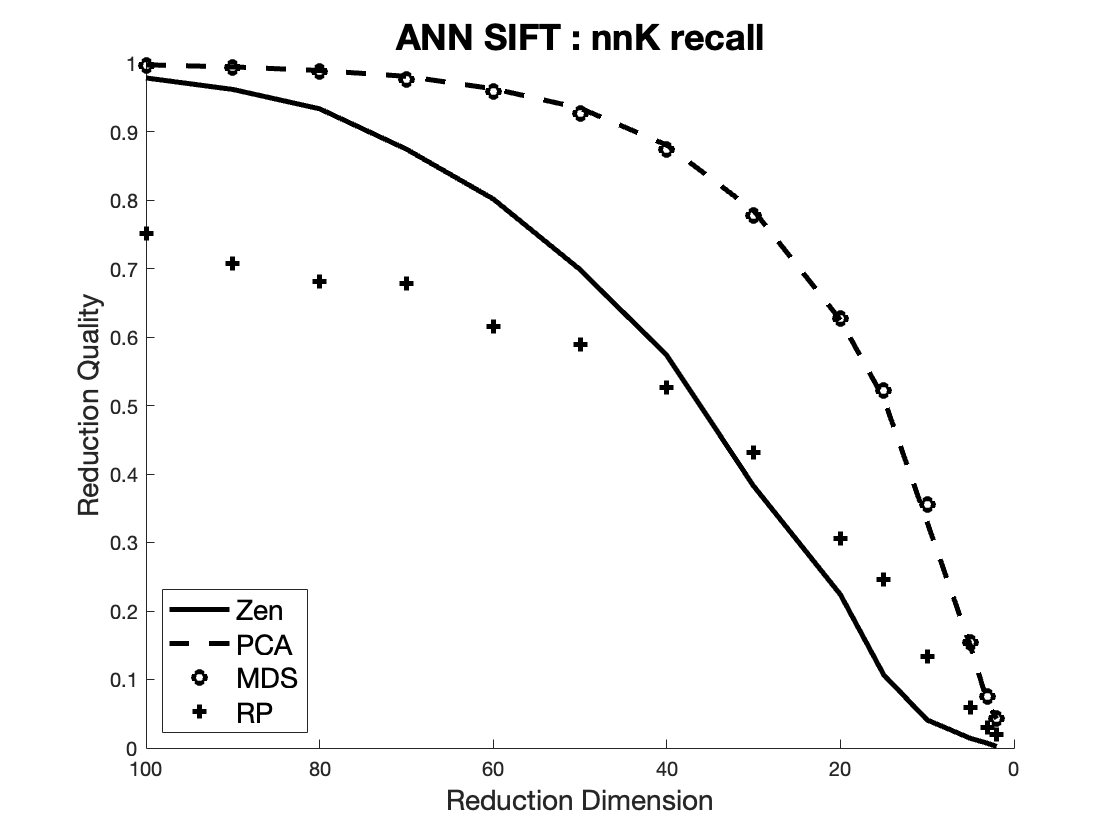}\hfill
\caption{kNN Recall}
\end{subfigure}
\caption{Quality metrics for  ANN SIFT reduced from 128 to between 100 and 2 dimensions.}
\label{fig_ann_sift_quality}
\end{figure}

Figure \ref{fig_ann_sift_quality} shows the  quality measures for the data, for dimensions reducing from 100 down to 2 as, with the exception of RP, there is almost no quality loss at above 100 dimensions. As can be seen, in these tests \nsimp \zen performs best for Kruskal stress, Sammon stress and quadratic loss, but is marginally less good than either PCA or MDS at lower dimensions for Spearman Rho, and is strikingly less good than either for recall at all dimensions.

We do not have a categorical reason for these observations, but believe that there are two main reasons for this difference in performance: first, the data set to start with has a relatively low intrinsic dimensionality, and second, the data  lies on a relatively regular linear manifold within the representational space. These observations are certainly consistent with the relatively small number of dimensions shown by PCA to capture the majority of the distance variance. The relatively poor performance of RP is also explained by these observations.

\subsubsection{MirFlickr  1m / Alexnet}
\label{sec_subsec_mf_relu_cos}

\begin{figure}[tbp]
\includegraphics[width=0.32\textwidth]{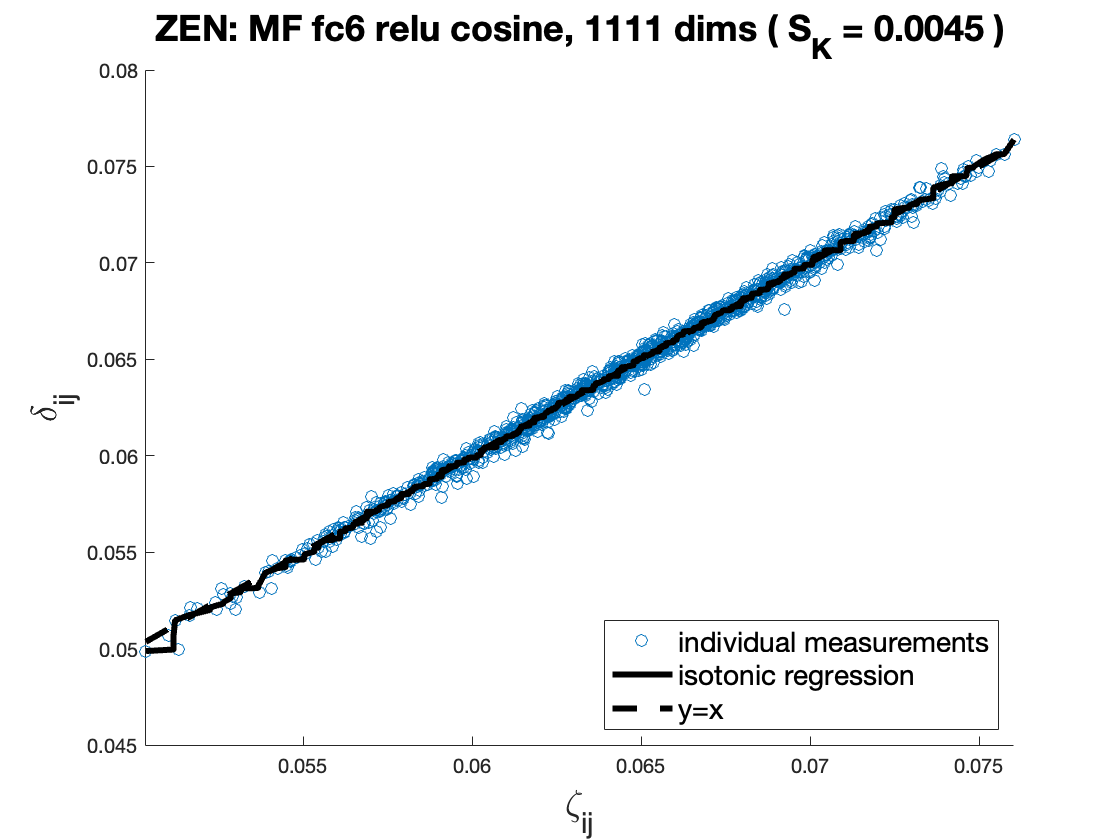} \hfill
\includegraphics[width=0.32\textwidth]{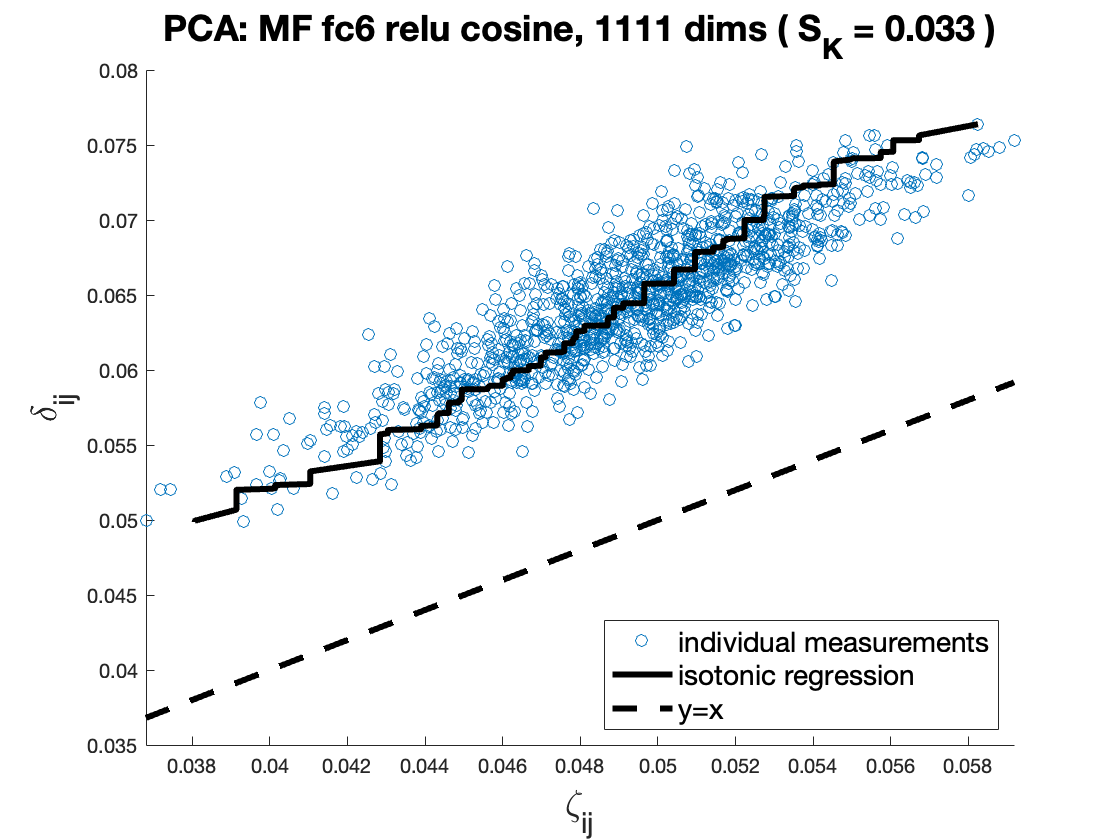} \hfill
\includegraphics[width=0.32\textwidth]{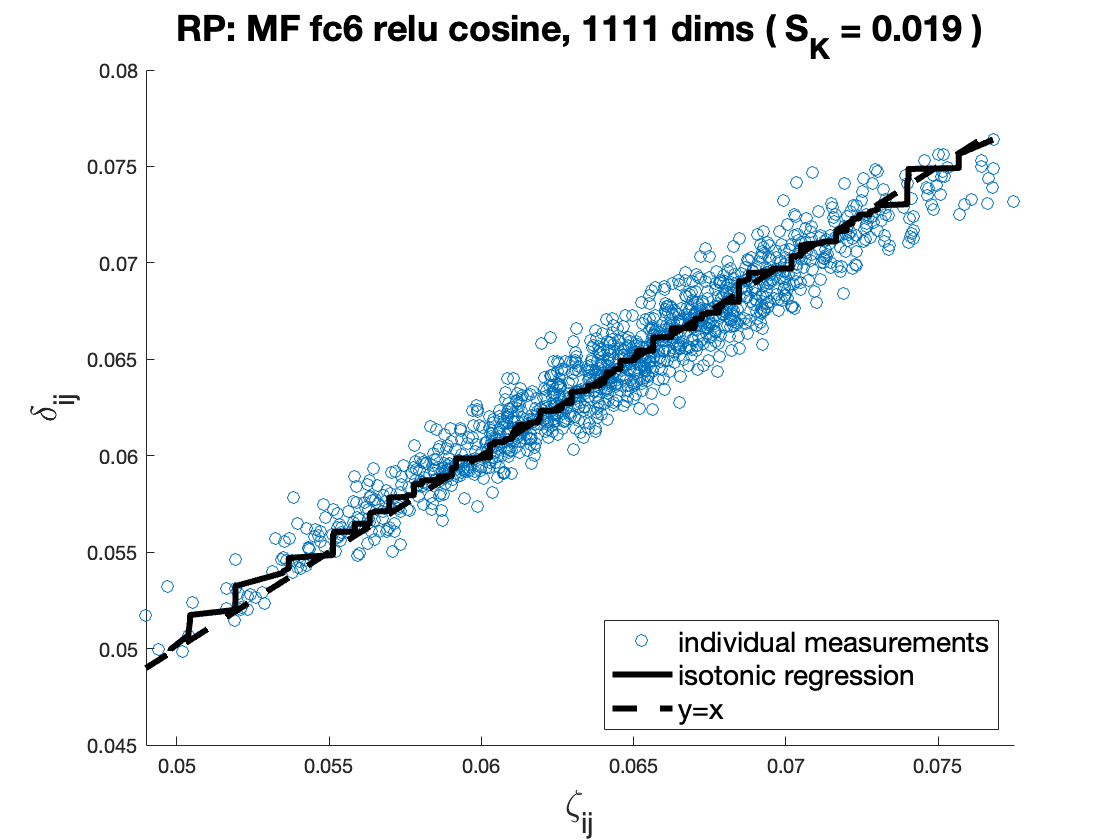}
\caption{fc6 RELU cos $\ell_2$-normed representations, reduced from 4096 dimensions to 1111.}
\label{fig_fc6_relu_cos_shepards}
\end{figure}
\begin{figure}[tbp]
\centering
\begin{subfigure}{0.32\textwidth}
\includegraphics[width=\textwidth]{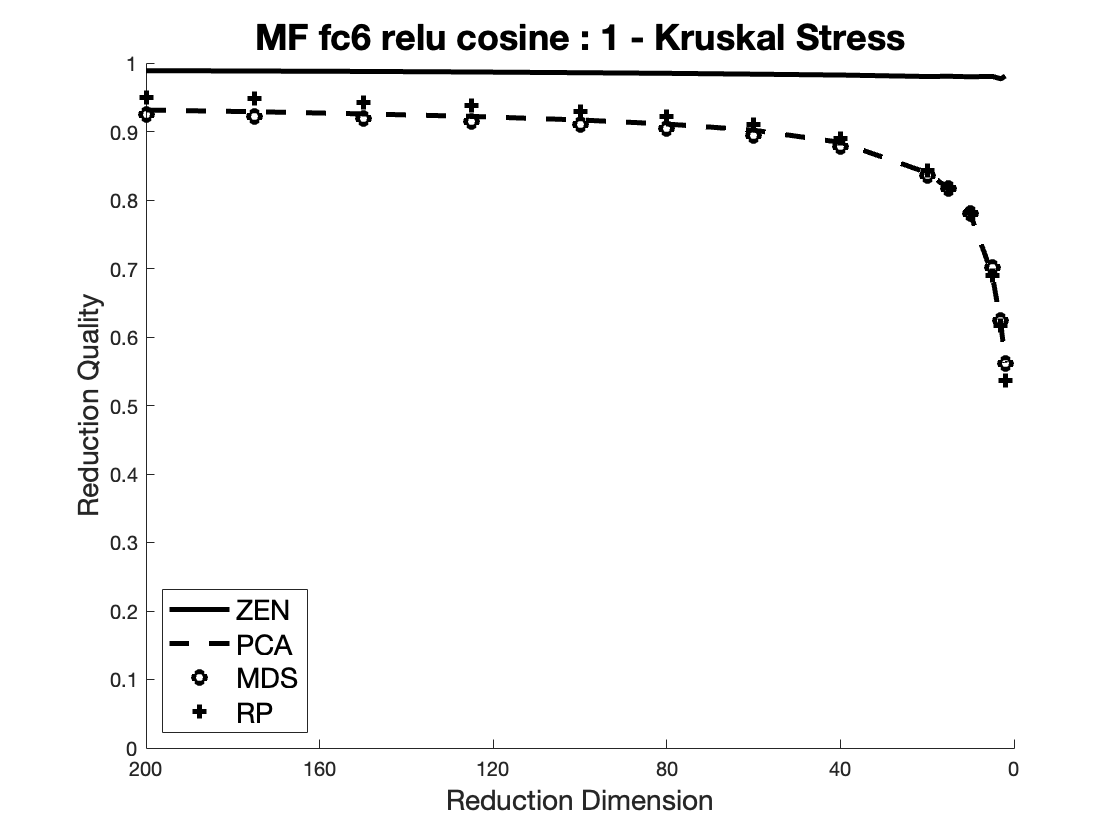}\hfill
\caption{Kruskal stress}
\end{subfigure}
\begin{subfigure}{0.32\textwidth}
\includegraphics[width=\textwidth]{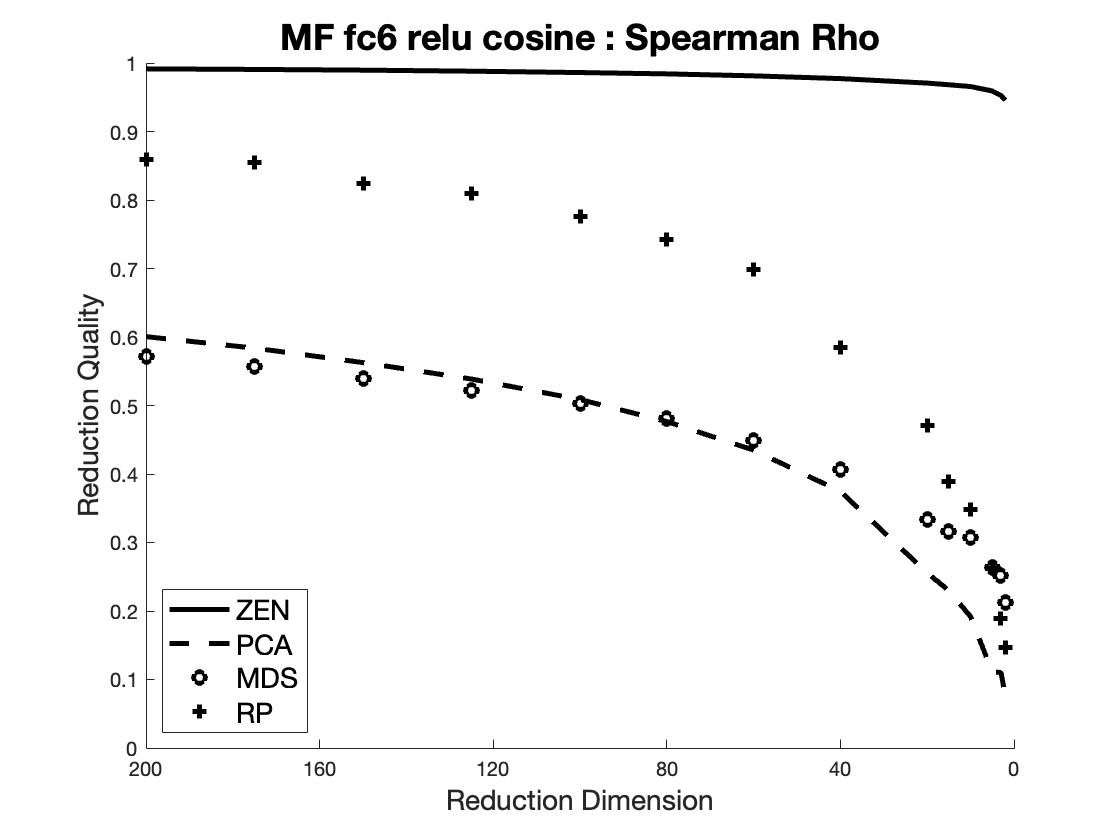}
\caption{Spearman Rho}
\end{subfigure}
\begin{subfigure}{0.32\textwidth}
\includegraphics[width=\textwidth]{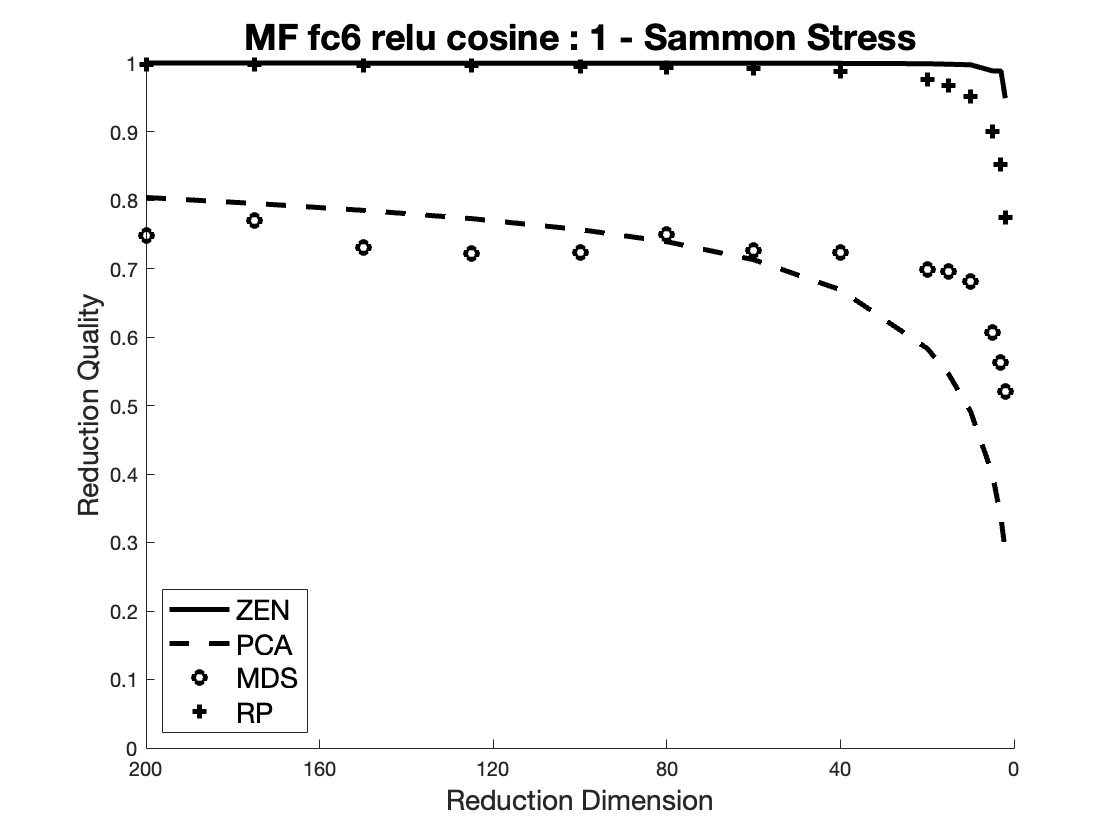}\hfill
\caption{Sammon stress}
\end{subfigure}
\begin{subfigure}{0.32\textwidth}
\includegraphics[width=\textwidth]{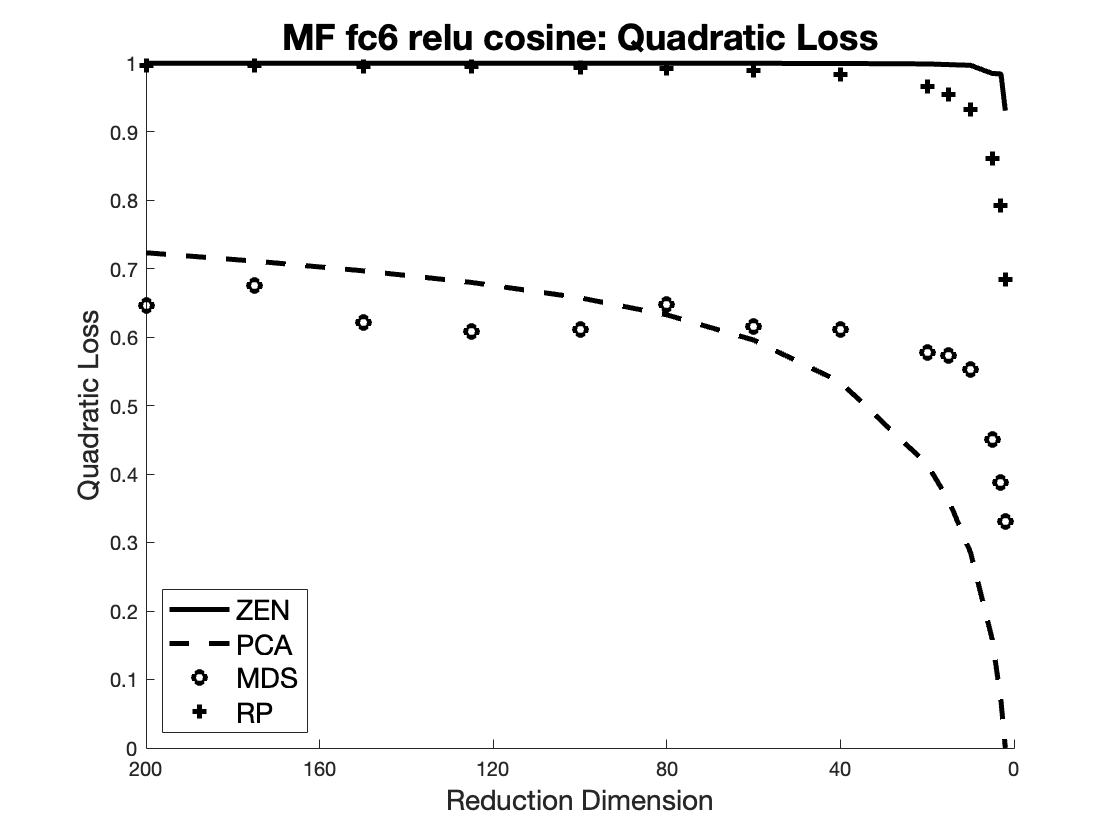}
\caption{Quadratic loss}
\end{subfigure}
\begin{subfigure}{0.32\textwidth}
\includegraphics[width=\textwidth]{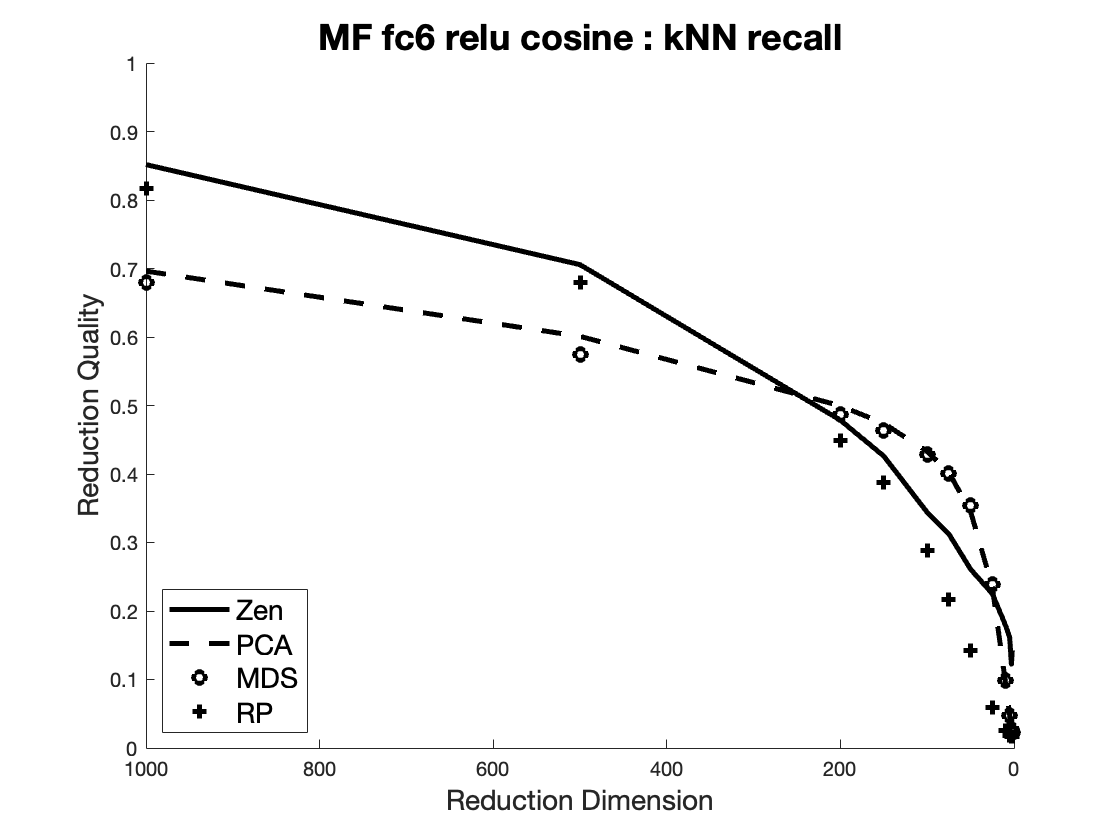}\hfill
\caption{kNN Recall}
\end{subfigure}
\caption{Quality metrics for  fc6 RELU Cosine distance over MirFlickr fc6 data. Apart from recall, all charts plot reductions from the original 4096 to between 200 and 2 dimensions, as \nsimp \zen gives almost perfect results at 200 dimensions. Recall is plotted from between 1000 and 2 dimensions. 1111 dimensions capture 80\% of the variance when using PCA.}
\label{fig_fc6_relu_quality}
\end{figure}

In this section we use the same raw data as that of Section \ref{sec_subsec_MF_Alex}, but with the data converted to give a proxy for the classic ``cosine'' distance after the RELU filter has been applied.  RELU is applied by zeroing out the negative values, then the resulting points are $\ell_2$-normalised by dividing each vector component by the magnitude of the resulting vector. After this transformation, the Euclidean  distance between the values, which now represent the end-points of unit vectors, gives the same rank ordering as Cosine distance over the post-RELU space.

The value of this metric is that it is that used for training the original network, with this transform being applied at each fully-connected layer of the categorisation section of the CNN architecture. It might therefore be expected to give an improved performance in terms of semantic similarity over the simpler \emph{DeCAF} measure, although testing this in practice for a large data set is challenging.

However, as before, our purpose for this data is to give a convincingly realistic large set of values which derive from some application and which require the use of the Cosine metric, which is thus achieved. As previously mentioned, the characteristics of the metric space produced in this way are quite different to those for the same raw data under simple Euclidean distance.

For this data the PCA eigenvalues show that $1,111$ dimensions are required to explain 80\% of the variance in distance according to Eq. \eqref{eqn_pca_variance}, a surprising departure from the 109 dimensions required for the non-RELU $\ell_2$ metric version.
Figure \ref{fig_fc6_relu_cos_shepards} shows Shepard plots at this dimension. For the first time we see that RP outperforms PCA and MDS, implying that no useful information about the manifold containing this data is gleaned from either of these analyses. In fact both can in fact be seen to be harmful, as the randomly generated RP transform, which is approximately orthonormal at this higher dimension, outperforms both. In this case the relatively small stress caused by the reduction transform can only be attributed to the effect highlighted by the Johnson-Lindenstrauss lemma, and this effect is somehow being lessened by linear analysis of the original manifold.

Again, however, our main purpose is to compare the \nsimp \zen transform, which in this case significantly outperforms any of the other three mechanisms. As can be seen, the Kruskal stress in this case is almost an order of magnitude less than for PCA.

In fact, the high quality of the \nsimp \zen transform is maintained down to much lower dimensions. Figure \ref{fig_fc6_relu_quality} shows the first four quality measures applied to reductions of between 200 and 2 dimensions; the starting point of 200 is used as, with the exception of recall, there is almost no loss of quality with \nsimp \zen at any of these dimensions. In all cases it can be seen that \nsimp \zen is the best mechanism, along with the observation that RP is better than either PCA or MDS again across the whole range of reduction dimensions. For Kruskal stress and Spearman Rho measures, \nsimp \zen is much better than RP.

The only exception to this is the recall measure, where it can be seen that \nsimp \zen starts to perform less well than either PCA or MDS when the reduction dimensions is less than around 300. While \nsimp \zen is better than RP across the whole range, the advantage is only relatively small; again, RP gives a surprisingly good outcome for this test.

\subsection{Other Hilbert spaces - Jensen-Shannon Distance}\label{sec_sub_exp_jsd}

In the final experimental section, we apply \nsimp \zen to Hilbert spaces where there is no available coordinate system. As explained in Section \ref{sec_nsimp_projection_main}, the underlying \nsimp transform can be applied to any metric space which is isometrically embeddable in a Hilbert space. As the $n$-dimensional simplexes are constructed in Euclidean space using only the pairwise distances measured in the original space, then any metric space which allows a finite $n$-embedding into $(n-1)$ Euclidean dimensions can be used as the domain, and all metric spaces which are isometrically embeddable in Hilbert space have this property.

One of the most interesting classes in this category  is that of metric spaces governed by the Jensen-Shannon distance, an information-theoretic distance metric which has some interesting properties, and can reasonably  be regarded as a distance which should always be preferred to the semantics-free Cosine distance \cite{four_metrics}. One possible reason for its relatively low uptake may be that its calculation requires many $\log$ calculations, and can be two orders of magnitude slower than Cosine distance over the same dimensions. It is therefore  intriguing that a dimensionality reduction transform exists which not only reduces the size of the representations, but also converts the distance metric from an expensive calculation to a much cheaper one.

The absence of a coordinate system means that neither  PCA nor RP can be applied. While MDS can be applied to a small space, it is not possible to use the extended version described in  Section \ref{sec_related_sub_mds} which is necessary to allow its application to a large space. However in \cite{LMDS} it is shown that LMDS, a different extension  of the MDS principles, can be applied to any metric space, including those without a coordinate system. In this section we therefore compare \nsimp \zen and LMDS in use against two spaces governed by the Jensen-Shannon metric.

\subsubsection{100-dimensional generated space}

\begin{figure}[tbp]
\centering
\includegraphics[width=0.35\textwidth]{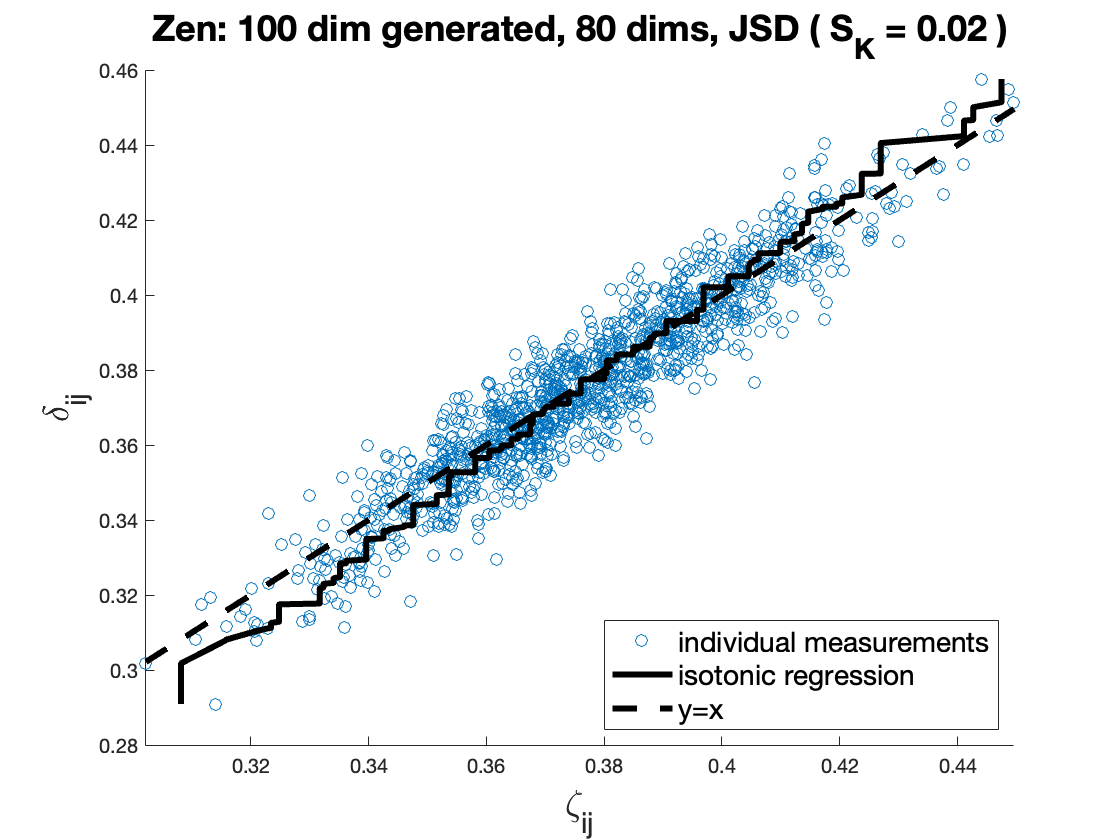} \qquad
\includegraphics[width=0.35\textwidth]{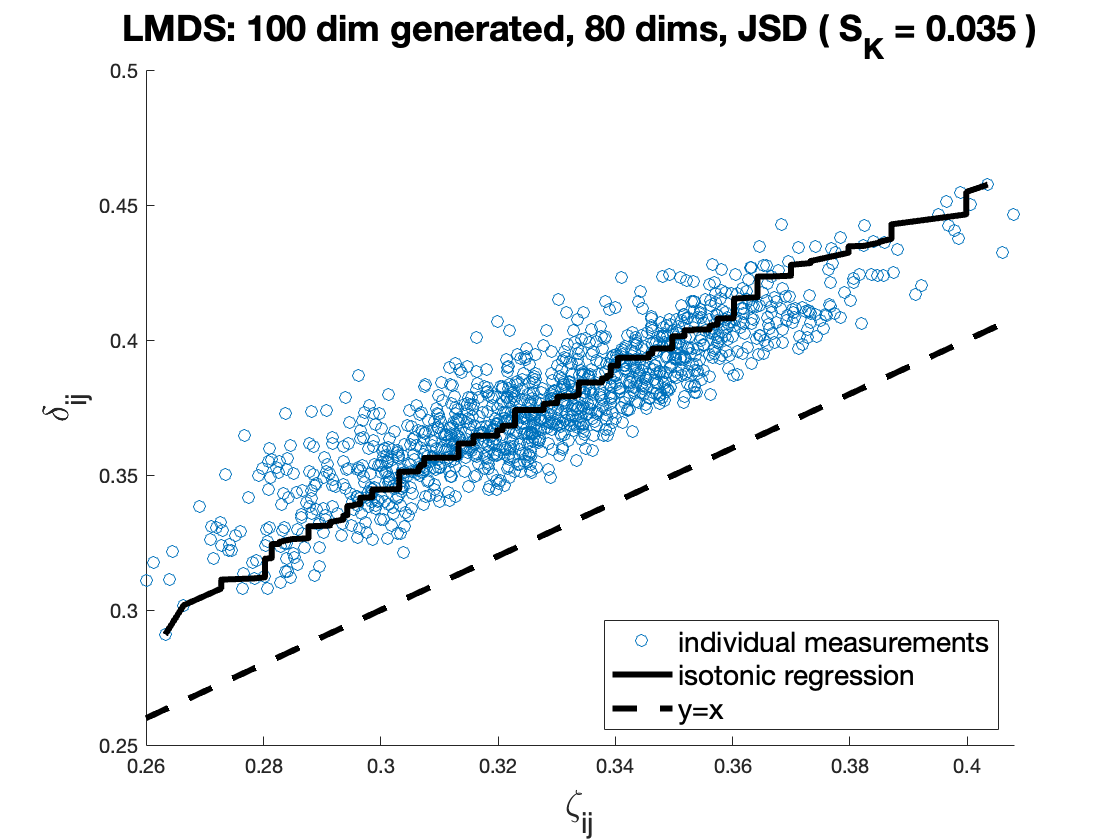} 
\caption{100-dimensional generated probability space with Jensen-Shannon metric, reduced to 80 dimensions using \nsimp \zen and LMDS.}
\label{fig_sift_gen_jsd_shepards}
\end{figure}

\begin{figure}[tbp]
\centering
\begin{subfigure}{0.32\textwidth}
\includegraphics[width=\textwidth]{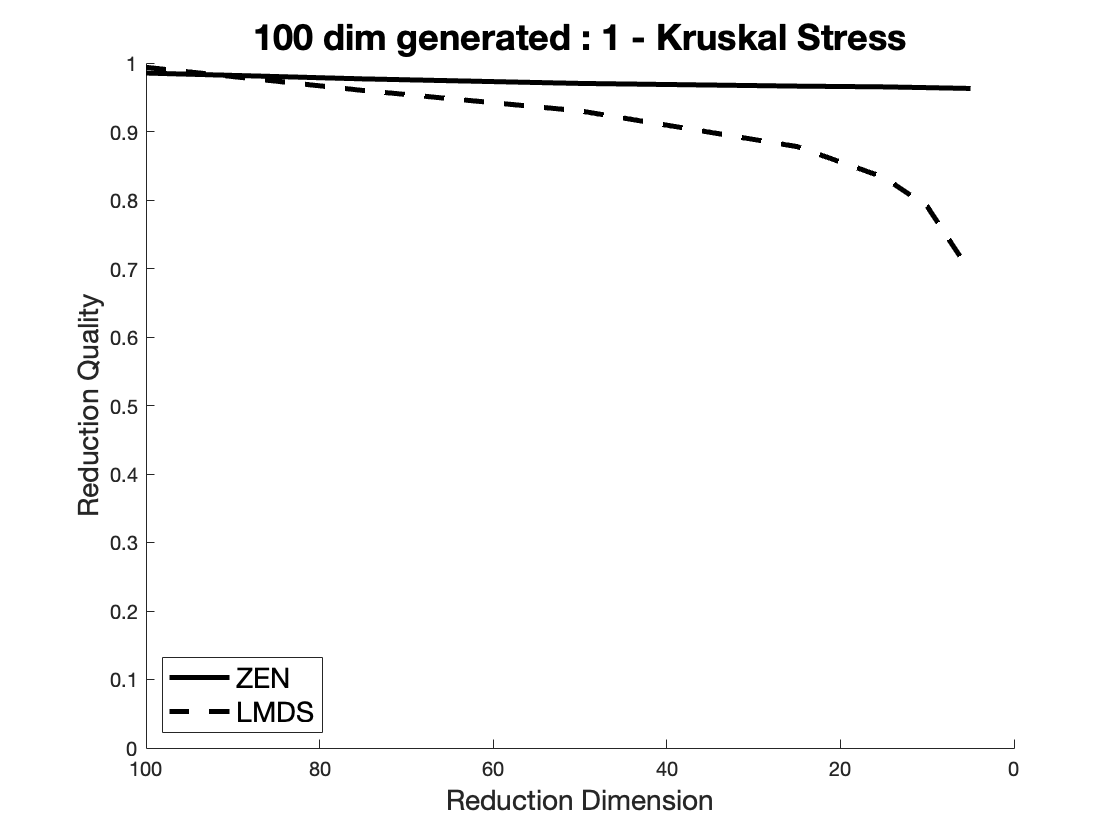}\hfill
\caption{Kruskal stress}
\end{subfigure}
\begin{subfigure}{0.32\textwidth}
\includegraphics[width=\textwidth]{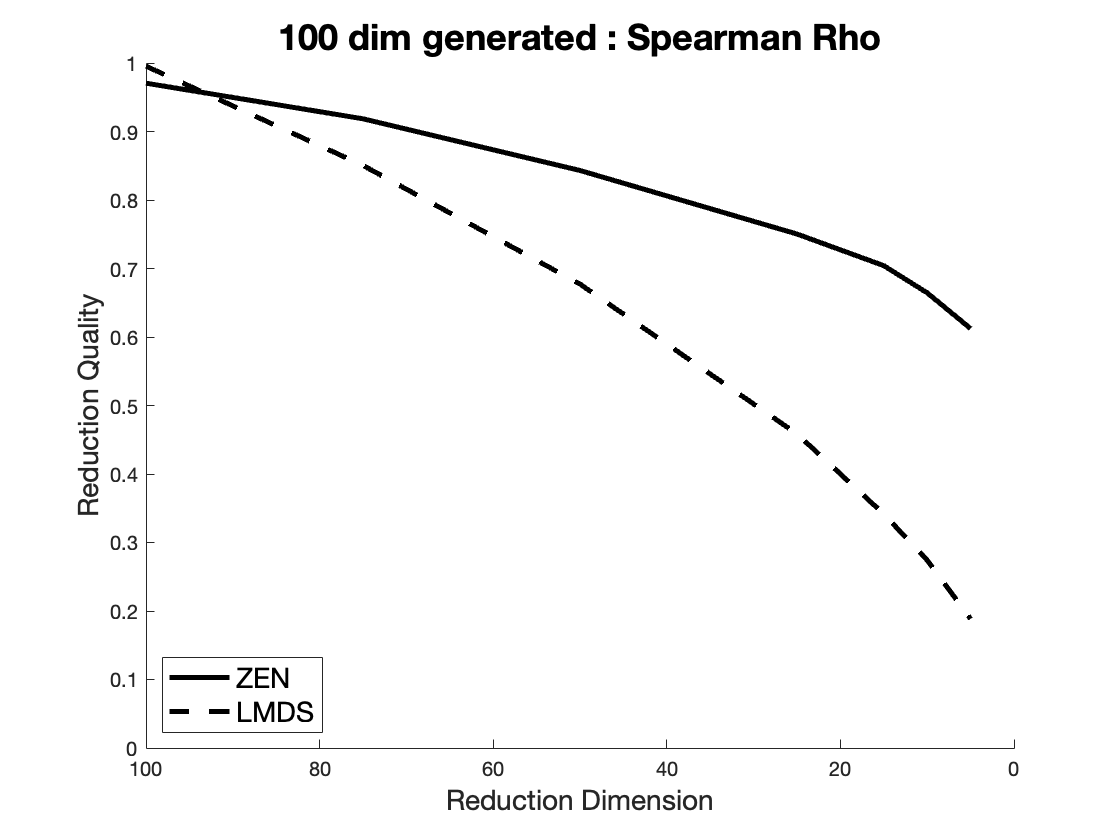}
\caption{Spearman Rho}
\end{subfigure}
\begin{subfigure}{0.32\textwidth}
\includegraphics[width=\textwidth]{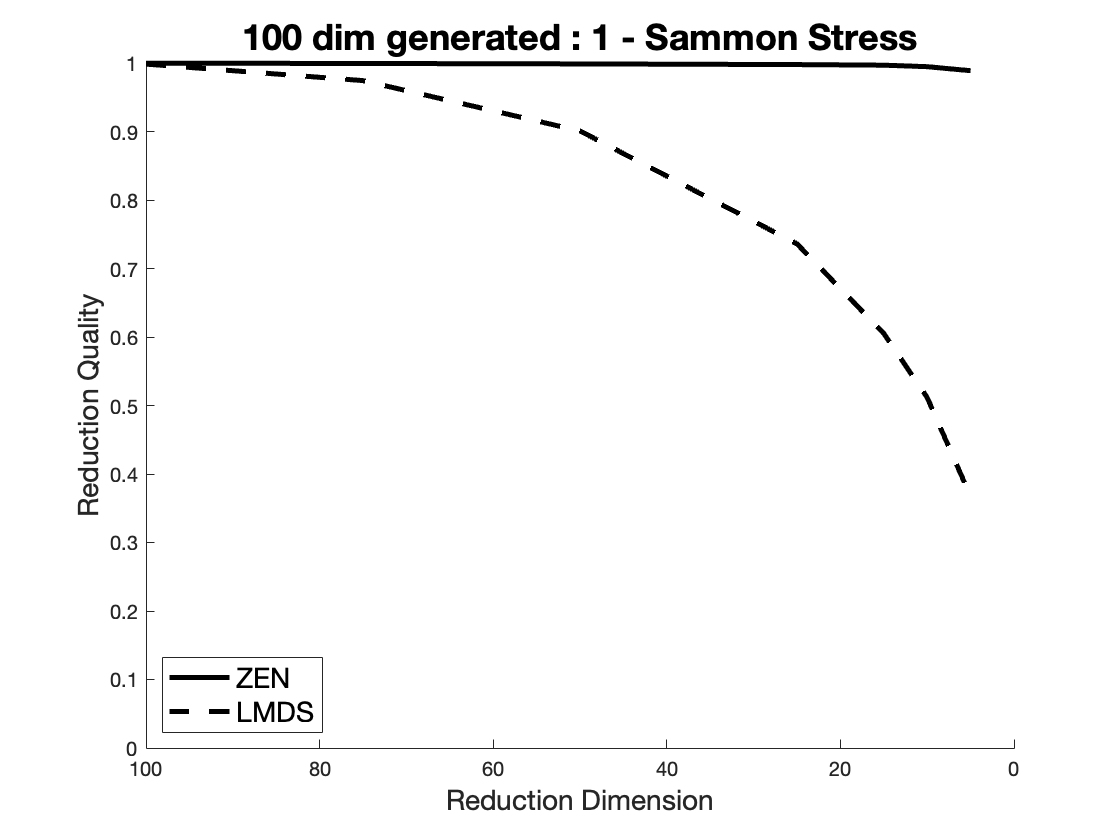}\hfill
\caption{Sammon stress}
\end{subfigure}
\begin{subfigure}{0.32\textwidth}
\includegraphics[width=\textwidth]{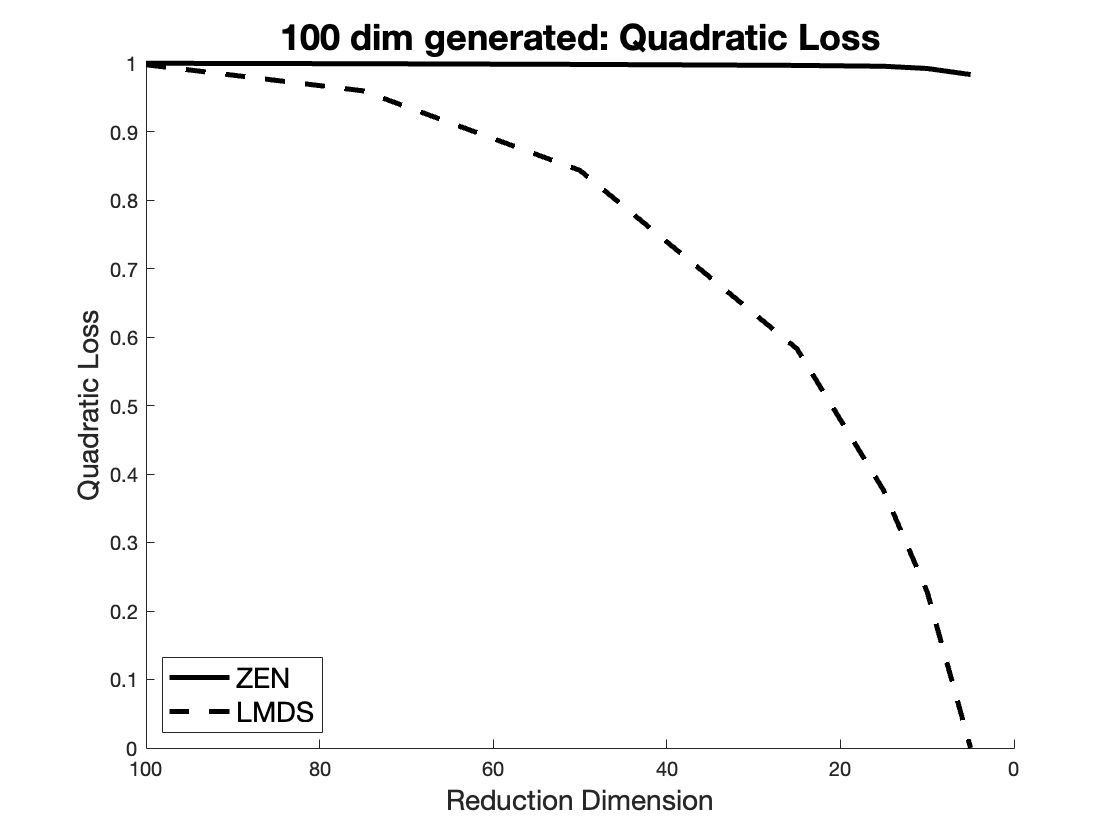}
\caption{Quadratic loss}
\end{subfigure}
\begin{subfigure}{0.32\textwidth}
\includegraphics[width=\textwidth]{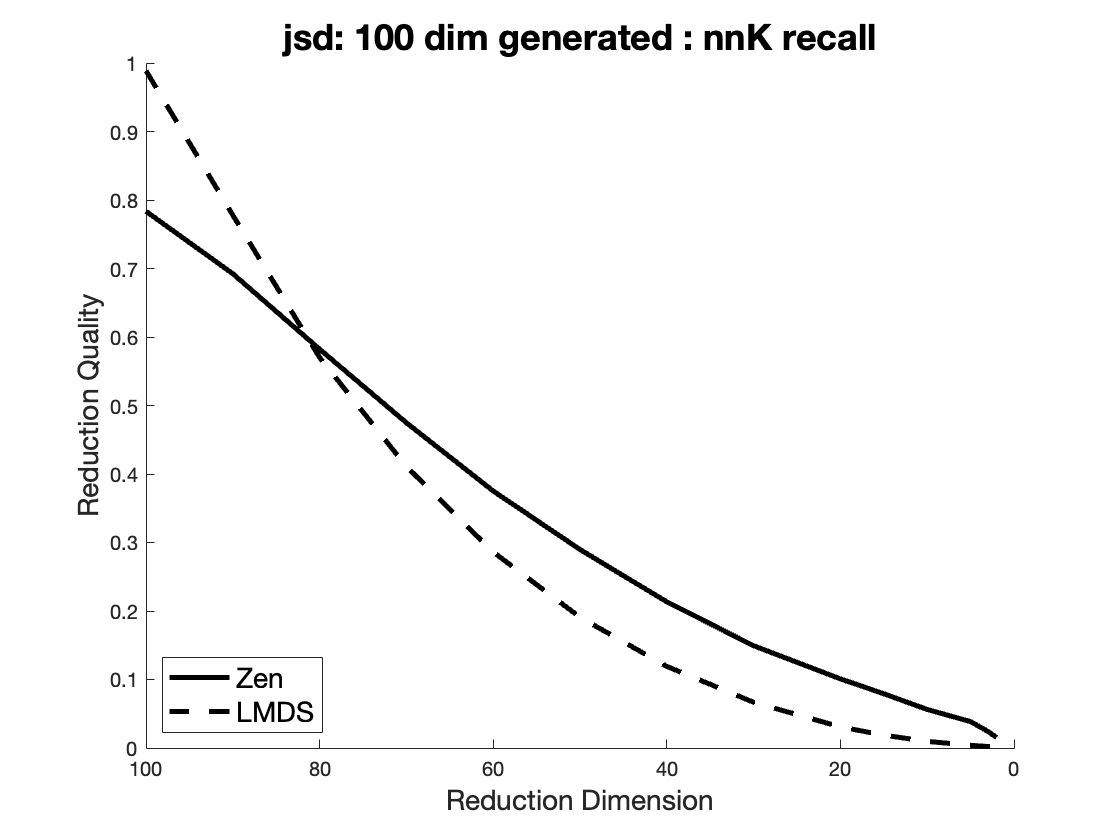}\hfill
\caption{kNN Recall}
\end{subfigure}
\caption{Quality metrics 100-dimensional generated probability space with Jensen-Shannon Distance.}
\label{fig_gen_jsd_quality}
\end{figure}
The first experiment uses a 100 dimensional generated space. 100-dimensional vectors are generated using a uniform random generator with each dimension, bounded in $[0,1]$, and each vector is $\ell_1$-normed  in order to simulate a  probability distribution over 100 independent variables, thus giving an appropriate domain for Jensen-Shannon distance.

Figure \ref{fig_sift_gen_jsd_shepards} shows Shepard plots for \nsimp \zen and LDMS at 80 dimensions (now an arbitrary figure as PCA is not possible over the data), from which it can be seen that \nsimp \zen gives less Kruskal stress than LMDS.

Figure \ref{fig_gen_jsd_quality} shows the usual quality charts across the range of 100 down to 2 dimensions. Again, \nsimp \zen is generally the better of the  mechanisms. It is interesting to note that \nsimp \zen does not give perfect results even at 100 dimensions, and that the only cases where LMDS outperforms \nsimp \zen are at  100 dimensions for the Spearman Rho test, and at about 80 dimensions for recall. While one property of a Hilbert space is a finite $n$-embeddability in $(n-1)$ Euclidean dimensions, this does not of course imply that an $n$-dimensional Jensen-Shannon space should in general be isometrically embeddable in an $n$-dimensional Euclidean space. Thus there is no reason to expect perfect performance when any space with Hilbert properties is ``reduced'' to a Euclidean space with the same physical dimensions.

\subsubsection{GIST}
Our final experiment is with Jensen-Shannon distance applied to GIST image descriptors. GIST \cite{gist} is  a  representation of the image based on a set of perceptual dimensions that represent the dominant spatial structure of a scene. Although  again GIST is no longer  the state of the art in image similarity, it has been shown that GIST representations used in conjunction with Jensen-Shannon distance gives an excellent technique for finding near-duplicate images for forensic purposes \cite{connor2016quantifying}, a specialist application quite different from more general image similarity.

The MirFlickr 1M image set was again used, and GIST representations were obtained. Each representation is a 480-dimensional vector, again $\ell_1$-normalisation is applied to achieve a set of one million values suitable as a domain for the Jensen-Shannon distance metric.

After initial analysis it was found that these representations are quite amenable to dimensionality reduction using both \nsimp \zen and LMDS, so Shepard diagrams were produced at the 100-dimensional reduction. As shown in Figure \ref{fig_gist_jsd_shepards}, both techniques give a relatively low Kruskal stress even at around one-fifth or their initial size, and again the \nsimp \zen transform gives a significantly lower stress than LMDS.

Figure \ref{fig_gist_jsd_quality} repeats the usual quality analysis over these descriptors. Most charts are from 100 down to 2 dimensions again as there is little quality loss at higher dimensions; recall is measured between 200 and 2 dimensional reductions.

For this space, \nsimp \zen is  better across all dimensionalities for all tests other than for recall, where it is not as good as LMDS. It is noteworthy that both techniques, relying on distances alone, perform much better for the ``real'' data than for the uniformly generated data. The reason for this is presumably that the data is contained within a manifold contained within the representational space whose characteristics are being usefully captured by the distance-based analysis.

\begin{figure}[tbp]
\centering
\includegraphics[width=0.35\textwidth]{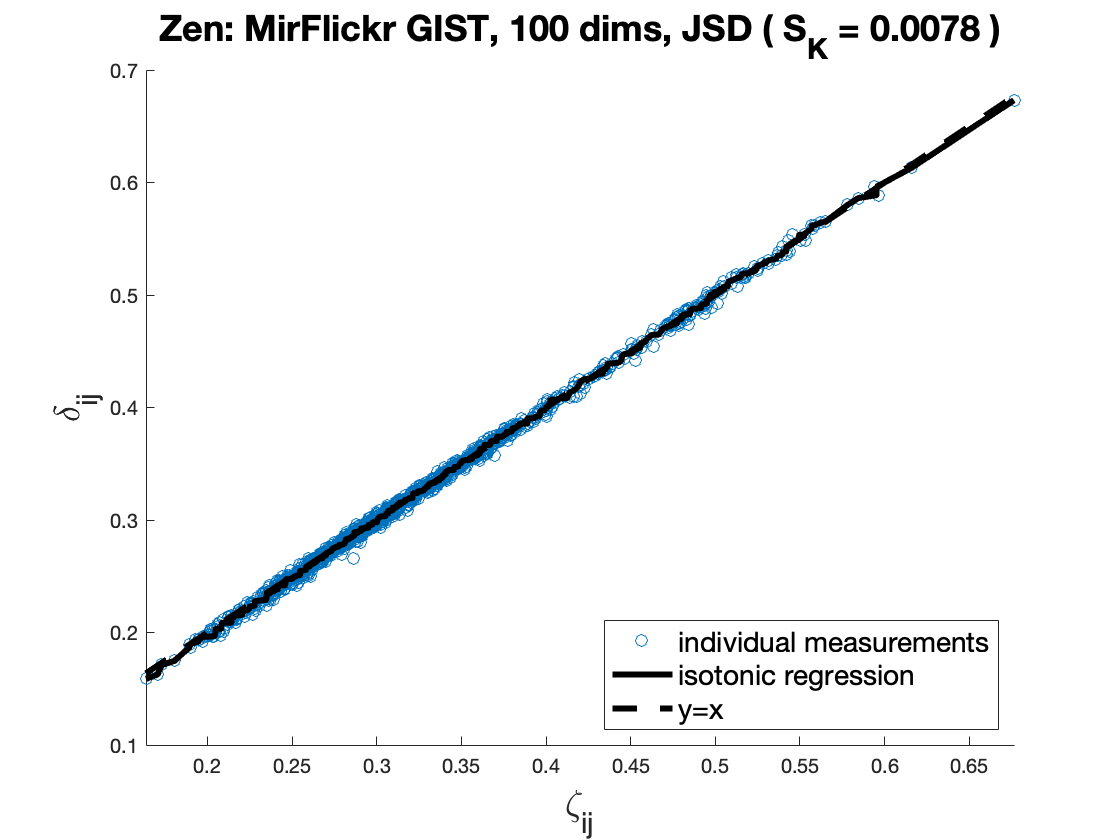} \qquad
\includegraphics[width=0.35\textwidth]{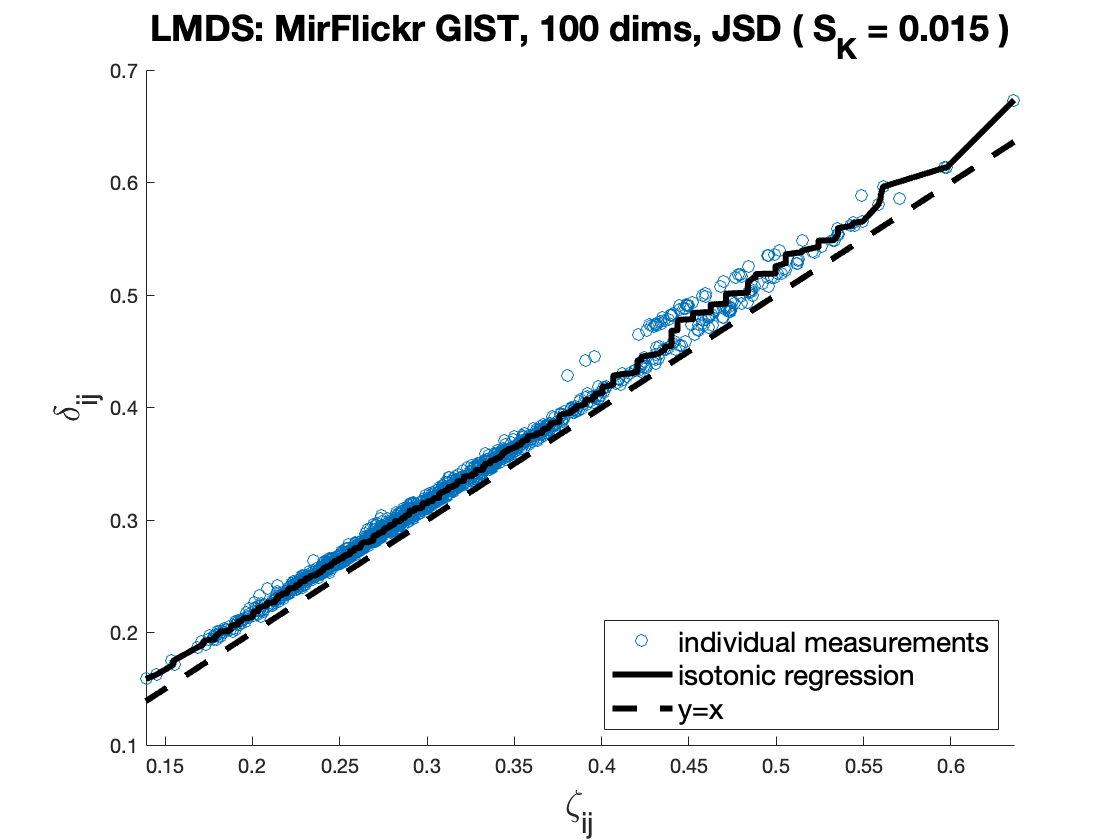} 
\caption{MirFlickr GIST with Jensen-Shannon metric, reduced to 400 dimensions using \nsimp \zen and Landmark MDS.}
\label{fig_gist_jsd_shepards}
\end{figure}

\begin{figure}[tbp]
\centering
\begin{subfigure}{0.32\textwidth}
\includegraphics[width=\textwidth]{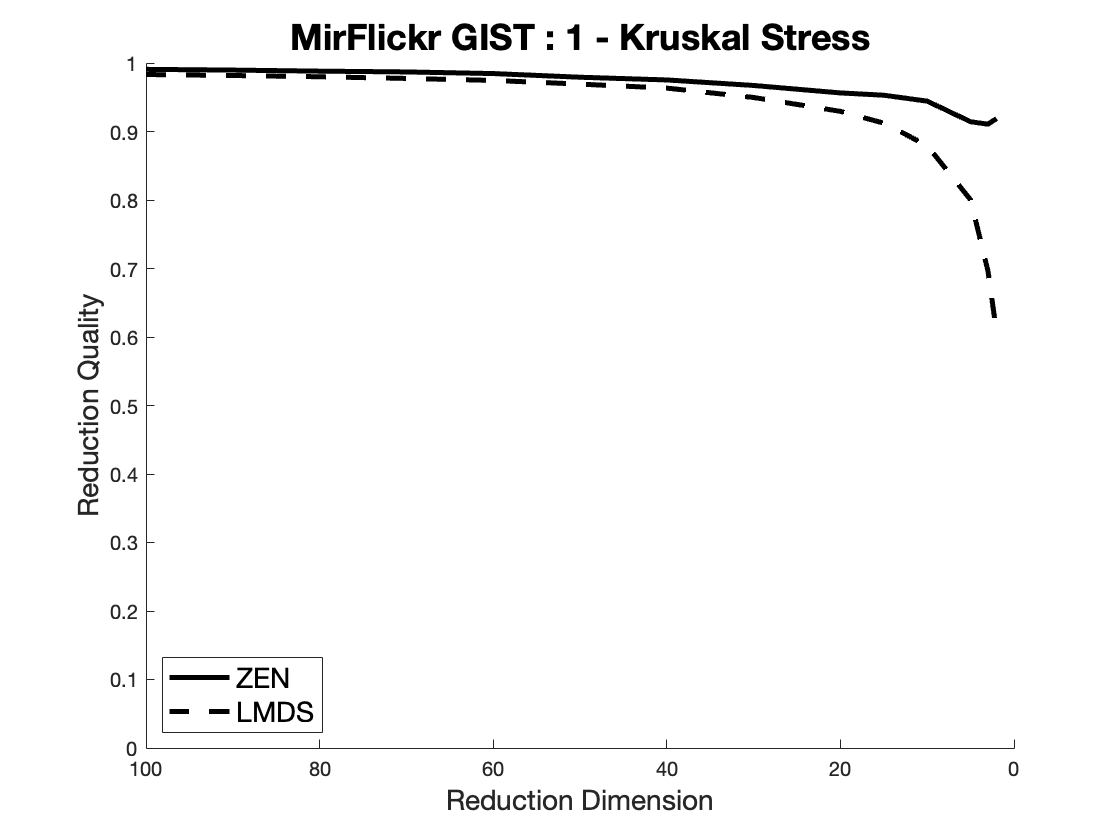}\hfill
\caption{Kruskal stress}
\end{subfigure}
\begin{subfigure}{0.32\textwidth}
\includegraphics[width=\textwidth]{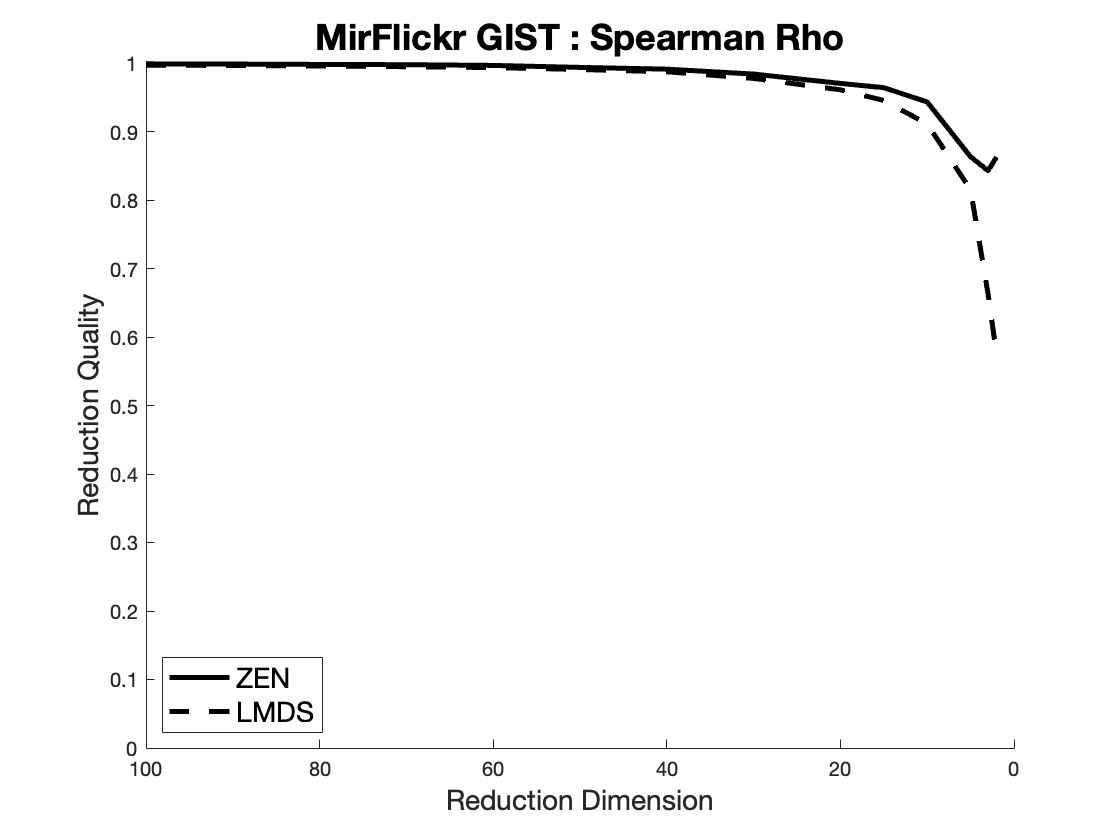}
\caption{Spearman Rho}
\end{subfigure}
\begin{subfigure}{0.32\textwidth}
\includegraphics[width=\textwidth]{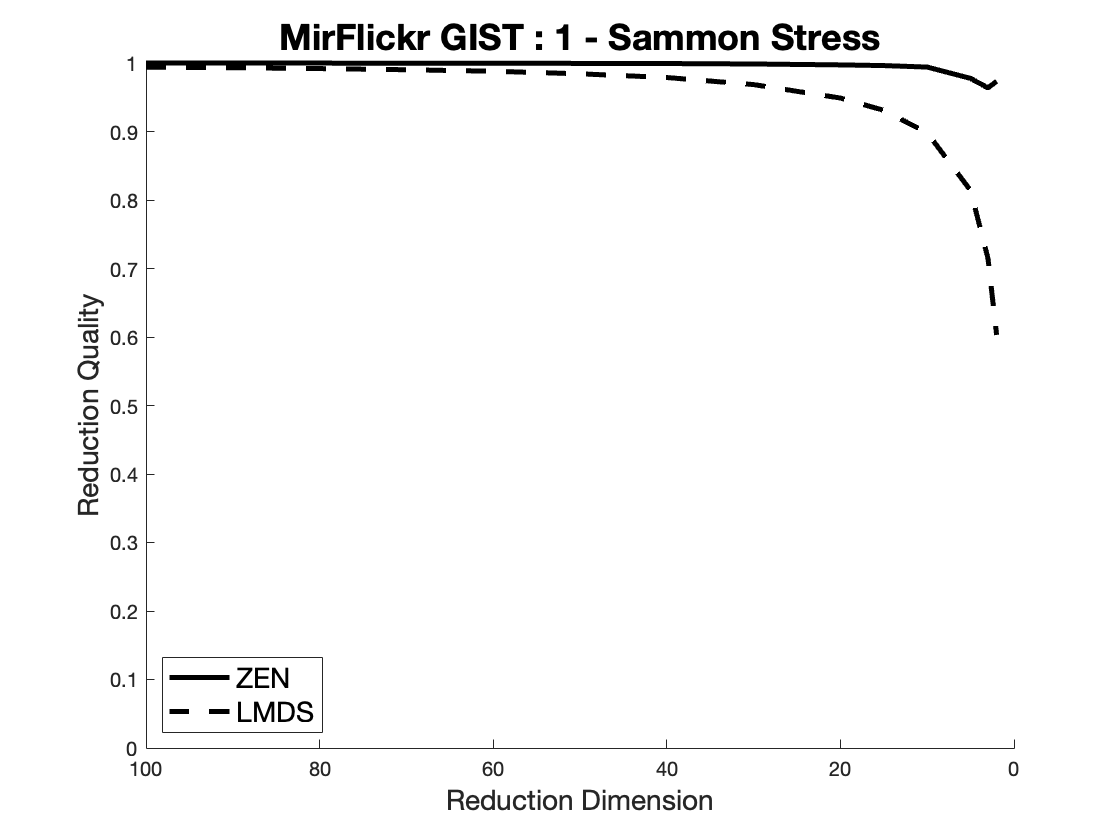}\hfill
\caption{Sammon stress}
\end{subfigure}
\begin{subfigure}{0.32\textwidth}
\includegraphics[width=\textwidth]{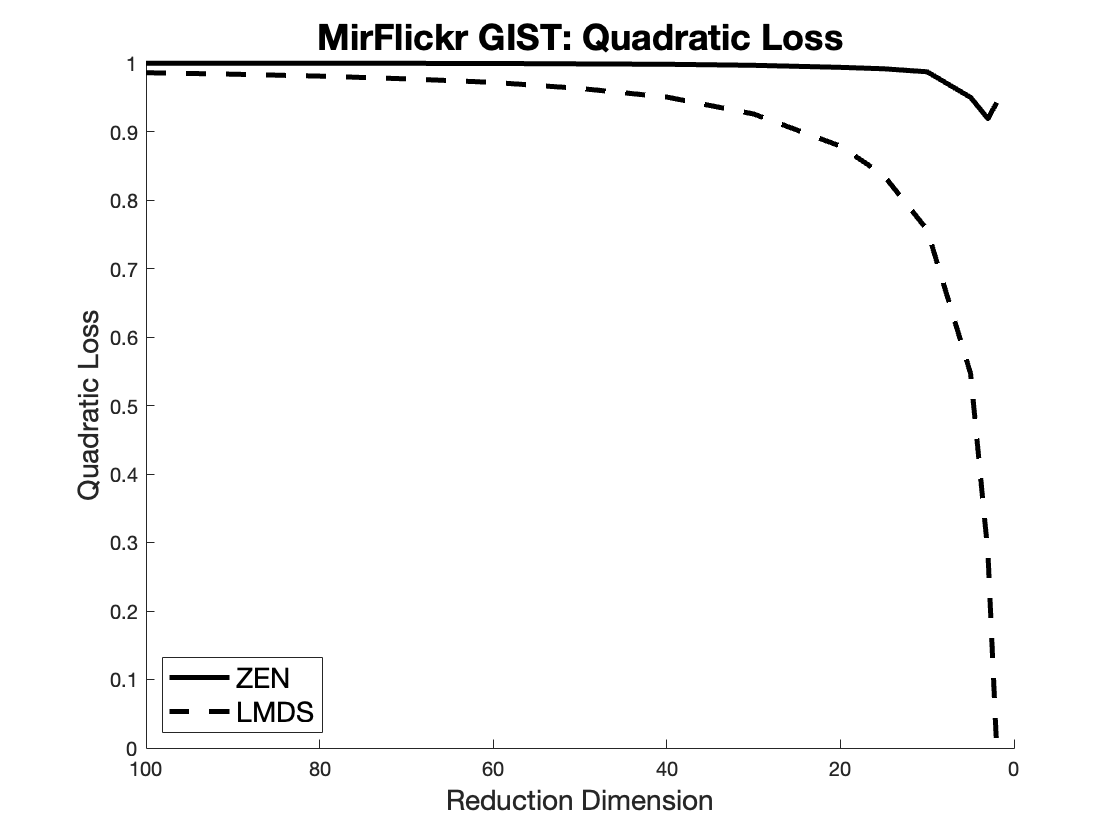}
\caption{Quadratic loss}
\end{subfigure}
\begin{subfigure}{0.32\textwidth}
\includegraphics[width=\textwidth]{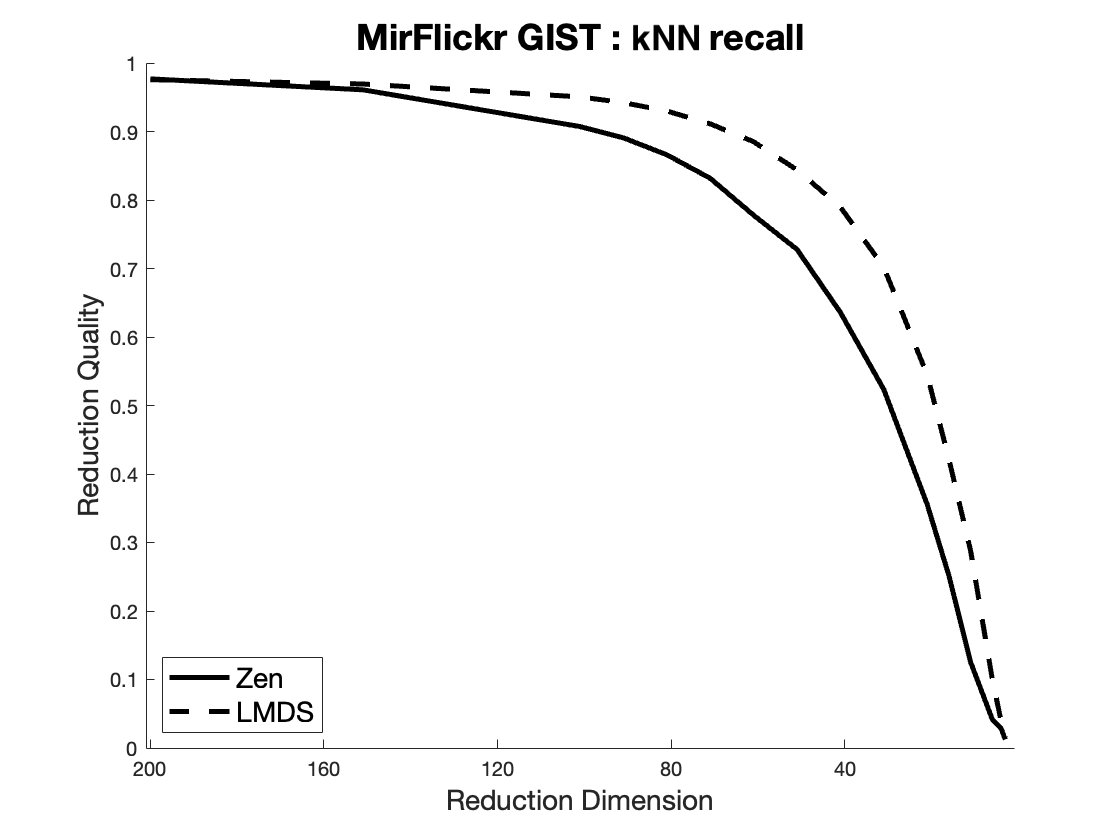}\hfill
\caption{kNN Recall}
\end{subfigure}
\caption{Quality metrics for  GIST/JSD. The data is 480 dimensions, however there is very little quality loss when reduced  to 100 dimensions, therefore most charts are plotted from 100 down to 2 dimensions; kNN recall is plotted from 200 dimensions downwards.}
\label{fig_gist_jsd_quality}
\end{figure}

\section{Run-time Costs}
\label{sec_performance}

\begin{figure}[tbp]
\includegraphics[width=0.45\textwidth]{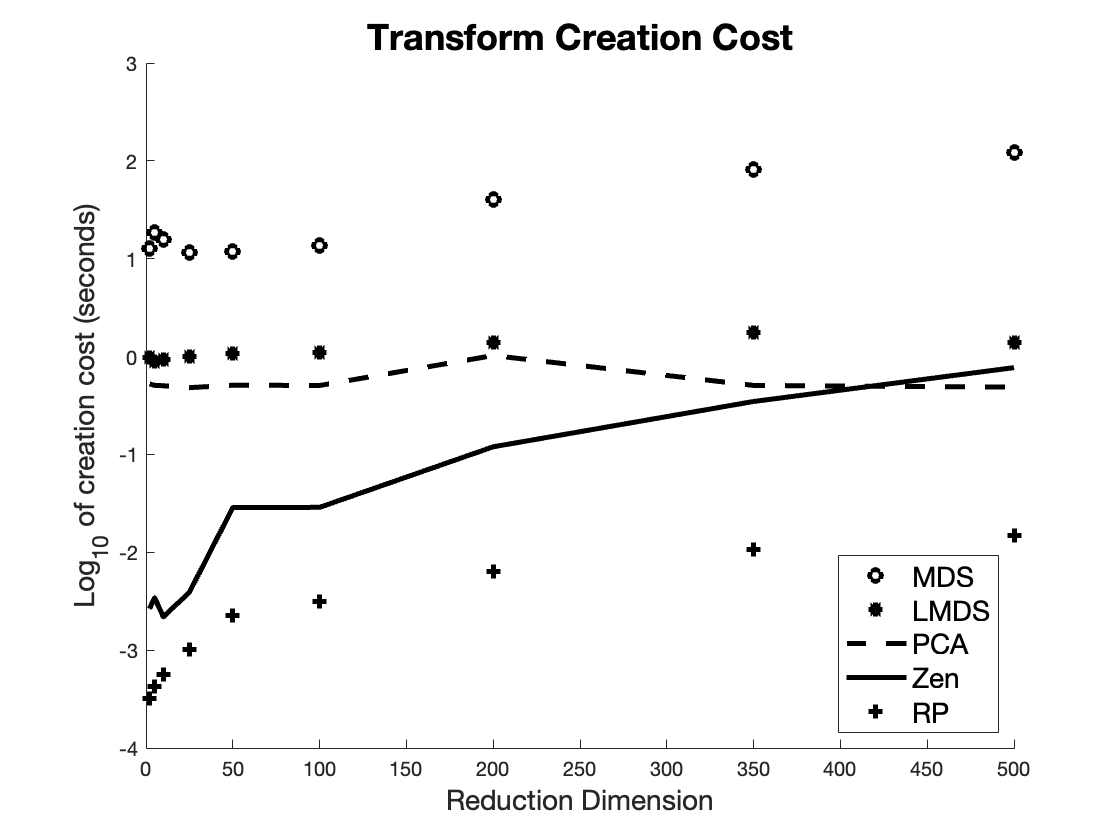} \hfill
\includegraphics[width=0.45\textwidth]{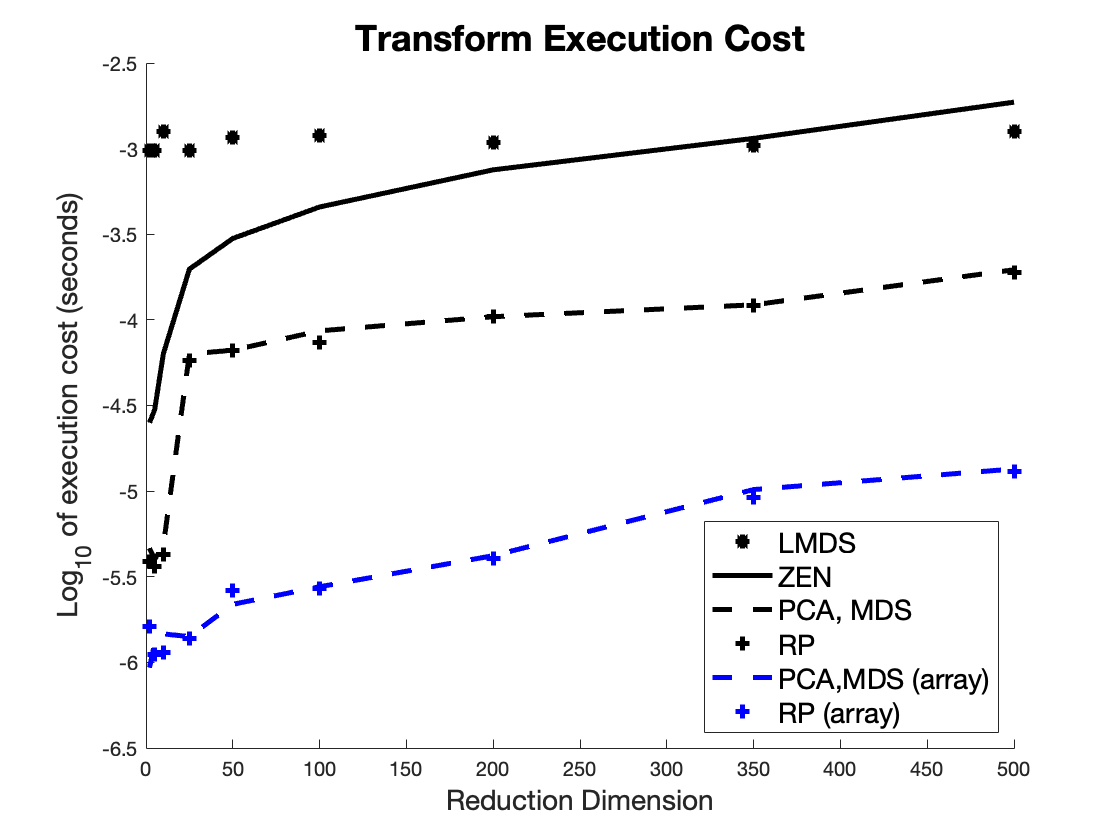} \hfill
\caption{Log-scaled costs of the creation and execution of the different DR transforms over a synthetic Euclidean space of 1000 dimensions.}
\label{fig_performance}
\end{figure}

As explained in \cite{kriegel2017black}, it is difficult or impossible to make fundamental comparisons on computational cost for a novel mechanism. In this case, we note that mechanisms such as PCA have had many years of study as to their optimisation, and specialist mathematical programming systems provide extremely fast versions, whereas the  LMDS and \nsimp \zen transforms reported here are (possibly naively) coded by ourselves following the high-level definitions. However measurements of the systems as used may be pragmatically useful, although these caveats should be taken into account.

As a further caveat, the experimental results reported here are performed using MatLab%
\footnote{R2022a update 4, 64-bit (Mac i64)}, and some of the tests performed in Java implementations give quite different outcomes. The MatLab system is highly optimised for array manipulation, and provides optimised implementations of PCA and MDS.

There are two key aspects to the performance of the mechanisms: the cost of producing the transform, and the cost of applying it to a data item, or a data set.

For PCA, MDS and RP, the transform function comprises a matrix which is multiplied by the data in order to produce the reduced-dimensional form. So in all cases, when an $m$-dimensional Euclidean space is to be reduced to $k$ dimensions, the transform takes the shape of an $m \times k$ matrix. The data to be reduced is an $n \times m$ matrix, comprising $n$ rows of $m-$dimensional data. The outcome of multiplying the data by the transform is then an $n \times k$ matrix representing the reduced-dimension data. 
Although the cost of producing the transform matrix varies widely with the technique, the application to the data is thus largely constant for a given $m$ and $k$. In theory, this cost with the RP mechanism we have used throughout could be much lower, as the transform matrix is deliberately constructed to contain many zero values and thus reduce the potential cost of matrix multiplication. We have not observed this in our tests, using standard matrix multiplication as provided by MatLab, but the potential exists.

For \nsimp \zen and LMDS, the execution of the transform in both cases depends  on distances measured between each element of the data and a set of reference objects. For LMDS this is a typically large fixed set of reference objects, whereas for \nsimp \zen the magnitude of the set of reference objects is the same as the reduction dimension.

Figure \ref{fig_performance} shows a set of experimental results for the creation and execution of the different reduction transforms. The context is a generated Euclidean space of 1000 dimensions, and each transform is used to reduce this to between 2 and 500 dimensions. The charts show the cost in each case of creating the transform, and the per-object cost of applying it to the data once the transform is created.

As can be seen, the cost of the transform creation varies widely, and this we believe to show fundamental differences in each approach. However in most cases the cost of transform creation is relatively unimportant compared to the cost of its execution.

The execution times show the cost of transformation per object. Two major effects are visible here; first, the mechanisms using matrix multiplication are approximately an order of magnitude faster than the mechanisms using object distances. In terms of the number of individual arithmetic operations performed, there is no such order of magnitude difference: in fact both \nsimp \zen and LMDS require  fewer individual arithmetic operations for these examples, and the difference seen is due to the optimisation of array multiplication.

Secondly, two execution times are shown for each of PCA, MDS and RP. The greater of these apply the matrix multiplication inside an iterative loop, to give a fair comparison with \nsimp \zen and LMDS where this is necessary. The faster outcomes, again approximately an order of magnitude better, apply the transform via a single matrix multiplication over the entire data set as would normally be possible.

In summary, the application of any of PCA, MDS or RP is around two orders of magnitude faster than \nsimp \zen or LMDS, when our naive implementations are compared against a professional matrix-optimised programming system. It is possible that \nsimp \zen could be optimised to reach an equivalent performance, but whether this is the case or not, and indeed how to achieve it, are open questions. Meantime, our provided version of \nsimp \zen can perform reduction at a typical rate of between $10^{-4}s$ and $10^{-3}s$ per object, which may be fast enough for many useful purposes.

\section{Discussion}
\label{sec_discussion}

In almost all quality measures, the \nsimp \zen transform outperforms all of the other well-known general techniques for Euclidean spaces.  As its application requires only the measurement of pairwise distances, rather than inspection of a coordinate space, it can also be applied to any metric space which is isometrically embeddable in a Hilbert space, where the necessary Euclidean properties also exist. In this context, it again outperforms LMDS in almost all measures.

There are two main reasons why the technique can perform better than other linear methods. First, it uses a well-known feature of high-dimensional spaces, namely the high probability of two sampled vectors being nearly orthogonal, to build a more accurate geometric model. Secondly, for a data set within a complex manifold, the use of a small number of sampled reference points, as opposed to  a much larger number required to produce a linear transform, seems to give a better reference model for the transform. It is noteworthy that, for example, the two-dimensional reductions shown in the example are derived from a set of 1000 reference points for the PCA transform, but only two randomly-selected reference points for the \nsimp \zen transform; it seems scarcely credible that the latter almost always give much better outcomes.

One drawback of the technique is that the range of the \nsimp \zen transform is not a Euclidean space, and therefore cannot be used for low-dimensional visualisations of data. However, small projected sets can be re-modelled using MDS to produce such a visualisation if desired. The \nsimp \zen function does possess the triangle inequality property, and the reduced space can therefore be used with metric indexing techniques.

\subsection{Very small distances}
\label{sec_subsec_very_small_distances}

The \nsimp \zen function has  been seen in some spaces to be less good at preserving ordering over very small distances than other techniques, in particular when applied to relatively low-dimensional spaces, or spaces which lie within a relatively uniform low-dimensional manifold. This is a significant drawback as it means the technique may not be the best for performing similarity search over a large reduced-dimension space. The reason for this is understood, and explained in Section \ref{sec_zen_function}. In pragmatic terms, there is an absolute lower-bound on any distance measured within the reduced space based on the altitude of the last derived component of the representative simplex.

As the simplest case, the \nsimp \zen distance between any object and itself, i.e. $d(u,u)$, projected into any dimension, is calculated as $\sqrt{2x_k^2}$, where $x_k$ is the value of the final-dimension coordinate in the transformed space.  This \nsimp \zen distance may well be greater than to another object $y$ where the other components are similar and $y_k$ is coincidentally smaller than $x_k$.

In high-dimensional spaces very small distances are very rare, and the problem is more evident in the lower-dimensional spaces we have tested. The probability of it occurring  however is currently beyond our full understanding; in some of our experiments it presents a problem, in others it does not. We would be reasonably optimistic that the effect could be at least partly overcome with further research. One factor that does make a difference is the choice of reference objects, which in all experiments we have reported has been random. Reference points which are mutually very close improves this particular outcome.

\subsection{Choice of Reference Objects}

As would be expected, the choice of reference objects used to construct the simplex has a significant effect on the quality of the reduction transforms.

In particular, it is possible for a pathological choice to result in the formation of a simplex which can be embedded in less than the required number of Euclidean dimensions. This does not lead to an incorrect situation,  but one where the vectors comprising the individual vertex points do not form a basis for the desired projection space, therefore  leading to a loss of information potential. In a  high-dimensional space the probability of this happening by chance is in fact vanishingly small, and it is easy to check during simplex construction at which point a different choice of reference object can be made. The problem is only likely to occur in practice if the space is contained in a manifold whose intrinsic dimensionality is close to  the dimensionality of the projection.

In all other cases, the quality of the transform can still be greatly affected by the choice of objects. We have spent some effort in seeking an optimal strategy, and have so far failed to improve, in general, on a random selection, other than when the projection is to very low dimensions. We  used the random strategy in all of the reported experimental results, and consider this point as further work. In outline, other than for a very small selection of reference objects, a random choice is highly likely to reflect the properties of the manifold in which the data is contained, and thus form a natural basis for the projection of the rest of that manifold.

\section{Conclusions}
\label{sec_conclusions}

We have presented a novel dimensionality reduction technique based on a geometric model of high-dimensional metric spaces. In an extensive range of tests, it outperforms any of the other well-known general techniques for Euclidean spaces. It gives particularly good relative performance when reductions from high to low dimensions are performed. Furthermore, it can be applied to a wide range of Hilbert spaces.

While there are still many unanswered questions as to its improvement, and the optimisation of its performance, we are convinced that the \nsimp \zen transform provides an exciting new tool to the dimensionality reduction toolbox.

\section*{Acknowledgements}
This work was partially funded by AI4Media - A European Excellence Centre for Media, Society
and Democracy (EC, H2020 n. 951911), SUN - Social and hUman ceNtered XR (EC, Horizon
Europe n. 101092612), and National Centre for HPC, Big Data and Quantum Computing (CUP
B93C22000620006).

%
\bibliographystyle{plain}
\bibliography{hilbert_dr.bib}

\appendix
\section*{Appendices}

\section{Hilbert-embeddable Distance Metrics}
\label{appendix_hilbert_metrics}

Distance metrics are usually referred to by name, but these names are often subject to details of context and may mean subtly different things to different readers. The following gives unambiguous definitions of metrics to which we refer  in the text, all of which are  isometrically embeddable in Hilbert space. In all cases we refer to a domain of vectors $\bb v,\bb w \in \mathbb{R}^n$ indexed as $v_i,w_i,\,\, 1 \le i \le n$.
\subsection{Euclidean Distance}

\begin{equation}
\ell_2(\bb v,\bb w) = \sqrt{  \sum_{i=1}^n {(v_i - w_i)^2}}
\end{equation}

\subsection{Cosine Distance}
This term is particularly problematic; in some contexts it refers to simply the complement of the cosine of the angle between vectors, which is not a proper metric; in some it refers to the angle between vectors, which is a proper metric, and in some cases it means the Euclidean distance between the $\ell_2$-normalised vectors, which is a proper, Hilbert-embeddable metric. Note that all three forms give the same rank ordering. We use the last form:
\begin{equation}
D_\textit{cos}(\bb v,\bb w) = \sqrt{  \sum_{i=1}^n{\left(\frac{v_i}{\|\bb v\|} - \frac{w_i}{\|\bb  w\|}\right)^2}}
\end{equation}
which is generally efficient to evaluate as it is equivalent to Euclidean distance over $\ell_2$-normalised data. Note however that the general properties of such spaces are generally very different to Euclidean spaces due to this tight constraint over the data distribution.

\subsection{Jensen-Shannon Distance}

 This distance is applicable only to $\ell_1$-normalised positive vectors, as it derives from a metric over probability distributions. It is defined as
 \begin{equation}
\label{eq_jsd}
D_\textit{jsd}(\bb v,\bb w) = \sqrt{\mathcal{K}(\bb v,\bb w)}
\end{equation}
where
\begin{equation}
\mathcal{K}(\bb v,\bb w) = 1 - \tfrac{1}{2}\sum_{i=1}^n (h(v_i) + h(w_i) - h(v_i + w_i) )
\end{equation}
\begin{equation}
h(x) = -x \log_2 x
\end{equation}

In sparse spaces the term $0\log0$ may occur; this is taken as $0$, rather than undefined. This is a reasonable interpretation as this is the limit of the term $e \log e$ as $e$ tends to $0$ from above.

\subsection{Triangular Distance}

 This distance is applicable only to $\ell_1$-normalised positive vectors. Its main value is as a (much cheaper and very accurate in high dimensions) estimator for Jensen-Shannon distance \cite{four_metrics}.
 
 \begin{equation}
D_\textit{tri}(\bb v,\bb w) = \sqrt{ \tfrac{1}{2} \sum_{i=1}^n {\frac{(v_i - w_i)^2}{v_i + w_i}}}
\end{equation}

In sparse spaces the term $0 / 0$ may occur; this is taken as $0$, rather than undefined.

\subsection{Quadratic Form Distance}

The Quadratic Form Distance associated to a symmetric semi-definite positive matrix $M\in \mathbb{R}^{n\times n}$ is defined as

\begin{equation}
D_\textit{M}(\bb v,\bb w)= \sqrt{(\bb v-\bb w)^T M (\bb v-\bb w)}
\end{equation}

When the matrix $M$ is diagonal the corresponding distance is a weighted Euclidean distance.

Notable examples include Mahalanobis distance \cite{mahalanobis1936generalized}, and the Signature Quadratic Form distance \cite{sqfd}.

\section{Simplex Construction}\label{appendix_simplex_construction}

\begin{algorithm}[tbp]   
	{\KwIn{$n+1$ reference points $r_1,\dots,r_{n+1}\in (\mathcal{U},d)$}
		\KwOut{$n$-dimensional simplex in $\ell_2^n$ represented by the matrix $\Sigma\in \R{(n+1)\times n}$  }
		\BlankLine
		$\Sigma=  0\in \R{(n+1)\times n}$\;
		\If{$n=1$}{
			$\delta=d(r_1,r_2)$\;
			$\Sigma=\begin{bmatrix}
			0\\
			\delta
			\end{bmatrix}
			$\;
			\Return$\Sigma$;
		}
		$\Sigma_{Base}=$ nSimplexBuild($r_1,\dots,r_{n}$)\;
		$Distances=  \bb 0\in \R{n}$\;
		\textbf{for} $1\leq i\leq n$ set $Distances[i]=d(r_i,r_{n+1})$\;
		$newApex=$ ApexAddition($\Sigma_{Base},Distances$)\;
		\textbf{for} $1\leq i\leq n$ and  $1\leq j\leq i-1$ set $\Sigma[i][j]$ to $\Sigma_{Base}[i][j]$\;
		\textbf{for} $1\leq j\leq n$ set $\Sigma[n+1][j]$ to $newApex[j]$\;
		\Return $\Sigma$;
		\caption{nSimplexBuild}\label{alg:nsimplex}}
\end{algorithm}
\begin{algorithm}[tbp]
	\KwIn{A $(n-1)$-dimensional base simplex and the distances between a new (unknown) apex point and the vertices of the base simplex:
		\begin{align*}
		&\Sigma_{\text{Base}} = 
		\begin{bmatrix}
		0			&				&			&			\\
		v_{2,1}	&	0			&			\multicolumn{2}{c}{\text{\huge0}}		\\
		v_{3,1}	& v_{3,2}		&	\ddots		&				 \\
		\colon	& 				&\ddots		&  0			 	\\
		v_{n,1}	&				&\cdots		& v_{n,n-1}
		\end{bmatrix}\in \R{n\times n-1}
		  \\
		  &\\
		& Distances =
		\begin{bmatrix}
		\delta_1 & \cdots & \delta_n
		\end{bmatrix}\in \R{n}
		\end{align*}
	}
	\KwOut{The cartesian coordinates of the new apex point
	}
	\BlankLine
	$Output =
	\begin{bmatrix}
	\delta_1& 0 &\cdots & 0 
	\end{bmatrix} \in \R{n}
	$\; 
	\For{$i=2$ \KwTo $n$}{
		$l = \ell_2(\Sigma_{Base}[i],Output)$\;
		$\delta = Distances[i]$\;
		$x = \Sigma_{Base}[i][i - 1]$\;
		$y = Output[ i - 1]$\;
		$Output[i-1] = y - (\delta^2 - l^2)/2x$\;
		$Output[i] = +\sqrt{y^2 - (Output[i-1])^2}$\;
	}
	\Return $Output$
	\caption{ApexAddition}\label{alg:apex}
\end{algorithm}
This section gives an inductive
algorithm (Algorithm \ref{alg:nsimplex}) to construct a simplex in $n$ dimensions based only on the distances measured among $n+1$ points.

For the base case of a one-dimensional simplex (i.e. two points with a single distance $\delta$) the construction is simply
\begin{equation}
\Sigma=\begin{bmatrix}
0\\
\delta
\end{bmatrix}
\end{equation}

 For an $n$-dimensional simplex, where $n \ge 2$, the distances among $n+1$ points are given. In this case, an $(n-1)$-dimensional simplex is first constructed using the first $n$ points. This simplex is used as a simplex base to which a new apex, the ${(n+1)}^{th}$ point, is added by the following \emph{ApexAddition} algorithm (Algorithm \ref{alg:apex}).
 
  For an arbitrary set of objects $s_i \in \mathcal{U}$, the apex $\sigma(s_i)$ can be pre-calculated. 
 When a query is performed,  only $n$ distances in the metric space require to be calculated to discover the new apex $\sigma(q)$  in $\ell_2^n$.  
 
 In essence, the \emph{ApexAddition} algorithm is derived from exactly the same intuition as the lower-bound property explained earlier. Proofs of correctness for both the construction and the lower-bound property are included 
 hereafter for the interested reader.

\section{Proof of correctess, \emph{ApexAddition} and \nsimp \emph{lwb}}\label{appendix:ProofCorrectness}

\begin{lemma}[Correctness of the ApexAddition algorithm]
	Let $ \Sigma_{\text{Base}}\in \R{n\times n-1}$ representing a $(n-1)$-dimensional simplex of vertices $\Sigma_{\text{Base}}[i]\in \ell_2^{n-1}$, with $\Sigma_{\text{Base}}[i][j]=0$ for all $j\geq i$ and $\Sigma_{\text{Base}}[n][n-1]\geq0$. Let $\bb v_i$ the corresponding vertices in $\ell_2^n$ (obtained from $\Sigma_{\text{Base}}[i]$ by adding a zero to the end of the vector) and let $\delta_i$ the distance between an unknown apex point and the vertex $\bb v_i$.
	Let $\bb o=\begin{bmatrix} o_1& \dots & o_n\end{bmatrix}$ the output  of the \emph{ApexAddition} Algorithm. Then $\bb o$ is a feasible apex, i.e. it is a point in $\R{n}$ satisfying $\ell_2(\bb o,\bb v_i)=\delta_i$ for all $1\leq i\leq n$. The last component $o_n$ is non-negative and represents the \emph{altitude} of $\bb o$ with respect to a base face  $\Sigma_{\text{Base}}$.
\end{lemma}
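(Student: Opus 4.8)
The plan is to run an induction on the loop index $i$ whose invariant is phrased not in terms of the output distances directly, but in terms of the coordinates of the apex that is hypothesised to exist. I embed the base vertices into $\ell_2^n$ by appending a trailing zero, writing $\bb v_1, \dots, \bb v_n$, so that $\bb v_1$ is the origin and each $\bb v_k$ is supported on coordinates $1, \dots, k-1$ only; this is exactly the lower-triangular shape of $\Sigma_{\text{Base}}$. Let $\bb p = (p_1, \dots, p_n)\in\ell_2^n$ be a genuine apex, which exists by hypothesis and satisfies $\ell_2(\bb p, \bb v_k) = \delta_k$ for all $k$; in particular $\ell_2(\bb p, \bb v_1)^2 = \sum_{j=1}^n p_j^2 = \delta_1^2$. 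The invariant I would carry is that, immediately after $Output[i]$ is assigned, the current vector $\bb o^{(i)} = (o_1, \dots, o_i, 0, \dots, 0)$ satisfies $o_j = p_j$ for $j \le i-1$ and $o_i = \sqrt{\sum_{j=i}^n p_j^2}$.

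Granting this invariant, both conclusions of the lemma are immediate. The radicand defining $o_i$ equals $\sum_{j\ge i} p_j^2 \ge 0$, so each $o_i$ is real and non-negative; at $i = n$ this gives $o_n = |p_n| \ge 0$, which is precisely the altitude of the apex above the hyperplane $\{x_n = 0\}$ spanned by the base face. The distance constraints drop out by direct comparison with $\bb p$: for any $k \le i$, the vectors $\bb o^{(i)}$ and $\bb p$ agree in coordinates $1, \dots, i-1$ while $\bb v_k$ vanishes in coordinates $\ge i$, so $\ell_2(\bb o^{(i)}, \bb v_k)^2 = \sum_{j=1}^{i-1}(p_j - v_{k,j})^2 + \sum_{j\ge i} p_j^2 = \ell_2(\bb p, \bb v_k)^2 = \delta_k^2$. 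Taking $i = n$ yields $\ell_2(\bb o, \bb v_k) = \delta_k$ for every $k$.

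The heart of the argument, and the step I expect to be the main obstacle, is the inductive step establishing $o_{i-1}^{\text{new}} = p_{i-1}$ for the coordinate that is finalised at iteration $i$. Since the only thing that could fail is taking a square root of a negative number, I would deliberately prove this coordinate identity \emph{before} invoking the square root, so that realness is never checked ``on faith''. Using the inductive hypothesis $\bb o^{(i-1)} = (p_1, \dots, p_{i-2}, y, 0, \dots)$ with $y = o_{i-1}^{\text{old}}$ and $y^2 = \sum_{j\ge i-1} p_j^2$, I would expand $l^2 = \ell_2(\bb v_i, \bb o^{(i-1)})^2$ and $\delta_i^2 = \ell_2(\bb p, \bb v_i)^2$ and subtract; the cross terms and the squared base coordinates cancel to leave the clean identity $\delta_i^2 - l^2 = 2x\,(y - p_{i-1})$, where $x = v_{i,i-1} = \Sigma_{\text{Base}}[i][i-1]$. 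Feeding this into the update rule $Output[i-1] = y - (\delta_i^2 - l^2)/(2x)$ collapses it to exactly $p_{i-1}$, which needs $x \neq 0$ — guaranteed because a genuine $(n-1)$-dimensional base simplex has strictly positive altitudes $v_{i,i-1}$. Finally $Output[i] = \sqrt{y^2 - (o_{i-1}^{\text{new}})^2} = \sqrt{\sum_{j\ge i-1} p_j^2 - p_{i-1}^2} = \sqrt{\sum_{j\ge i} p_j^2}$, which closes the induction.

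The base case is the initialisation $\bb o^{(1)} = (\delta_1, 0, \dots, 0)$, where $o_1 = \delta_1 = \sqrt{\sum_{j\ge 1} p_j^2}$ matches the invariant at $i = 1$. The genuinely delicate point of the whole proof is conceptual rather than computational: realness of the construction cannot be certified by inspecting one step in isolation, so it is essential to anchor the algorithm's intermediate coordinates to those of the pre-existing apex $\bb p$ (whose existence is exactly what the $n$-point property of the Hilbert domain supplies). Once that anchoring is in place, non-negativity of every radicand is automatic rather than something to be verified separately, and the altitude interpretation of $o_n$ follows from the base face lying in $\{x_n = 0\}$.
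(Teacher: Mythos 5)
Your proof is correct, but it takes a genuinely different route from the paper's. The paper never invokes an ambient apex point: it treats the algorithm's update rules as algebraic identities, records the recurrences $o_{i-1} = o_{i-1}^{(i-1)} - \bigl(\delta_i^2 - \sum_{j=1}^{i-2}(v_{i,j}-o_j)^2 - (v_{i,i-1}-o_{i-1}^{(i-1)})^2\bigr)/(2v_{i,i-1})$ and $(o_{i-1})^2 = (o_{i-1}^{(i-1)})^2 - (o_i^{(i)})^2$, telescopes the latter to $\sum_{j\ge i} o_j^2 = (o_i^{(i)})^2$, and then substitutes into each distance equation $\sum_{j<i}(v_{i,j}-o_j)^2 + \sum_{j\ge i} o_j^2 = \delta_i^2$ to verify it directly. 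You instead anchor the induction to a pre-existing apex $\mathbf{p}\in\ell_2^n$ (supplied by the lemma's hypothesis that the $\delta_i$ are realised by an actual point) and prove the stronger invariant that the algorithm reconstructs $\mathbf{p}$ coordinatewise, $o_j=p_j$ for $j\le i-1$ and $o_i=\bigl(\sum_{j\ge i}p_j^2\bigr)^{1/2}$, via the identity $\delta_i^2 - l^2 = 2v_{i,i-1}(y-p_{i-1})$, which I have checked and which is exactly right. What your route buys is significant: the paper's verification implicitly assumes every radicand in the square-root steps is non-negative, i.e.\ that the algorithm is even well-defined over the reals, whereas in your version realness is automatic because each radicand equals $\sum_{j\ge i}p_j^2$; you also get for free that the output is the unique feasible apex with non-negative altitude, namely $(p_1,\dots,p_{n-1},|p_n|)$. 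Both arguments share the non-degeneracy requirement $v_{i,i-1}>0$ for the division by $2x$, which you flag explicitly and the paper uses silently. The one step worth spelling out in your write-up is the "exists by hypothesis" claim: strictly, the hypothesis gives an apex realising the distances somewhere, and one should note that after applying an isometry of $\mathbb{R}^n$ matching that embedding's base vertices to the fixed coordinates of $\Sigma_{\text{Base}}$ (possible since the base vertices affinely span the hyperplane $\{x_n=0\}$), the apex may be taken as a concrete $\mathbf{p}\in\ell_2^n$ — this is immediate, but it is precisely where the $n$-point property enters, so it deserves a sentence rather than a parenthesis.
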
 
\begin{proof}
	It is sufficient to prove that the output $\bb o=\begin{bmatrix} o_1& \dots & o_n\end{bmatrix}$ of the Algorithm \ref{alg:apex} has distance $\delta_i$ from the vertex $\bb v_i$, i.e. satisfies the following equations
	\begin{equation}\label{eq:apexSystem}
	\begin{cases}
	o_1^2+\dots +  o_n^2=\delta_{1}^2 & \qquad(\ref{eq:apexSystem}.1)\\
	\qquad \colon\\
	\sum_{j=1}^{i-1} (v_{i,j}- o_j)^2+ \sum_{j=i}^n  o_j^2=\delta_i^2 & \qquad(\ref{eq:apexSystem}.i)\\
	\qquad \colon\\
	\sum_{j=1}^{n-1} (v_{n,j}-o_j)^2+  o_n^2=\delta_n^2 &\qquad (\ref{eq:apexSystem}.n)\\
	\end{cases}
	\end{equation}
	
	Note that the $i$-th component of the output $\bb o$ is updated only at the iteration $i$ and $i+1$ of the \emph{ApexAddition} Algorithm. 
	So, if we denote with $\bb o^{(i)}$ the output at the end of iteration $i$ we have:
	\begin{align}
	& \bb o^{(1)}=\begin{bmatrix} \delta_1& 0&\dots & 0\end{bmatrix} \label{eq:o1}\\
	&  o_i=  o^{(h)}_i, \quad o_n=o^{(n)}_n,\quad o^{(i)}_h=0&  1\leq i < h\leq n   \label{eq:p0} \\
	&o_{i-1}=o_{i-1}^{(i-1)}-\frac{\delta_{i}^2-\sum_{j=1}^{i-2}(v_{i,j}-o_{j})^2-(v_{i,i-1}-o_{i-1}^{(i-1)})^2}{2v_{i,i-1}} &  2\leq i \leq  n \label{eq:p2b}\\
	&( o_{i-1})^2=( o_{i-1}^{(i-1)})^2- (o_{i}^{(i)})^2 &  1\leq i \leq  n-1 \label{eq:p3b}
	\end{align}
	By combining  Eq. \eqref{eq:p0} and \eqref{eq:p3b} we obtain $\sum_{j=i}^{n} o_j^2= (o_i^{(i)})^2$ for all $1\leq i \leq  n-2$,
	and so Eq. (\ref{eq:apexSystem}.1) clearly holds (case $i=1$). Moreover, 
	it follows that $o$ satisfies Eq. (\ref{eq:apexSystem}.$i$) for all $i=2,\dots,n$:
	\begin{align*}
	\sum_{j=1}^{i-1} (v_{i ,j}-o_j)^2+ \sum_{j=i }^n o_j^2& 
	=v_{i ,i-1}^2 -2v_{i ,i-1}\,o_{i-1}+\sum_{j=1}^{i-2} (v_{i ,i-1}-o_j)^2+(o_{i-1}^{(i-1)})^2 \stackrel{\eqref{eq:p2b}}{=} \delta_{i }^2
	\end{align*}
	\qed
\end{proof}
\begin{lemma}[n-Simplex Distance Constraint]
	Let $(\mathcal{U},d)$ a space $(n+2)$-embeddable in $\ell_2^{n+1}$. Let $r_1,\dots,r_n \in \mathcal{\mathcal{U}}$ and, for any $m\leq n$, let $\sigma_{m}$ the $(m-1)$-dimensional simplex generated from $r_1,\dots,r_{m}$ by using the \emph{nSimplexBuild} Algorithm.
	For any $x\in \mathcal{U}$, let $\bb{x}^{(m)}\in \ell_2^{m}$ the apex point with distance $d(x,r_1), \dots,$ $ d(x,r_m)$ from the vertices of $\sigma_{m}$, computed using the \emph{ApexAddition} Algorithm. 
	Then for all $q,s\in \mathcal{U}$, 
	\begin{enumerate}
		\item 
		$\ell_2^{m-1} (\bb s^{(m-1)},\bb q^{(m-1)}) \leq \ell_2^{m}(\bb s^{(m)},\bb q^{(m)}) \quad\quad\quad \quad \mbox{for}\quad 2\leq m \leq n \label{eq:simplexlw}$ \label{cond1}
		\medskip
		\item 
		$g (\bb s^{(m-1)},\bb q^{(m-1)}) \geq g(\bb s^{(m)},\bb q^{(m)})  \,\quad\quad\quad\quad\quad\quad \mbox{for}\quad 2\leq m \leq n \label{eq:simplexub}$ \label{condub}
		\medskip
		\item 
		$\ell_2^n(\bb s^{(n)},\bb q^{(n)}) \leq d(s,q) \leq g(\bb s^{(n)},\bb q^{(n)})$
	\end{enumerate}
	where, for any $k\in \mathbb{N}$, $g:\ell_2^{k}\to \ell_2^{k}$ is defined as $g(\bb x,\bb y)=\sqrt{\sum_{i=1}^{k-1} (x_i-y_i)^2+(x_k+y_k)^2}$.
\end{lemma}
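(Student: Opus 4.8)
The plan is to reduce all three statements to elementary facts about a pair of vectors in the plane, after first pinning down precisely how the apex coordinates at consecutive levels are related. The engine of the whole argument is one structural observation about the \emph{ApexAddition} algorithm: when the base grows from $\sigma_{m-1}$ to $\sigma_m$, its first $m-1$ vertices are unchanged (they are the vertices of $\sigma_{m-1}$ with a trailing zero appended, which does not affect any $\ell_2$ computation), and the distances $d(x,r_1),\dots,d(x,r_{m-1})$ fed to the algorithm are identical. Hence iterations $2,\dots,m-1$ of the loop coincide in the two runs, so $\bb{x}^{(m)}$ and $\bb{x}^{(m-1)}$ agree in their first $m-2$ coordinates.

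First I would make this recursion precise. Writing $\alpha_x$ for the level-$(m-1)$ altitude $x^{(m-1)}_{m-1}$, the single extra iteration $i=m$ executed when forming $\bb{x}^{(m)}$ updates that coordinate and produces the new altitude, and relation \eqref{eq:p3b} then yields exactly
\[
\alpha_x^{2} \;=\; \big(x^{(m)}_{m-1}\big)^{2} + \big(x^{(m)}_{m}\big)^{2}.
\]
Feasibility of every apex (all altitudes real and non-negative) is inherited from the embeddability hypothesis through the Correctness Lemma, since any $m+1$ of the points embed isometrically in $\ell_2^m$. With the nesting in hand, claims \ref{cond1} and \ref{condub} become planar statements about $\bb{u}=\big(s^{(m)}_{m-1},s^{(m)}_{m}\big)$ and $\bb{w}=\big(q^{(m)}_{m-1},q^{(m)}_{m}\big)$, whose norms are the level-$(m-1)$ altitudes $\alpha_s,\alpha_q$. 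Since the first $m-2$ coordinates cancel, claim \ref{cond1} reduces to $(\|\bb{u}\|-\|\bb{w}\|)^2\le\|\bb{u}-\bb{w}\|^2$, the reverse triangle inequality, while claim \ref{condub} reduces to $\|\bb{u}-\bb{w}'\|^2\le(\|\bb{u}\|+\|\bb{w}\|)^2$ with $\bb{w}'=\big(q^{(m)}_{m-1},-q^{(m)}_{m}\big)$, the ordinary triangle inequality. Each collapses to a single line.

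For claim~3 I would invoke the $(n+2)$-embeddability directly: embed $r_1,\dots,r_n,s,q$ isometrically in $\ell_2^{n+1}$ and align the affine hull $H$ of the $r_i$-simplex with the first $n-1$ coordinates. Its orthogonal complement is then $2$-dimensional, and the \emph{ApexAddition} output shows that $s$ (resp.\ $q$) projects onto $H$ at $(s_1,\dots,s_{n-1})$ with orthogonal part of norm equal to the altitude $s^{(n)}_{n}$ (resp.\ $q^{(n)}_{n}$). Decomposing $d(s,q)^2$ into its $H$-component, which is $\sum_{i=1}^{n-1}(s_i-q_i)^2$, and its orthogonal component $\|\bb{a}-\bb{b}\|^2$ for plane vectors with $\|\bb{a}\|=s^{(n)}_{n}$, $\|\bb{b}\|=q^{(n)}_{n}$, the same two triangle inequalities give $\big(s^{(n)}_{n}-q^{(n)}_{n}\big)^2\le\|\bb{a}-\bb{b}\|^2\le\big(s^{(n)}_{n}+q^{(n)}_{n}\big)^2$, which is exactly the required sandwich. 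This formalises the ``rotation through $\theta$'' picture of Section~\ref{sec_zen_function}.

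I expect the main obstacle to be the bookkeeping of Step~1 rather than any inequality: one must verify carefully which coordinate is written at which iteration of \emph{ApexAddition}, confirm that appending a zero to the base vertices leaves the relevant distance computations untouched, and check the altitude-splitting identity with the correct indices. Once that nesting relation is secured, all three parts are short, with parts \ref{cond1} and \ref{condub} being pure planar trigonometry and part~3 the orthogonal-decomposition argument above.
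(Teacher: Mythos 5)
Your proof is correct and follows the paper's skeleton: the same three algorithmic facts (coordinate nesting $x_i^{(m)}=x_i^{(m-1)}$ for $i\le m-2$, non-negative altitudes, and the splitting $(x_{m-1}^{(m)})^2+(x_m^{(m)})^2=(x_{m-1}^{(m-1)})^2$) drive parts 1 and 2, and part 3 again rests on placing all of $r_1,\dots,r_n,s,q$ isometrically in $\ell_2^{n+1}$ and comparing with the level-$n$ apexes. For parts 1 and 2, your packaging as the reverse and ordinary planar triangle inequalities for $\bb{u}=(s^{(m)}_{m-1},s^{(m)}_m)$ and $\bb{w}=(q^{(m)}_{m-1},q^{(m)}_m)$ (with the reflection $\bb{w}'$ for $g$) is in substance identical to the paper's computation, which expands the squares and invokes two-dimensional Cauchy--Schwarz; yours is simply the cleaner statement of the same inequality. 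The one substantive divergence is part 3: the paper sidesteps your ``align $H$ and identify the projections'' step by constructing the embedding explicitly --- two further calls to \emph{ApexAddition} append $s$ and then $q$ to $\sigma_n$, producing $\Sigma_{n+2}$ whose last two rows are $(s^{(n)}_1,\dots,s^{(n)}_n,0)$ and $(q^{(n)}_1,\dots,q^{(n)}_{n-1},q^{(n+1)}_n,q^{(n+1)}_{n+1})$, so the matching of the first $n-1$ coordinates and of the altitude norm is automatic, and $d(s,q)$ is read off directly. Your abstract-embedding route is valid but silently uses a uniqueness fact you should make explicit: the distances to the $n$ affinely independent vertices of a non-degenerate base simplex determine the component of the embedded point in the direction space of $H$ uniquely (subtracting pairs of squared-distance equations yields a full-rank linear system), and the \emph{ApexAddition} apex satisfies the same equations by the Correctness Lemma; only then do the $H$-projections of the embedded $s,q$ coincide with the apex coordinates and the orthogonal norms with the altitudes. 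This is a one-line repair, not a flaw in the approach; note also that both routes tacitly assume the base simplex is non-degenerate (else \emph{ApexAddition} divides by zero), an assumption the paper leaves implicit as well.
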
 
\begin{proof}
	By construction, for any $m\leq n$ we have
	\begin{align}
	& x_i^{(m)}= x_i^{(m-1)} & i=1,\dots, m-2 \label{eq:c1}\\ 
	& x_i^{(i)}\geq 0 & i=1,\dots, m \label{eq:c3}\\
	&(x_{m-1}^{(m)})^2+( x_{m}^{(m)})^2 ={(x_{m-1}^{(m-1)})^2} \label{eq:c2}
	\end{align}
	Condition \ref{cond1} directly follows from Eq. \eqref{eq:c1}-\eqref{eq:c2}:
	\begin{align*}
	\ell_2^{m} (\bb s^{(m)},  \bb q^{(m)})^2& 
	=\ell_2^{m-1} (\bb s^{(m-1)},\bb q^{(m-1)})^2- (s^{(m-1)}_{m-1}-q^{(m-1)}_{m-1})^2+ \sum_{i=m-1}^{m}(s^{(m)}_i-q^{(m)}_i)^2\\
	&=\ell_2^{m-1} (\bb s^{(m-1)},\bb q^{(m-1)})^2 + 2\Big[-s^{(m)}_{m-1}q^{(m)}_{m-1}-s^{(m)}_{m}q^{(m)}_{m}\\
	&\qquad \qquad\qquad\qquad\qquad\quad+\sqrt{(s_{m-1}^{(m)})^2+(s_{m}^{(m)})^2 } \sqrt{(q_{m-1}^{(m)})^2+(q_{m}^{(m)})^2 }\Big]\\
	&\geq \ell_2^{m-1} (\bb s^{(m-1)},\bb q^{(m-1)})^2
	\end{align*}
	where the last passage follows from the Cauchy–Schwarz inequality%
\footnote{Cauchy–Schwarz inequality in two dimension is: $(a_1b_1+a_2b_2)^2\leq (a_1^2+a_2^2)(b_1^2+b_2^2)$ $\forall a_1,b_1,a_2,b_2 \in \mathbb{R}$, which implies $$(a_1b_1+a_2b_2)\leq \sqrt{(a_1^2+a_2^2)}\sqrt{(b_1^2+b_2^2)} \quad \forall a_1,b_1,a_2,b_2 \in \mathbb{R}$$}.
	
	Similarly, Condition \ref{condub} also holds: 
	\begin{align*}
	g (\bb s^{(m)}, \bb q^{(m)})^2&=g(\bb s^{(m-1)},\bb q^{(m-1)})^2 + 2\Big[-s^{(m)}_{m-1}q^{(m)}_{m-1}+s^{(m)}_{m}q^{(m)}_{m}\\
	&\qquad \qquad\qquad\qquad\qquad\quad-\sqrt{(s_{m-1}^{(m)})^2+(s_{m}^{(m)})^2 } \sqrt{(q_{m-1}^{(m)})^2+(q_{m}^{(m)})^2 }\Big]\\
	&\leq g(\bb s^{(m-1)},\bb q^{(m-1)})^2.
	\end{align*}

	Now we prove that $\ell_2^n(\bb s^{(n)},\bb q^{(n)})$ and $g(\bb s^{(n)},\bb q^{(n)})$ are, respectively, a lower bound and an upper bound for the actual distance $d(s,q)$.
	The main idea is using the simplex  $\sigma_{n}$ spanned by $r_1,\dots, r_n$ as a base face to  build the simplex  $\sigma_{n+1}$ spanned by $r_1,\dots, r_n, s$ and then use the latter as base face to build the simplex $\sigma_{n+2}$ spanned by $r_1,\dots, r_n, s,q$. In this way, we have an isometric embedding of $r_1,\dots, r_n, s,q $ into $\ell_2^{n+1}$ that is the function that maps  $r_1,\dots, r_n, s,q$  into the vertices  of $\sigma_{n+2}$. 
	So, given the base simplex $\sigma_{n}$ (represented by the matrix $\Sigma_{n}$), and the apex $\bb s^{(n)}, \bb q^{(n)}\in \ell_2^n$ we have that the simplex  $\sigma_{n+2}$  is represented by
	\begin{equation}
	\Sigma_{n+2} = 
	\left[\begin{array}{ccc|cc}
	\multicolumn{3}{c|}{\multirow{4}{*}{\large{$\Sigma_{n}$}}}& \multicolumn{2}{c}{\multirow{4}{*}{\large{$0$}}}\\
	\multicolumn{3}{c|}{}\\
	\multicolumn{3}{c|}{}   \\
	\hline
	s^{(n)}_1		&	\cdots		&  s^{(n)}_{n-1}&   s^{(n)}_n & 0	\\
	q^{(n)}_1	&	\cdots		& q^{(n)}_{n-1}&	q^{(n+1)}_{n}	   	& q^{(n+1)}_{n+1}	
	\end{array}\right]\in \R{n+2\times n+1} 
	\end{equation}
	where, by construction, $(q^{(n+1)}_{n+1})^2={(q^{(n)}_{n})^2-(q^{(n+1)}_{n})^2}$, $s^{(n)}_n,q^{(n+1)}_{n+1} \geq 0$, and $d(q,s)$ equals the Euclidean distance between the two last rows of $\Sigma_{n+2}$.
	
	It follows that
	\begin{align} \label{eq:truedist}
	d(q,s)^2
	&= \sum_{i=1}^{n-1}(s^{(n)}_i-q^{(n)}_i)^2+(s^{(n)}_n)^2+(q^{(n)}_n)^2-2s^{(n)}_nq^{(n+1)}_n; 
	\end{align}
	and, since $q^{(n)}_n\geq |q^{(n+1)}_n|$, we have
	$$
	d(q,s)^2=\ell_2^n(\bb s^{(n)},\bb q^{(n)})^2 +2s^{(n)}_n (q^{(n)}_n-q^{(n+1)}_n) \geq \ell_2^n(\bb s^{(n)},\bb q^{(n)})^2,
	$$
	and
	$$
	d(q,s)^2=g(\bb s^{(n)},\bb q^{(n)})^2 -2s^{(n)}_n (q^{(n)}_n+q^{(n+1)}_n)\leq g(\bb s^{(n)},\bb q^{(n)})^2
	$$
	
	Finally, we observe that since $(q^{(n+1)}_{n+1})^2+(q^{(n+1)}_{n})^2={(q^{(n)}_{n})^2}$, $q^{(n+1)}_{n+1}\geq 0$, and $q^{(n)}_{n}\geq 0$, there exists an angle $\theta \in [0,\pi]$ such that
	\begin{equation}
	\begin{cases}
	   q^{(n+1)}_{n}&= q^{(n)}_{n} \cos \theta \\
	    q^{(n+1)}_{n+1} &=q^{(n)}_{n} \sin \theta 
	\end{cases}
	\end{equation}
	Therefore, Eq. \ref{eq:truedist} can be rewritten as
	\begin{equation*}
	  d(q,s)^2= \sum_{i=1}^{n-1}(s^{(n)}_i-q^{(n)}_i)^2+(s^{(n)}_n)^2+(q^{(n)}_n)^2-2s^{(n)}_nq^{(n)}_{n} \cos \theta;
	\end{equation*}
	In other words, if $\sigma: D\to \mathbb{R}^n$ is the \textit{nSimplex} transform defined by a set of $n$ reference points then for any $s, q \in D$ given the transformed points $\bb x=\sigma(s)$ an $\bb y=\sigma(q)$ it holds
	\begin{equation}
  d(q,s)= \sqrt{ \sum_{i=1}^{n-1}(x_i-y_i)^2+x_n^2+y_n^2-2x_ny_{n} \cos \theta}
	\end{equation}
\end{proof}

\section{Data sets used in experiments}
\label{appendix_data_sets}

While all the software used in experiments described is available from \url{https://github.com/richardconnor/dr-matlab-code}, the data sets are typically too large to provide conveniently  and we therefore provide brief descriptions of their provenance.

\begin{description}
\item[Generated uniform data] 
All generated data is created using the MatLab \emph{rand} function from the Statistics and Machine Learning toolbox. For example a set of one thousand objects of one hundred dimensions is created by the single line
\begin{verbatim}
data = rand(1000,100);
\end{verbatim}

\item[ Twitter GloVe] 
The GloVe data used derives from \url{https://nlp.stanford.edu/projects/glove_} where the data and instructions for downloading it can be found. We used the 200-dimensional vectors.

\item[ MF1M]
The images from which this data derives are available from \url{https://press.liacs.nl/mirflickr/mirdownload.html}. We use the one million image set. For input to AlexNet, whole images were reduced to 227 x 227 using ImageMagick.

The 4096-dimensional vectors were obtained by applying the MatLab release of AlexNet, which is also available from other domains in other languages. In MatLab, the fc6 layer used is simply extracted by code such as
\begin{verbatim}
 fc6 = activations(net,thisImage,"fc6","OutputAs","rows");
\end{verbatim}

\item[ ANN SIFT]
The ANN SIFT data, along with code to extract it, is available from \url{http://corpus-texmex.irisa.fr/}.

\item[ GIST]
Again the Mir Flickr one million image collection was used to produce the GIST data.
Although not fully documented, GIST representations of the images are also available from \url{https://press.liacs.nl/mirflickr/mirflickr1m.v3b/}; alternatively MatLab code to create GIST descriptors is available at \url{https://people.csail.mit.edu/torralba/code/spatialenvelope/}

\end{description}

\section{Measuring the quality of dimensionality reduction}
\label{sec_intro_subsec_quality}

\subsection{Global Structure}
\label{sec_global_stress}
\begin{description}

\item[Shepard Diagrams] give a  visual overview of the quality of a transform. For a given set of data all distances $\delta_{ij}$ are plotted against the reduced dimensional distances $\zeta_{ij}$. In most general terms, the closer the plot lies to the $y=x$ diagonal, the better the reduction.

\item[Kruskal Stress]
To quantify the visual effect, Shepard diagrams are usually overlaid with an isotonic regression function calculated from the original and reduced spaces, as used to calculate Kruskal's stress function.
The stress function  is given by:
\begin{equation}
S_K = \sqrt{\frac{\sum_{i<j}(\zeta_{ij} - d^*_{ij})^2}{\sum_{i<j} \zeta_{ij}^2}}   
\end{equation}
where $d^*$ is the value given by a function implied from the least-squares isotonic regression calculated from the finite data presented in the chart.

The point of fitting an isotonic regression is that the stress function assigns a value according to the monotonicity of the reduction transform, rather than the absolute values produced. PCA, for example, is a reduction mapping: in all cases, $\zeta_{ij} \le \delta_{ij}$. However if the function $\T : \delta_{ij} \rightarrow \zeta_{ij}$ is perfectly monotonic, the stress will be zero despite the  reduction in individual values, even if this is non-linear.
%

\end{description}

\subsection{Distance Preservation}
\label{sec_intro_quality_subsect_dist_pres}

\begin{description}
\item[Sammon Stress]  derives from  Sammon Mapping, a non-linear dimensionality reduction technique similar to MDS which minimises the stress function:
\begin{equation}
S_S = \frac{1}{\sum_{i<j}{\delta_{ij}}} \sum_{i<j}{\frac{(\delta_{ij}-\zeta_{ij})^2}{\delta_{ij}}}    
\end{equation}

Unlike Kruskal's stress function, Sammon stress is affected by the absolute differences between $\zeta_{ij}$ and $\delta_{ij}$ rather than their isotonic relationship, and so gives a further useful perspective on the quality of a reduction transform.

\item[Quadratic Loss] is a purely distance-based technique, the quadratic function used to punish the production of outliers:
\begin{equation}
S_Q = \sum_{i<j}{(\delta_{ij}-\zeta_{ij})^2}
\end{equation}

In fact we view this function as somewhat of a blunt instrument, as it punishes even regular deviations from original distances. For example, PCA is a contraction function, while RP is not; RP often performs better for quadratic loss even when the quality of the reduction is clearly, overall, lower. Even if the absolute deviation from original distance is important, it may often be possible to apply a scaling function to the reduced space in order to minimise this.

As discussed in Appendix \ref{appendix_quality_profiles}, we require all quality measures to be bounded in $[0,1]$ to allow visual comparisons over different reduction dimensions. There is no natural upper-bound on this measure, so we simply convert an outcome of $q$ to $\frac{q_{\text{max}} - q}{q_{\text{max}}}$ to produce the desired range, where $q_{\text{max}}$ is the greatest value obtained in the context of the visualisation.
\end{description}

\subsection{Topology Preservation}
\begin{description}
\item[Spearman Rho] 
 measures the preservation of rank ordering among pairwise distances measured between corresponding objects in the domain and range of the reduction transform. This is a useful measure for many applications, such as nearest-neighbour analysis, where the absolute distances among values are of no interest other than for the ordering which they induce over other elements of the set.

Pairwise distances from a sample set of $n$ objects are used to construct an ordering $z$ of size $T = {n \choose 2}$ of pairs $\delta_{ij}$. A ranking $z'$ is then created according to the relative distances of the same pairs of objects after the transform is applied. The Spearman Rho function is then given as
\begin{equation}
S_R = 1 - \frac
{6 \sum_{i=1}^T (z(i) - \hat{z}(i))^2}
{T^3 - T}    
\end{equation}
where the adjusting factors combine to give an output in the range $[-1,1]$ where 1 implies a perfect preservation of distance ordering and $-1$ implies the  inverse correlation.

There are other forms of this formula, but we choose this one to faithfully follow the exact methodology of \cite{GRACIA20141}. Although all of our other quality measures are normalised into $[0,1]$, the fact that the outcome of zero implies an effectively random ordering suits our purpose in this respect.
\item[kNN Query Recall] When the purpose of dimension reduction is to speed up similarity search, the most important outcome is the nearest-neighbour topology. To an extent this is tested by both Kruskal stress and Spearman Rho quality measures, but with the crucial difference that in the context of query recall it is only the smallest distances, relative to a query, that are relevant. Thus, a transform which  preserves very small distances well, but is less good  over larger distances, will be preferable to one which preserves   distances overall, although the latter may score better in these quality measures.

To measure this quality, it is necessary to construct a nearest-neighbour ground truth over a representative sample of queries for a large data set, and then compare the nearest neighbours of those queries in the reduced-dimension space. 
One problem  is that the pattern of nearest neighbours depends on specific details of the space being considered, as well as more general properties of the reduction. The larger the space, the smaller the nearest-neighbour distances will be, and the higher the probability of having very close matches which are not representative of the general space.

To overcome some of these issues we measure recall using the following assumptions:
\begin{enumerate}
\item only a small percentage of the true nearest neighbours are of any significance
\item the nearer true neighbours are considerably more significant than the further neighbours
\item preservation of order in the results is also important
\end{enumerate}
Our recall measurement therefore uses a discounted cumulative  gain (DCG) function over a relevance function based on nearest-neighbour rank.

The ranking function is constructed to give significantly higher importance to the closer neighbours by using the logistic function to give an inverse sigmoid function over rank. In our experiments we have collected 1,000 nearest neighbours
from a collection of one million data, and rank the relevance of each true nearest neighbour as
\begin{equation}
R_i = 1 - \frac{1}{\left(1 + e^{-\frac{i- 500}{100}}\right)}
\end{equation}
for the $i$th true nearest neighbour.

We then compare the 1,000 nearest neighbours returned by the DR function using the DCG variant defined in \cite{DCG}:
\begin{equation}
DCG_{DR} = \sum_i^{1000}\frac{2^{DR_{i}} - 1}{\log_2 i + 1}
\end{equation}
where $DR_i$ is $R_i$ applied to the position in the true nearest neighbours of the object found in the $i$th position of the nearest neighbours according to the DR transform.

Finally, this function produces an arbitrary maximum value of $66.0435$ when  lists of length 1,000  are in perfect correlation, and so the outcome is divided by this factor to give a normalised value in the range $[0,1]$, where 0 means there is no overlap between the lists and 1 means they are in perfect correlation.
\end{description}

\subsection{Quality profiles}
\label{appendix_quality_profiles}

As discussed in \cite{GRACIA20141}, it is instructive to consider the quality of transforms as a profile  over different reduction dimensions. This may be shown as a plot where one or more of the numeric quality functions is plotted against the dimension of the reduction, typically as this is reduced from the original dimensionality of the original domain down to 2. For most mechanisms and useful quality measures, this will  result in a monotonic decreasing plot, and will allow the selection of the most useful compromise in terms of quality loss for a given reduction dimension.

To allow presentation of all quality measures within the same bounds, Kruskal and Sammon and  stress measurements are subtracted from 1 to give a quality rather than a stress measure, and the results in the range $[0,1]$ are given. A negative value can arise from either Spearman Rho (which is bounded in $[-1,1]$ or Sammon  stress (which has no formal upper bound) but in reality a value of less than 0 for Spearman Rho, or greater than 1 for Sammon stress, effectively means that the transform has no practical value and a zero quality rating is reasonable. For quadratic loss there are no natural bounds, and this is handled as explained in Appendix \ref{sec_intro_quality_subsect_dist_pres}.

\end{document}